\newcolumntype{L}{>{\raggedright\arraybackslash}X}
\renewcommand{\arraystretch}{0.5}  
\newtheorem{assumption}{\normalfont\scshape Assumption}
\newenvironment{assumption*}[1]{%
  \manualtheoreminner
}{\endmanualtheoreminner}
\Crefname{assumption}{Assumption}{Assumptions}
\Crefname{assumption*}{Assumption}{Assumptions}
\newtheorem{theorem}{\normalfont\scshape Theorem}
\newtheorem{corollary}[theorem]{\normalfont\scshape Corollary}
\newtheorem{lemma}[theorem]{\normalfont\scshape  Lemma}
\newtheorem{proposition}[theorem]{\normalfont\scshape  Proposition}
\theoremstyle{definition}
\definecolor{graphBlue}{RGB}{0, 0, 255}
\definecolor{graphGreen}{RGB}{0,100,0} 
\definecolor{graphRed}{RGB}{220, 20, 60}
\definecolor{graphOrange}{RGB}{255, 165, 0}
\definecolor{cornflowerblue}{rgb}{0.39, 0.58, 0.93}
\definecolor{olive}{rgb}{0.5, 0.5, 0.0}
\newcommand{\E}[0]{\mathbb{E}}
\newcommand{\PP}[0]{\mathbb{P}}
\newcommand{\pa}[0]{\text{pa}}
\newcommand{\ch}[0]{\text{ch}}
\newcommand{\an}[0]{\text{an}}
\newcommand{\de}[0]{\text{de}}
\newcommand{\indep}{\perp \!\!\! \perp}
\newcommand{\ingraph}[1]{~\textnormal{\bf in}\,#1}
\newcommand{\textnot}{\textnormal{\bf not}~}
\newcommand{\underdist}[1]{~\textnormal{\bf under}\,#1}
\DeclareMathOperator{\gG}{\mathscr{G}}
\DeclareMathOperator{\sD}{\mathcal{D}}
\DeclareMathOperator{\sB}{\mathcal{B}}
\DeclareMathOperator{\sJ}{\mathcal{J}}
\DeclareMathOperator{\sK}{\mathcal{K}}
\DeclareMathOperator{\sL}{\mathcal{L}}
\providecommand{\keywords}[1]
{
  \textbf{\textit{Keywords---}} #1
}
\title{A Graphical Approach to State Variable Selection in Off-policy Learning}
\author[1,2]{Joakim Blach
  Andersen\footnote{jb2413@cam.ac.uk}}
\author[1]{Qingyuan Zhao\footnote{qyzhao@statslab.cam.ac.uk}}
\affil[1]{Statistical Laboratory, University of Cambridge, UK}
\affil[2]{A.P.\ Moller Maersk}
\date{}
\begin{document}

\maketitle

\begin{abstract}

  Sequential decision problems are widely studied across many areas
  of science. A key challenge when learning policies from historical
  data---a practice commonly referred to as off-policy learning---is
  how to ``identify'' the impact of a policy of interest when the
  observed data are not randomized.
  Off-policy learning has mainly been studied in two settings: dynamic
  treatment regimes (DTRs), where the focus is on controlling
  confounding in medical problems with short decision horizons, and
  offline reinforcement learning (RL), where the focus is on dimension
  reduction in closed systems 
  such as games. The gap between these two well studied settings has
  limited the wider application of off-policy learning to many
  real-world problems.
  Using the theory for causal inference based on acyclic directed
  mixed graph (ADMGs), we provide a set of graphical identification
  criteria in general decision processes that
  encompass both DTRs and MDPs. We discuss how our results relate to
  the often implicit causal assumptions made in the DTR and RL literatures
  and further clarify several common misconceptions. Finally, we
  present a realistic simulation study for the dynamic pricing problem
  encountered in container logistics, and demonstrate how violations
  of our graphical criteria can lead to suboptimal policies.


\end{abstract}

\keywords{dynamic treatment regimes, Markov decision processes,
  reinforcement learning, causal inference, acyclic directed mixed
  graphs}

\section{Introduction}\label{sec:introduction}


Sequential decision-making problems are routinely encountered in
many areas of science, engineering, and business. Among them, two
problems have been studied most extensively: Markov decision processes
(MDPs) in reinforcement learning (RL)---a popular setting in engineering
and computer science
\parencite{puterman_markov_2014,sutton_reinforcement_2018}---and dynamic
treatment regimes (DTRs) in biostatistics and health research
\parencite{murphy_optimal_2003,chakraborty_statistical_2013}. It
is widely acknowledged that
these two problems are closely related. For example, a recent
book-long treatment of DTRs by \textcite[page 187 and 574]{tsiatis_dynamic_2020} use
terminology from RL to describe their framework
and methodology. As another example, a recent review of off-policy evaluation in RL by
\textcite{uehara_review_2022} discusses how a DTR may be viewed as a finite-horizon MDP when several assumptions are relaxed.

Many real-world applications bear resemblence to both MDPs and
DTRs. As a motivating example, we consider the dynamic pricing problem
encountered in container logistics, where the decision-maker needs to
set and update prices for container shipments between two
destinations. On one hand, historical prices are set by shipping
professionals, making the dynamic pricing problem similar to learning
DTRs from existing health data where treatment decisions are made by
physicians. On the other hand, the pricing problem is repetitive and
has an inherently long horizon, naturally lending itself to an
infinite-horizon MDP. Dynamic pricing in container logistics also
bears similarities to many other sequential decision problems in the
real world---most notably pricing problems encountered in the airline
industry.


Surprisingly, there have been few attempts to unify the MDP and DTR literatures,
with a notable exception being
\textcite{ertefaie_constructing_2018}. The fundamental
``identification problem''---whether it is possible to
learn policies or treatment regimes from empirical data---is approached
very differently in these two fields. The DTR literature is pioneered by
\textcite{robins_new_1986} and emphasizes when a treatment regime can
possibly be evaluated using experimental or observational data. As a
result, most papers on this topic start with a version of ``no
confounding'' or ``sequential
ignorability'' assumption before discussing any further theory and
methodology. Contrary to this, there is virtually no discussion on the
feasibility of off-policy learning in computer science and
engineering. Instead, in most methodological work for MDPs, it is
assumed that one is given a collection of variables---referred to
as the \emph{state}---that fully summarizes the system dynamics
and allows the decision-maker to discard the rest of the history.
In practice, however, the state must be selected from
a set of observed variables. And while
state variable selection may be straightforward in closed-system games
such as chess or Go, selecting the appropriate state in real-world
problems such as dynamic pricing for container shipments is critical
and far from obvious. 
This lack of attention to identifiability and absence of practical
guidance on state variable selection pose significant challenges to applying
RL to real-world problems.

\subsection{The identifiability problem and main result}
The main purpose of this article is to develop a unifying framework of
sequential decision problems that encompasses DTRs and MDPs, and to
discuss state variable selection in this general setting. We study the
central ``identification'' question in such problems: when is it possible
to estimate the value of a policy of interest from data generated by a
different policy? Our main contribution is a set of graphical
identification conditions that extend the ``backdoor criterion'' of
\textcite{pearl1993} to the dynamic setting and the memorylessness
assumption commonly used in the MDP literature.


To state our graphical criteria and the main identifiability
result, let us briefly introduce some notation. Consider a
(discrete-time) \emph{decision process} defined as a collection of
random variables
\[
  V = (X_1, A_1, X_2, A_2, \dots, X_T, A_T, X_{T+1}),~T \geq 1,
\]
where $X_t$ denote the set of variables that contain information
observed after decision $A_{t-1}$ and before decision $A_t$ for $t \in
[T] := \{1, 2, \dots,T\}$, and
$X_{T+1}$ is the information observed after $A_T$. We will assume $V$
has a probability density function with respect to some dominating
measure (e.g.\ the Lebesgue measure if the $V$ is continuous or the
counting measure if $V$ is discrete) and denote the value of the density at a point $v$ as $\PP(V =
v)$ or simply $\PP(v)$ if no confusion arises. We use overline to
indicate the history of that variable up to time $t$; for
example, $\overline{A}_t = (A_1, A_2, \dots, A_t)$ and $\overline{a}_t =
(a_1,a_2,\dots, a_t)$. Thus, $\PP(\overline{a}_t)$ means the density of
$\overline{A}_t$ at $\overline{a}_t$.

An \emph{adaptive
  policy} $g = (g_1,\dots,
g_T)$ intervenes on $A_t$ according to a probability density function
$g_t(a_t \mid s_t)$ that can only depend on a set of \emph{state
  variables} $S_t$ for $t \in [T]$. It is required that $S_t$ only
contains information before $A_t$, that is, $S_t \subseteq X_1 \cup
\{A_1\} \cup \dots \cup X_{t-1} \cup \{A_{t-1}\} \cup X_t$.
Given a policy, we are interesting in identifiying its reward as
measured by some variables $R_t \subseteq X_{t} \cup A_{t-1}$, $t \in [T+1]$.
For example, we may wish to estimate the average \emph{value} of some
``utility function'' of the rewards $R = (R_1,\dots,R_{T+1})$ under the
distribution induced by $g$,
\[
  \rho(\PP(g)) = \E[ u(R(g)) ],
\]
where $R(g)$ is the ``potential outcomes'' of $R$ under policy $g$, $u$
is a real-valued utility function, and $\PP(g)$ is the probability
distribution of the potential outcomes of $V$ under policy $g$. In the
RL literature, a common utility function is $u(R(g)) =
\sum_{t=0}^T \gamma^t R_t(g)$ where $0 < \gamma < 1$ is some dicount
factor. For the purpose of causal identification, we assume
the probability distribution $\PP$ of $V$ under some null
policy is known (with our notation we may regard $\PP$ as
$\PP(\text{null}, \dots,\text{null})$). In practice, we may need to use empirical data to
estimate $\PP$.

The central identifiabiliy question we will try to answer is:
\begin{center} \bf
    When can we equate $\PP(g)$ or $\rho(\PP(g))$ with quantities that
    only depend on $\PP$? 
\end{center}
To answer this, it is useful to use a partition of the variables
because the state, decision, and reward variables can
overlap. Specifically, let the ``innovations'' after decision
$A_{t-1}$ and before $A_t$ that are relevant to the decision problem
be defined as
\begin{equation}
  \label{eq:innovations}
  N_t = (R_t \cup S_t) \cap X_t,~t \in [T+1].
\end{equation}
Note that $N_t$ might be empty if $R_t = S_t \cap X_t =
\emptyset$, where $R_t = \emptyset$ means we are not interested in the
intermediate reward at time $t$. We use the convention $S_{T+1}
=\emptyset$.

To answer the identifiability question we need to describe causal
relationships between the variables in $V$. To this end, we will
use an \emph{acyclic directed mixed graph} (ADMG) with vertex
set $V$ that have two types of edges: directed ($\rdedge$) and
bidirected ($\bdedge$). Such graphs were first used by
\textcite{wright_1934} and play a central role in the
statistical theory for causality; see, for example,
\textcite{pearl2009} 
and \textcite{richardson_nested_2023} (although Pearl use a different
terminology). 
A central concept in ADMG models is
\emph{m-separation} \parencite{richardson_markov_2003} that extends
d-separation for \emph{directed acyclic graphs} (DAGs)
\parencite{pearl2009}. Two vertices in the graph are said to be
m-separated by $L \subseteq V$ if every path between them is \emph{ancestrally
  blocked} by $L$,\footnote{Most authors simply say such a path is blocked. We
used ``blocked'' for a slightly different concept for walks in the
graph; see \Cref{subsec:m-separation}.} which means
\begin{enumerate}
    \item[i)] the path contains a collider (any vertex $V_j$ that
      looks like $\rdedge V_j \ldedge$, $\bdedge V_j \ldedge$, $\rdedge V_j \bdedge$,
      $\bdedge V_j \bdedge$) that is not an ancestor of
      $L$; \text{or}
    \item[ii)] the path contains a non-collider that is in $L$.
    \end{enumerate}
\Cref{sec:prelim} will review the relevant concepts and results in the
causal ADMG theory. Readers who are not familiar with this theory can simply interpret m-separation as a graphical notion of variable
independence. 

We are now
ready to give our new identifiability conditions.
\begin{assumption}[Nested states] \label{assump:nested-states}
  We assume $S_t \subseteq S_{t-1} \cup A_{t-1} \cup X_t$  for all $t \in [T]$.
\end{assumption}

\begin{assumption}[Memorylessness] \label{assump:memorylessness}
  In the causal ADMG, $(\overline{S}_{t-1} \cup \overline{A}_{t-1}) \setminus
  S_t$ and $N_{t+1}$ are m-separated by $S_t$ for all $t \in [T]$.
\end{assumption}


\begin{assumption}[Dynamic back-door] \label{assump:no-confounding-a}
  In the causal ADMG, every path from $A_k$ to $N_{t+1}$ with an
  arrowhead into $A_k$ is ancestrally blocked by $S_k$ for all $k \leq
  t \leq T$.
\end{assumption}

Assumption \ref{assump:nested-states} assumes that a variable that is
left out in a previous state cannot be part of a new state.
Assumption \ref{assump:memorylessness} states that the novel variables
at time $t+1$ must be \emph{m-separated} from previous state and
action variables given the current state $S_t$. It is helpful to think
of Assumption \ref{assump:memorylessness} as a graphical counterpart
of the memorylessness assumption that is common in the MDP
literature. 
Assumption \ref{assump:no-confounding-a} extends the back-door
criterion for unconfoundedness in \textcite{pearl1993} to the dynamic
setting. 

Our main result is that the above assumptions are basically sufficient
to identify the joint distribution of the rewards, states, and actions
under the adaptive policy $g$. As in the static case ($T = 1$), we
will need a dynamic consistency property of the potential outcomes and a
positivity assumption for $\PP$. These two assumptions will be
introduced as \Cref{assump:dynamic-consistency,assump:positivity}
later in the article.
\begin{theorem}\label{thm:main_result}
    Let dynamic consistency and positivity be given, and suppose $R_t
    \subseteq S_t$ for $t \in [T]$. Under Assumptions \ref{assump:nested-states},
    \ref{assump:memorylessness}, and \ref{assump:no-confounding-a},
    we have
    \begin{align*}
      \mathbb{P}(\overline{N}_{T+1}(g) =
      \overline{n}_{T+1}, \overline{A}_{T}(g) =
      \overline{a}_{T}) = \mathbb{P}(r_{T+1} \mid a_T, s_T)
      \prod_{t=1}^T g(a_{t} \mid s_{t}) \PP(n_{t} \mid a_{t-1},
      s_{t-1}).
    \end{align*}
\end{theorem}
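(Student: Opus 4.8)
The plan is to prove the displayed identity by induction on the horizon, peeling off the final stage and identifying, at each stage, the two new factors that appear. Concretely, I would establish for every $\tau \in [T+1]$ the forward statement
\[
  \PP(\overline{N}_{\tau}(g) = \overline{n}_{\tau},\, \overline{A}_{\tau-1}(g) = \overline{a}_{\tau-1}) = \prod_{t=1}^{\tau-1} g(a_t \mid s_t) \prod_{t=1}^{\tau} \PP(n_t \mid a_{t-1}, s_{t-1}),
\]
with the convention that $s_0, a_0$ and empty products are trivial. The base case $\tau = 1$ is immediate because $N_1$ precedes every action, so $N_1(g) = N_1$ and its law is the observational $\PP(n_1)$. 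For the inductive step I would apply the chain rule to split off the two new factors $\PP(A_\tau(g) = a_\tau \mid \overline{N}_\tau(g), \overline{A}_{\tau-1}(g))$ and $\PP(N_{\tau+1}(g) = n_{\tau+1} \mid \overline{N}_\tau(g), \overline{A}_\tau(g))$, and show they equal $g(a_\tau \mid s_\tau)$ and $\PP(n_{\tau+1} \mid a_\tau, s_\tau)$ respectively; positivity (\Cref{assump:positivity}) guarantees all these conditional densities are well defined on the support visited by $g$.

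The action factor is the routine part. Using \Cref{assump:nested-states} together with the hypothesis $R_t \subseteq S_t$ (so that $N_t = S_t \cap X_t$ for $t \le T$), I would first verify by a short induction that $S_\tau$ is a deterministic function of $(\overline{N}_\tau, \overline{A}_{\tau-1})$: its $X_\tau$-part lies in $N_\tau$, and its remaining part lies in $S_{\tau-1} \cup A_{\tau-1}$, which is recovered from the earlier innovations and actions. Since the policy draws $A_\tau$ from $g(\cdot \mid s_\tau)$ and $s_\tau$ is measurable with respect to the conditioning history, the action factor collapses to $g(a_\tau \mid s_\tau)$.

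The innovation factor is the crux. By dynamic consistency (\Cref{assump:dynamic-consistency}), on the event $\{\overline{A}_\tau(g) = \overline{a}_\tau\}$ the potential outcome $N_{\tau+1}(g)$ coincides with the outcome under the static intervention fixing $\overline{A}_\tau = \overline{a}_\tau$, and likewise for the conditioning variables $\overline{N}_\tau(g)$. It then remains to establish the observational identity
\[
  \PP(n_{\tau+1} \mid \overline{n}_\tau, \overline{a}_\tau) = \PP(n_{\tau+1} \mid s_\tau, a_\tau),
\]
together with the sequential exchangeability needed to pass between the policy-induced and observational action laws. For the reduction of the conditioning set I would invoke the global Markov property of the causal ADMG (m-separation implies conditional independence, reviewed in \Cref{sec:prelim}): by the measurability established above, conditioning on $(\overline{N}_\tau, \overline{A}_\tau)$ is equivalent to conditioning on $(\overline{S}_\tau, \overline{A}_\tau)$, and the variables to be discarded are exactly $(\overline{S}_{\tau-1} \cup \overline{A}_{\tau-1}) \setminus S_\tau$, which \Cref{assump:memorylessness} m-separates from $N_{\tau+1}$ given $S_\tau$. \Cref{assump:no-confounding-a} then supplies the exchangeability that lets the observational conditional stand in for the interventional one.

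The step I expect to be the main obstacle is combining \Cref{assump:memorylessness} and \Cref{assump:no-confounding-a} cleanly. Memorylessness separates the forgotten history from $N_{\tau+1}$ only given $S_\tau$, whereas the conditioning set in the innovation factor also contains the current action $A_\tau$; adding $A_\tau$ can in principle open a collider or back-door path between the discarded variables and $N_{\tau+1}$, and it is precisely \Cref{assump:no-confounding-a} (every path into $A_\tau$ reaching $N_{\tau+1}$ is ancestrally blocked by $S_\tau$) that rules this out, so that the required independence survives both the extra conditioning and the intervention. Care is also needed to make the potential-outcome-to-observational translation rigorous, which is where dynamic consistency and the ADMG causal semantics do the work. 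Finally, the endpoint $\tau = T+1$ is handled identically: since $S_{T+1} = \emptyset$ we have $N_{T+1} = R_{T+1} \cap X_{T+1}$, and the $A_T$-part of $R_{T+1}$ is fixed by conditioning on $a_T$, so the last factor $\PP(n_{T+1} \mid a_T, s_T)$ equals $\PP(r_{T+1} \mid a_T, s_T)$, yielding the stated product.
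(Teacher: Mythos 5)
Your forward-induction skeleton---peeling off, at each stage, an action factor and an innovation factor via the chain rule---is a legitimate alternative to the paper's argument, and you have correctly located the crux (the conditioning set contains $A_\tau$ on top of $S_\tau$, and contains the full history rather than just the current state). But the proposal has two genuine gaps, and they sit exactly where the theorem's difficulty lives. First, your consistency step is not licensed by the stated hypotheses: \Cref{assump:dynamic-consistency} does \emph{not} say that $N_{\tau+1}(g)$ coincides with the outcome of the static intervention $\mathrm{do}(\overline{A}_\tau=\overline{a}_\tau)$ on the event $\{\overline{A}_\tau(g)=\overline{a}_\tau\}$. It is a recursion linking the sub-policies $\underline{g}_{\tau-1}$ and $\underline{g}_{\tau}$, conditional on agreement between the \emph{natural counterfactual} $A^-_{\tau-1}(\underline{g}_{\tau-1})$ and the intervened value. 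The recursive-substitution property you invoke does hold in the underlying NPSEM, but it is a different statement, and the theorem is supposed to follow from dynamic consistency \emph{as stated}; this is precisely why the paper organizes its proof around sub-policies and natural counterfactuals via \Cref{thm:innovation_recursion}, rather than around static regimes. Your route would also force you to build SWIGs and Markov properties for static interventions, objects the paper never needs.

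Second, and more seriously, the two conditional independences your induction rests on are asserted rather than proved. Because you condition on the full history $(\overline{N}_\tau,\overline{A}_\tau)$, you need (i) sequential exchangeability for the static regime given the innovation--action history, and (ii) the observational reduction $\PP(n_{\tau+1}\mid \overline{n}_\tau,\overline{a}_\tau)=\PP(n_{\tau+1}\mid s_\tau,a_\tau)$, i.e.\ dropping $(\overline{S}_{\tau-1}\cup\overline{A}_{\tau-1})\setminus S_\tau$ from a conditioning set that also contains $A_\tau$. \Cref{assump:memorylessness} conditions only on $S_\tau$, and \Cref{assump:no-confounding-a} blocks back-door paths only given $S_\tau$; adding $A_\tau$ and the discarded history can open collider walks, and showing that \Cref{assump:nested-states,assump:memorylessness,assump:no-confounding-a} jointly exclude every such walk is the actual content of the paper's appendix: it requires the vertex-map construction (\Cref{lem:vertex-map}), the special-walk lemma (\Cref{lem:special-walk-1}), and the multi-case walk surgery in \Cref{lemma:innovation_action_unconfoundedness_gt_graph} and \Cref{lemma:innovation_action_memorylessness_gt_graph}, where the contradiction sometimes comes from memorylessness at an \emph{intermediate} time between $\tau$ and the discarded variable, not from one application of the back-door condition at time $\tau$. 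Your sentence ``it is precisely \Cref{assump:no-confounding-a} that rules this out'' names the right ingredients but contains no argument. Note also that the paper's backward recursion is engineered to avoid this entirely: \Cref{thm:innovation_recursion} keeps the conditioning set equal to $(A_t,S_t)$ at every step, so only the one-step lemmas (\Cref{lemma:innovation_action_unconfoundedness_gt,lemma:innovation_action_memorylessness_gt}) are ever needed, whereas your forward route demands strictly stronger full-history analogues that you would have to establish from scratch.
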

Note that $\overline{N}_{T+1} = (\overline{S}_T \cup
\overline{R}_{T+1}) \setminus \overline{A}_T$ by \eqref{eq:innovations} and
the nested states assumption, so the left hand side of this equation
is the probability density of all state, decision, and reward
variables. As another remark, the condition $R_t \in S_t$, $t \in [T]$
in \Cref{thm:main_result} can be dropped if one is just interested in
identifying the marginal distribution of $R_{T+1}(g)$ (so
$R_1,\dots,R_T$ are empty).

\subsection{Motivating example: dynamic pricing for container
  logistics}

\label{sec:motivating-example}

Let us illustrate the graphical criteria in \Cref{thm:main_result}
using an example motivated by the dynamic pricing problem for
container logistics. In this simplified example represented by the
graph in \Cref{fig:toy_pricing_template}, a company sets prices
for container slots on weekly vessel departures. At each time $t$, the
company sets a price $A_t$ for a standard 40-foot container and
receives bookings $B_{t+1}$ for the vessel departing at time
$t+1$. The total revenue for this vessel is thus $R_{t+1} = A_t
B_{t+1}$. To decide the price, suppose shipping professionals in
the company consider the previous price (represented by the directed
edge $A_{t-1} \rdedge A_t$) and bookings ($B_t
\rdedge A_t$). The reader can easily verify
that the choice $S_t =
\{A_{t-1}\}$ satisfies Assumptions \ref{assump:nested-states},
\ref{assump:memorylessness} and \ref{assump:no-confounding-a} in
\Cref{fig:toy_pricing_template}.

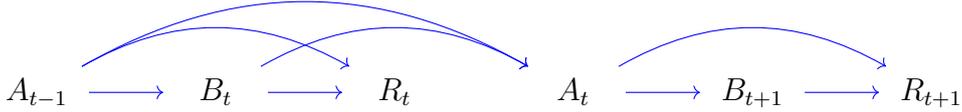
\begin{figure}[t]
  \centering
  \begin{tikzcd}[nodes={text width=1cm, align=center}]
    A_{t-1} \arrow[r, graphBlue] \arrow[rr, graphBlue, bend left] \arrow[rrr, graphBlue, bend left] & B_{t} \arrow[r, graphBlue] \arrow[rr, graphBlue, bend left] & R_t & A_t \arrow[r, graphBlue] \arrow[rr, graphBlue, bend left] & B_{t+1} \arrow[r, graphBlue] & R_{t+1}
  \end{tikzcd}
  \caption{Simple dynamic pricing.}
  \label{fig:toy_pricing_template}
\end{figure}

\Cref{fig:simple_dynamic_pricing} presents
three plausible deviations from the simple setting in
\Cref{fig:toy_pricing_template}. 
In \Cref{subfig:competitor_price},
shipping professionals rely on word-of-mouth intel about competitor
prices when setting their own prices. Because competitor prices also
impact the number of bookings the company receives, this introduces
confounding between the price and the number of bookings (as
represented by the bidirected edge $A_{t} \bdedge B_{t+1}$). In
\Cref{subfig:lagged_price}, customers
consider previous prices when making their booking decision (as
represented by the additional edge $A_{t-1} \rdedge B_{t+1})$ but the
shipping professionals follow a myopic pricing policy (as represented
by the lack of the edges $B_{t} \rdedge A_{t}$ and $A_{t-1} \rdedge A_{t}$). Finally,
\Cref{subfig:route_capacity} presents a scenario with two latent
confounders: market trend and port congestion. Suppose the shipping
company adjusts the total vessel capacity 
$C_t$ based on the market trend, which 
affect the received bookings at the same time (as represented by $C_{t} \bdedge
B_{t+1}$). 
Furthermore, port congestion may limit the weekly container supply but
also increase the price through increased operational costs ($A_{t}
\bdedge C_{t}$).
\begin{figure}[t]
  \centering
  \begin{subfigure}[b]{0.8\textwidth}
    \centering
    \begin{tikzcd}[nodes={text width=1cm, align=center}]
      A_{t-1} \arrow[r, graphBlue] \arrow[r, graphRed, leftrightarrow, bend right] \arrow[rr, graphBlue, bend left] \arrow[rrr, graphBlue, bend left] & B_{t} \arrow[r, graphBlue] \arrow[rr, graphBlue, bend left] & R_t & A_t \arrow[r, graphBlue] \arrow[r, graphRed, leftrightarrow, bend right] \arrow[rr, graphBlue, bend left] & B_{t+1} \arrow[r, graphBlue]  & R_{t+1}
    \end{tikzcd}
    \caption{Competitor prices. Not identified.}
    \label{subfig:competitor_price}
  \end{subfigure}

  \begin{subfigure}[b]{0.8\textwidth}
    \centering
    \begin{tikzcd}[nodes={text width=1cm, align=center}]
      A_{t-1} \arrow[r, graphBlue] \arrow[rr, graphBlue, bend left]  \arrow[rrrr, graphRed, bend left] & B_{t} \arrow[r, graphBlue] & R_t  & A_{t} \arrow[r, graphBlue] \arrow[rr, graphBlue, bend left] & B_{t+1} \arrow[r, graphBlue] & R_{t+1}
    \end{tikzcd}
    \caption{Lagged price effects. The state choice $\hat{S}_t =
      \emptyset$ violates Assumption \ref{assump:memorylessness}. The
      state choice $S_t^{*} = \{A_{t-1}\}$ satisfy all idenfiability
      assumptions.}
    \label{subfig:lagged_price}
  \end{subfigure}


  \begin{subfigure}[b]{0.8\textwidth}
    \centering
    \begin{tikzcd}[nodes={text width=1cm, align=center}]
      C_{t-1} \arrow[d, graphRed, leftrightarrow] \arrow[rd, graphRed, leftrightarrow, bend left]& & & C_t \arrow[d, graphRed, leftrightarrow] \arrow[rd, graphRed, leftrightarrow, bend left] & & \\
      A_{t-1} \arrow[r, graphBlue] \arrow[rr, graphBlue, bend left] \arrow[rrr, graphBlue, bend left] & B_{t} \arrow[r, graphBlue] \arrow[rr, graphBlue, bend left] & R_t & A_t \arrow[r, graphBlue] \arrow[rr, graphBlue, bend left] & B_{t+1} \arrow[r, graphBlue] & R_{t+1}
    \end{tikzcd}
    \caption{Route capacity. The state choice $\hat{S}_t = \{A_{t-1},
      C_t\}$ violates Assumption \ref{assump:no-confounding-a}. The
      state choice $S_t^{*} = \{A_{t-1}\}$ satisfy all idenfiability
      assumptions.}
    \label{subfig:route_capacity}
  \end{subfigure}


  \caption{Motivating dynamic pricing examples. New edges compared to
    \Cref{fig:toy_pricing_template} are shown in red.}
  \label{fig:simple_dynamic_pricing}
\end{figure}
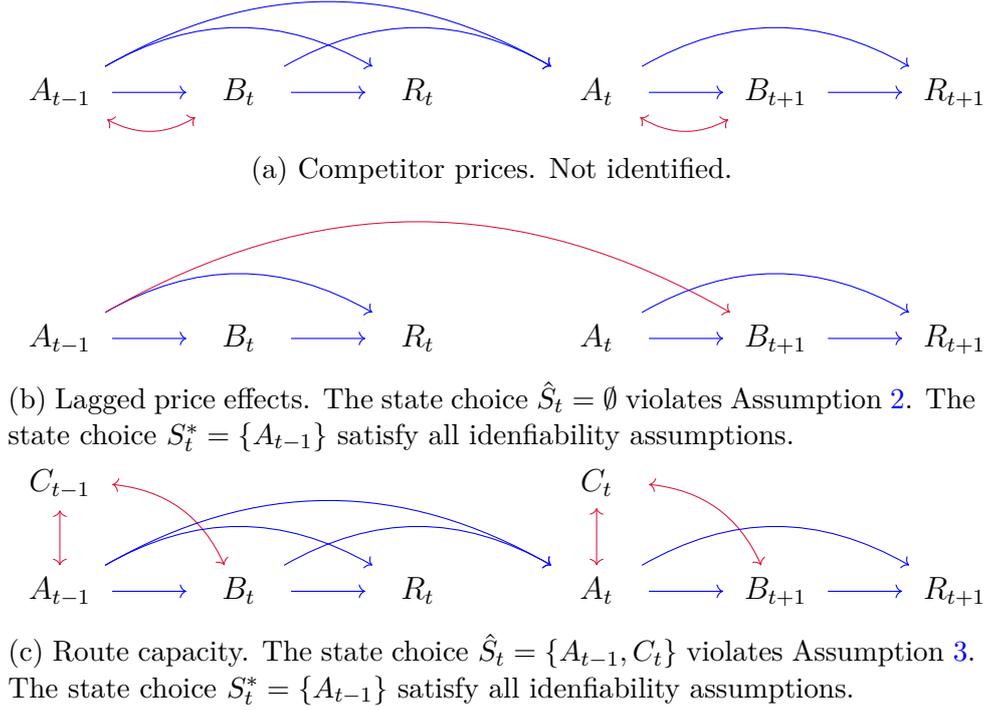

What choice of the state variables will satisfy Assumptions
\ref{assump:nested-states}, \ref{assump:memorylessness} and
\ref{assump:no-confounding-a} in the three scenarios in
\Cref{fig:simple_dynamic_pricing}? It is obvious that no such choice
exists for \Cref{subfig:competitor_price} because the immediate
confounding between $A_{t-1}$ and $B_t$ cannot be controlled for. In
other words, \Cref{assump:no-confounding-a} can never be true. In
\Cref{subfig:lagged_price}, the state choice $S_t = \emptyset$ satisfies
nestedness and unconfoundedness, but not memorylessness due to the edge
$A_{t-1} \rdedge B_{t+1}$. 
In
\Cref{subfig:route_capacity}, the state choice $S_t = \{A_{t-1},
C_t\}$ satisfies nestedness and memorylessness, but not
unconfoundedness due to the collider path $A_{t-1} \bdedge C_{t-1}
\rdedge B_t$. The state choice $S_t = \{A_{t-1}\}$ satisfies all
three identifiability assumptions for both \Cref{subfig:lagged_price} and \ref{subfig:route_capacity}.

Using a state that does not satisfy the identifiability assumptions
may lead to inferior policies. To demonstrate this, we generated data
from a null policy according to each scenario in
\Cref{fig:simple_dynamic_pricing} and used policy iteration to learn a
policy with different state choices. Policy iteration is a popular
RL algorithm that iterates between policy evaluation and
policy improvement; it provably converges to the optimal policy if
the MDP assumptions are satisfied
\parencite{howard:dp,sutton_reinforcement_2018}. However, it is often
overlooked that policy iteration may converge to a sub-optimal
policy if the state is not chosen correctly. In fact,
\Cref{fig:toy_pricing_results} shows that the policy learned by the
policy iteration algorithm using a state that does not satisfy
Assumptions \ref{assump:nested-states}-\ref{assump:no-confounding-a}
can be even worse than the null policy.

\begin{figure}[t]
  \centering
  \includegraphics[width=\textwidth]{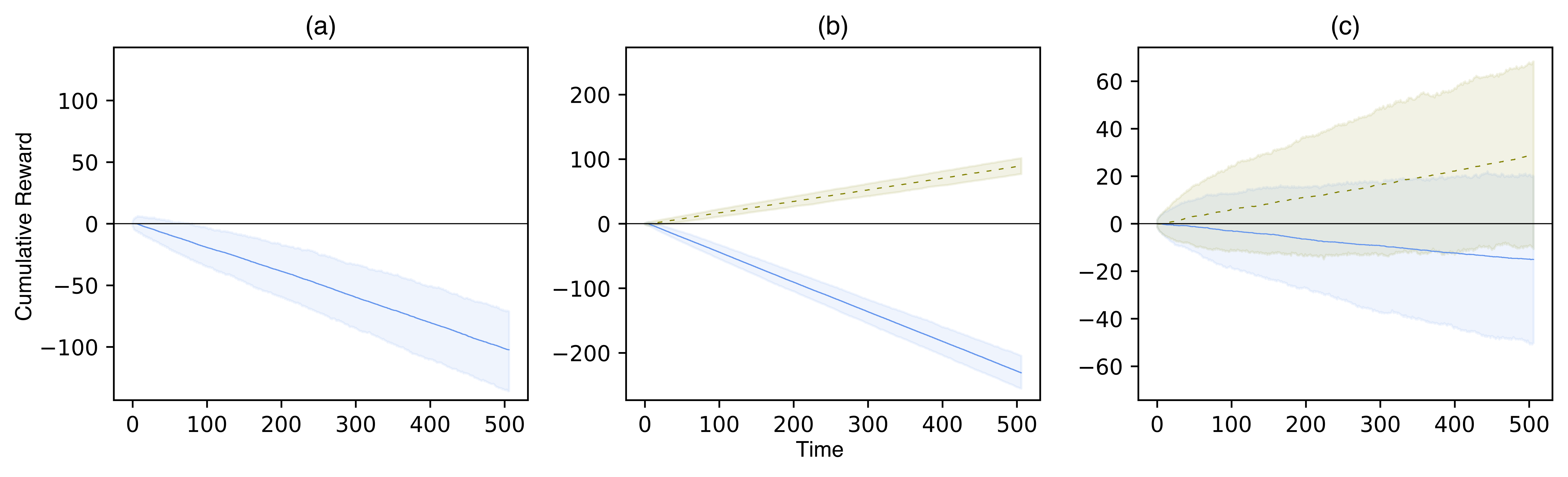}
  \caption{Mean cumulative rewards and 95\% confidence intervals
    relative to the null policy performance (equal to
    zero in the plots) using 1000 simulated episodes. In each panel,
    data are
    generated from a null policy according to the corresponding panel in
    \Cref{fig:simple_dynamic_pricing}.
    The solid blue curves correspond to the policy learned from
    policy iteration using a state $S_t$ that fails at least one
    identifiability assumptions ($S_t = \{A_{t-1}\}$ in panel (a),
    $S_t = \emptyset$ in panel (b), and $S_t = \{A_{t-1}, C_t\}$ in
    panel (c)). The dashed olive curves correspond to the policy
    learned using a state $S_t$ that satisfies
    Assumptions
    \ref{assump:nested-states}-\ref{assump:no-confounding-a} ($S_t =
    \{A_{t-1}\}$ in panels (b) and (c)).
  }
  \label{fig:toy_pricing_results}
\end{figure}

\subsection{Organization of the paper}
In \Cref{sec:prelim} we introduce some background on causal graphical
models including m-separation, the construction of d-SWIGs
\parencite{richardson_single_2013} and the dynamic
  consistency assumption. In \Cref{sec:identication} we provide a more
  technical walkthrough of \Cref{thm:main_result}, our main
  identification result. In
\Cref{sec:dtr} we discuss how our identifiability assumptions relate
to the common sequential ignorability assumption in the DTR
literature. In \Cref{sec:mdp}, we discuss the implicit causal
assumptions made in the MDP literature and how to interpret them in
the light of our results. We also discuss the abuse of
causal diagram in that literature and why explain partially observed
MDPs (POMDPs) are not identified in general. In
\Cref{sec:simulation_study} we present a more realistic simulation
study of the dynamic pricing problem from container logistics, and
examine how plausible violations of our assumptions can lead to
suboptimal policies. Finally in \Cref{sec:discussion} we conclude the
paper with some more dicussion. Technical proofs and details of the
simulation study can be found in the Online Supplement.




\section{Preliminaries of causal graphical models}\label{sec:prelim}



\subsection{Basic graphical concepts}

A directed mixed graph $\gG = (V, \sD, \sB)$ consists of a vertex set
$V =
\{V_1,\dots,V_p\}$, a \emph{directed edge} set $\sD \subseteq V \times
V$, and a \emph{bidirected edge} set $\sB \subseteq V \times V$ that
is required to be symmetric: $(V_j,V_k) \in \sB \Longleftrightarrow
(V_k,V_j) \in \sB,~\text{for all}~V_j,V_k \in V$.
It is helpful to think about the edges as relations between the
vertices and write
\[
    V_j \rdedge V_k \ingraph{\gG} \Longleftrightarrow (V_j,V_k) \in \sD\quad\text{and}\quad V_j \bdedge V_k \ingraph{\gG} \Longleftrightarrow (V_j,V_k) \in \sB.
\]
\paragraph{Walks and paths.}
A \emph{walk} is a sequence of adjacent edges of any type or orientation. If the two-endpoints appear only once we say the walk is \emph{simple}, and if all vertices appear at most once in the walk we call it a \emph{path}. \vspace{-2ex}
\paragraph{Colliders.} A non-endpoint $V$ in a
walk $w$ is said to be a \emph{collider} if the two edges before and
after $V$ have an arrowhead into $V$. 
When describing a walk or part of a walk, we use a "half arrowhead"
to indicate that the endpoint of an edge can be either a arrowhead or
a tail. For example, $V$ is a collider in $w$ if $w$ contains
$\halfstraigfull V \fullstraighalf$. It is obvious that the same
vertex can be a collider in one walk (or one place in the walk) and
non-collider in another walk (or another place in the same
walk). 
\paragraph{Arcs, directed walks, and confounding
    arcs.} An \emph{arc} is a walk without colliders. Following the
  notation in \textcite{zhao_matrix_2024}, we denote an arc by a
squiggly line ($\nosquigno$). 
We further distinguish
arcs by embellishing their endpoints with no, half or full
arrowhead. For example, a \emph{directed walk} from $V_j$ to $V_K$ is
a sequence of adjacent directed edges
$V_j \rdedge \dots \rdedge V_k$. When such walks exist, we write $V_j
\rdpath V_k$. A \emph{confounding arc}
between $V_j$ and $V_k$, which looks like $V_j \fullsquigfull V_k$, is
a simple walk
with no colliders and two endpoint arrowheads. As before, $V_j
\halfsquigfull V_k$ means that the walk is either a directed walk or a
confounding arc. If a directed walk has the same beginning and end, it
is called a directed cycle. If a directed mixed graph has no directed
cycles, we say it is \emph{acyclic} or an ADMG.
\paragraph{Familial terminology.} If $V_j \rdedge V_k$ then
$V_j$ is a \emph{parent} of $V_k$, and $V_k$ is a \emph{child} of $V_j$. If $V_j
\rdpath V_k$, then $V_j$ is an \emph{ancestor} of $V_k$ and $V_k$ is a
\emph{descendant} of $V_j$. We will use the convention that every vertex is
an ancestor and descendant of itself. The sets of parents, children,
ancestors and denscendants of $V_j$ in $\gG$ are denoted $\pa(V_j)$,
$\ch(V_j)$, $\an(V_j)$ and $\de(V_j)$, respectively. 

\subsection{m-separation and the Markov property}\label{subsec:m-separation}
A central concept in graphical statistical models is blocking. We say
a walk $w$ from $V_j \in V$ to $V_k \in V$ is \emph{blocked} by $L \subseteq V$ if
\begin{enumerate}
    \item $w$ contains a collider $V_l$ (so part of $w$ looks like
  $\halfstraigfull V_l \fullstraighalf$) and $V_l \not \in L$; \text{or}
    \item $w$ contains a non-collider $V_l$ (which must also be a
      non-endpoint) such that $V_l \in L$.
\end{enumerate}
Note that with this definition a walk cannot be blocked at its
endpoints, that is, $w$ is not necessarily blocked by $L$ if $L$
contains an endpoint of $w$. If $V_j$, $V_k$, and $L$ are disjoint, we
say $V_j$ is \emph{m-connected} to $V_k$ given $L$ and write $V_j
\mconn V_k \mid L \ingraph{\gG}$, if there exists an unblocked walk from $V_j$ to $V_k$ given $L$; otherwise we say $V_j$ and $V_k$ are \emph{m-separated} given $L$ in $\gG$ and write $\textnot V_j
\mconn V_k \mid L \ingraph{\gG}$. M-separation is introduced by
\textcite{richardson_markov_2003} and extends the
d-separation criterion for conditional independence in DAGs
\parencite{pearl1988book} to ADMGs. 
The definition of m-separation here using walks and blocking is
equivalent to the definition in \Cref{sec:introduction} using paths and
ancestral blocking. See
\textcite{shachterBayesballRationalPastime1998,guo_confounder_2023}.

Our notation gives a visual description of the type of walk: (1) the
half-arrowheads indicate that both endpoints
are unrestricted in terms of arrowhead or tail, and (2) the wildcard
character $\ast$ means the walk can have zero, one or several
colliders. Thus, $V_j \mconn V_k$ basically refers to all
walks from $V_j$ to $V_k$. This definition of m-connection/separation
naturally extends to sets of vertices: for disjoint $J, K, L \subset
V$, we write
\[
  J \mconn K \mid L \ingraph{\gG} \Longleftrightarrow
  V_j \mconn V_k \mid L \ingraph{\gG}~\text{for some}~V_j \in
  J, V_k \in K.
\]
It is often useful to consider other types of walks. For example, a
confounding walk can be expressed as $\confpath$
\parencite{guo_confounder_2023}. See the Online Supplement for further
discussion.

A probability distribution $\PP$ on $V$ is said to be \emph{global
  Markov} with respect to $\gG$ if every m-separation in the graph
implies the corresponding conditional independence, that is, if for
every disjoint $J,K,L \subset V$, we have
\[
  \textnot J \mconn K \mid L \ingraph{\gG} \Longrightarrow J \indep K
  \mid L \underdist{\PP}.
\]
An ADMG may impose other constraints on the probability distribution,
most notably the nested Markov property that is closely related to
causal identification \parencite{richardson_nested_2023}.

An experienced reader may find the above definition of blocking
slightly different from many other authors who define graph separation
using paths \parencite{lauritzen_graphical_1996, pearl2009}, which we
refer to as ancestral blocking. The advantage
of using this alternative notion of blocking is that it is entirely a
property of the walk and the set of vertices being conditioned
on; this is known as the Bayes ball algorithm in the literature
\parencite{shachterBayesballRationalPastime1998}. In contrast,
ancestral blocking depends on
the ambient graph and is less convenient in mathematical proofs. 

\subsection{Graphical causal models}
We will now introduce a formal causal model associated with an ADMG
$\gG$ with vertex set $V = \{V_1,\dots,V_p\}$. A nonparametric
structural equation model (NPSEM) with respect to $\gG$ collects all
distributions $P$ of $V$ such $V$ can be written as (following event
has probability 1 under $P$):
\begin{equation} \label{eq:npsem}
  V_j = f_j(V_{\pa(j)}, E_j), ~ j = 1,\dots,p,
\end{equation}
for some functions $f_1,\dots,f_p$ and unobserved noise variables
$E_1,\dots,E_p$ whose distribution satisfies the global Markov
property with respect to the bidirected subgraph of $\gG$. Because any
walk in the bidirected subgraph must be a sequence of bidirected
edges, this means
\[
  \textnot V_{\sJ} \samedist V_{\sK} \mid V_{\sL} \ingraph{\gG}
  \Longrightarrow E_{\sJ} \indep E_{\sK} \mid E_{\sL},~\text{for all
    disjoint}~\sJ, \sK, \sL \subset \{1,\dots,p\},
\]
where $V_{\sJ} \samedist V_{\sK} \mid V_{\sL}$ means that a vertex in
$V_{\sJ}$ and a vertex in $V_{\sK}$ can be connected by a walk that is
not blocked by $V_{\sL}$; when
$V_{\sL} = \emptyset$, this means that the two vertices are in the
same \emph{district} in the terminology of
\textcite{richardson_markov_2003} and \textcite{richardson_nested_2023}.

The potential outcomes under an (adaptive) policy can be defined by
modifying the equations in \eqref{eq:npsem}. To formalize this, let
$\prec$ be a topological order of $\gG$ in the sense that $V_j \rdedge
V_k \ingraph{\gG}$ implies $V_j \prec V_k$. Let $A = (A_1,\dots, A_T)
\subseteq V$ be the decision variables that can be changed by a
policy, and let $X = (X_1, \dots, X_{T+1})$ be the observed
information between the decisions. Together, they induce a natural
partition of $V$ through
\[
  X_1 \prec A_1 \prec X_2 \prec \dots \prec X_T \prec A_T \prec X_{T+1}.
\]
A \emph{policy} is defined as $g = (g_1, \dots, g_T)$ where $g_t$
is a function of some \emph{state} $S_t \prec A_t$ that precedes $A_t$
and some noise $E^*_t$ such that $(E_1^{*}, \dots, E_p^{*})$ is
independent of $(E_1, \dots, E_p)$. The potential outcome of $V_j$
under $g$ is defined recursively as
\[
    V_j(g) =
    \begin{cases}
      f_j(V_{\pa(j)}(g), E_j),& \text{if}~V_j \not \in A, \\
      g_t(S_t(g), E^*_t),& \text{if}~V_j = A_t~\text{for some $t$}. \\
    \end{cases}
\]
Thus, we simply replace the equations for $A_t$ in \eqref{eq:npsem} by $A_t = g_t(S_t(g), E_t^{*})$ and rename all variables to emphasize their dependence on $g$. With an abuse of notation, we will also use $g_t(a_t \mid s_t)$ to denote
the density of function of $A_t(g)$ at $a_t$ given $S_t(g) = s_t$.
It is also useful to define the ``natural counterfactuals'' of the decision variables as\footnote{One can similarly define the natural counterfactuals for all variables. However, $V_j(g) = V_j^-(g)$ if $V_j \not \in A$ (because no intervention is made on such $V_j$), so it is not useful to distinguish $V_j(g)$ and $V_j^-(g)$ when $V_j$ is not a decision variable.}
\[
  A_t^-(g) = f_j(V_{\pa(j)}(g), E_j), ~ \text{if} ~ V_j = A_t~\text{for some $t$}.
\]
Note that $A_j^-(g)$ represents a true counterfactual in sense that it
models what $A_t$ would have been \emph{had we not intervened on
  $A_t$} but at all other time-points according to $g$.

%



Following \textcite{richardson_single_2013}, one can represent
the effect of a policy $g$ as a graph transformation, in which every
decision vertex $A_t$ in $\gG$ is split into two halves:
\begin{enumerate}
    \item the natural counterfactual $A_j^-(g)$ that inherits all "incoming" edges $V_k \halfstraigfull A_t$;
    \item the potential outcome $A_j(g)$ that inherits all "outgoing" edges $A_t \nostraigfull V_k$.
\end{enumerate}
The natural counterfactual $A_t^-(g)$ has the same parent set as that
of $A_t$ in $\gG$ and has no children.
The potential outcome $A_t(g)$ has a parent set that is determined by the
policy $g_t$ and must be contained in $S_t(g)$, and has the same child
set as that of $A_t$ in $\gG$. Every other edge is kept the same and
every other variable $V_j$ is relabelled as $V_j(g)$. The resulting
graph is called a \emph{dynamic single-world intervention graph} or
d-SWIG by \textcite[Section 5]{richardson_single_2013} and
will be denoted as
$\gG(g)$.\footnote{\textcite[Section 5]{richardson_single_2013} only
  considered the case where $\gG$ is a DAG, but the extension to ADMGs
  here is fairly
  straightforward. \textcite{richardson_single_2013} denoted the
  natural counterfactual of $A_t$---our $A_t^-(g)$---as $A_t(g)$ and
  the potential outcome of $A_t$---our $A_t(g)$---as
  $A_t^+(g)$.} Note that in our setting, there
are no bidirected edges in $\gG(g)$ with one end being $A_t(g)$, which
reflects the assumption that $A_t(g)$ is randomized given $S_t(g)$
according to $g$. Figure \ref{fig:d-SWIG-examples} presents an example
with three different policies.
\begin{figure}[t]
    \begin{subfigure}[b]{\textwidth}
        \centering
        \begin{tikzcd}
            L_1 \arrow[rr, color=graphBlue, bend left] \arrow[r, color=graphBlue] & A_1 \arrow[r, color=graphBlue] \arrow[rr, color=graphRed, leftrightarrow, bend left] & L_2 \arrow[r, color=graphBlue] \arrow[rr, color=graphBlue, bend left] & A_2 \arrow[r, color=graphBlue] & R_3
        \end{tikzcd}
        \caption{$\gG$. No intervention.}
        \label{subfig:d_SWIG_a}
    \end{subfigure}

    \begin{subfigure}[b]{\textwidth}
        \centering
        \begin{tikzcd}
            & A^{-}_1(g_1) \arrow[rr, leftrightarrow, color=graphRed, bend left] &           &     A^{-}_2(g_1,g_2)      & \\
            L_1 \arrow[ru, color=graphBlue, bend left] \arrow[rr, color=graphBlue, bend left] \arrow[r, dashed, color=graphGreen] & \textcolor{graphGreen}{A_1(g_1)} \arrow[r, color=graphBlue] \arrow[rr, color=graphGreen, dashed, bend right]    & L_2(g_1) \arrow[r, dashed, color=graphGreen] \arrow[rr, color=graphBlue, bend left] \arrow[ru, color=graphBlue, bend left] & \textcolor{graphGreen}{A_2(g_1,g_2)} \arrow[r, color=graphBlue] & R_3(g_1,g_2)
        \end{tikzcd}
        \caption{$\gG(g_1, g_2)$ with $S_1 = \{L_1\}$, $S_2 = \{A_1, L_2\}$.}
        \label{subfig:d_SWIG_c}
      \end{subfigure}

    \begin{subfigure}[b]{\textwidth}
        \centering
        \begin{tikzcd}
            &  &           &     A^{-}_2(g_2)      & \\
            L_1 \arrow[r, color=graphBlue] \arrow[rr, color=graphBlue, bend left] \arrow[rrr, color=graphGreen, dashed, bend right] & A_1  \arrow[rru, leftrightarrow, color=graphRed, bend left] \arrow[rr, color=graphGreen, dashed, bend right] \arrow[r, color=graphBlue]     & L_2 \arrow[r, dashed, color=graphGreen] \arrow[rr, color=graphBlue, bend left] \arrow[ru, color=graphBlue, bend left] & \textcolor{graphGreen}{A_2(g_2)} \arrow[r, color=graphBlue] & R_3(g_2)
        \end{tikzcd}
        \caption{$\gG(\text{null}, g_2)$ with $S_2 = \{X_1, A_1, X_2\}$}
        \label{subfig:d_SWIG_b}
    \end{subfigure}

    \begin{subfigure}[b]{\textwidth}
        \centering
    \begin{tikzcd}
        &  A^{-}_1(g_1) \arrow[rrd, graphRed, leftrightarrow, bend left] &           &          & \\
        L_1 \arrow[ru, graphBlue] \arrow[r, dashed, graphGreen] \arrow[rr, graphBlue, bend left] & \textcolor{graphGreen}{A_1(g_1)}   \arrow[r, graphBlue]     & L_2(g_1) \arrow[r, graphBlue] \arrow[rr, graphBlue, bend left] & A_2(g_1) \arrow[r, graphBlue] & R_3(g_1)
    \end{tikzcd}
    \caption{The d-SWIG $\gG(g_1, \text{null})$ with $S_1 = \{L_1\}$.}
    \label{fig:d-SWIG-g1}
    \end{subfigure}

    \caption{d-SWIG examples; ``null'' means no intervention
      for the corresponding decision.}
    \label{fig:d-SWIG-examples}
\end{figure}
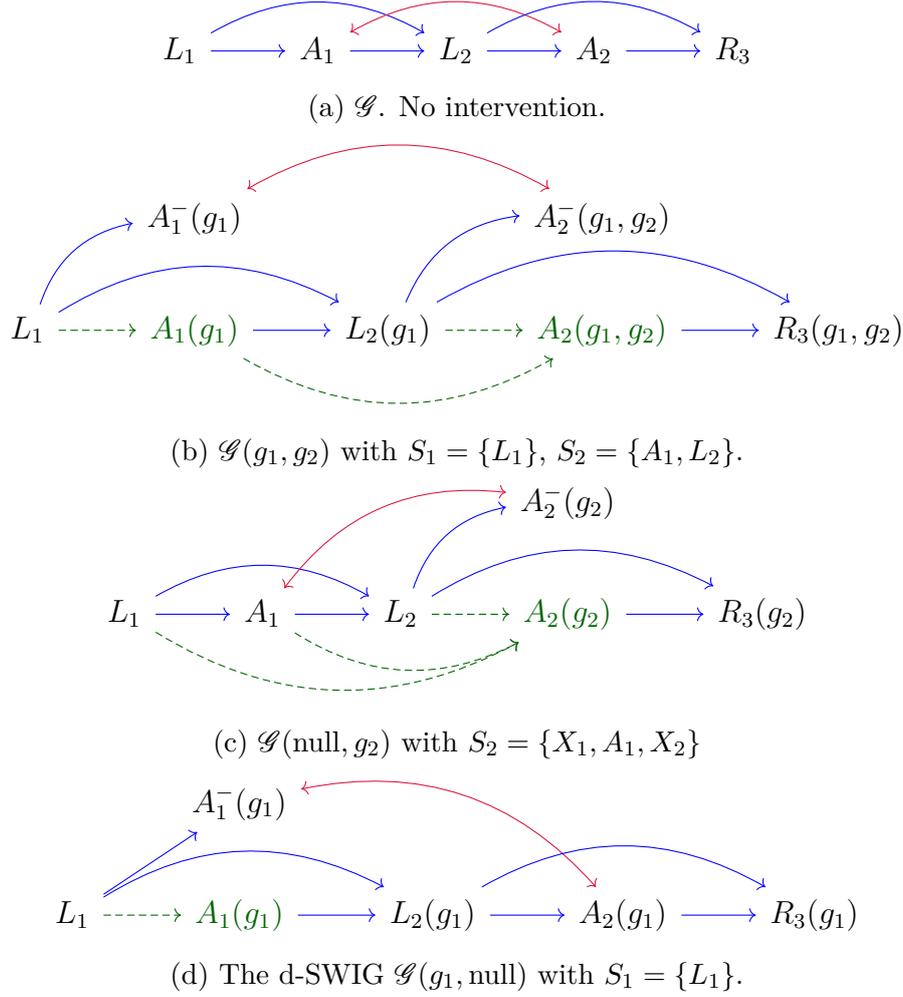

Our causal model as defined above have two useful properties. The
first property highlights the role of the d-SWIG.
\begin{proposition} \label{prop:g-markov}
  The distribution of $(V(g), A^-(g))$ is global Markov with respect
  to $\gG(g)$.
\end{proposition}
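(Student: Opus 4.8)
The plan is to show that $(V(g), A^-(g))$ is itself generated by a nonparametric structural equation model whose associated ADMG is exactly $\gG(g)$, and then invoke the standard fact that any NPSEM distribution is global Markov with respect to its ADMG. First I would exhibit the structural equations coordinate by coordinate, reading them off the recursive definition of the potential outcomes: a non-decision vertex satisfies $V_j(g) = f_j(V_{\pa(j)}(g), E_j)$, the natural counterfactual satisfies $A_t^-(g) = f_j(V_{\pa(j)}(g), E_j)$ for $V_j = A_t$, and the potential-outcome half satisfies $A_t(g) = g_t(S_t(g), E^*_t)$. I would then check that, after the vertex split, the directed-parent set of each vertex in $\gG(g)$ is precisely the argument list of its equation: $A_t^-(g)$ inherits the incoming edges of $A_t$ (hence parents $\pa(A_t)$, relabeled), $A_t(g)$ has parents contained in $S_t(g)$ as prescribed by $g_t$, and every other $V_j(g)$ keeps its parents with each decision parent $A_t$ replaced by the outgoing half $A_t(g)$. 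Since $A_t^-(g)$ has no children and the split respects the topological order $\prec$, the graph $\gG(g)$ is acyclic, so this is a bona fide NPSEM recursion.

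Next I would verify the noise and bidirected structure. The joint noise vector is $(E_1,\dots,E_p,E^*_1,\dots,E^*_T)$, where the $E^*_t$ are mutually independent and independent of all $E_j$. The bidirected subgraph of $\gG(g)$ is the bidirected subgraph of $\gG$ with each decision vertex $A_t$ relabeled as $A_t^-(g)$ (because $A_t^-(g)$ inherits the incoming bidirected edges of $A_t$), together with the vertices $A_t(g)$, which carry no bidirected edges and are therefore isolated. I would argue that the required global Markov property of the noise with respect to this bidirected subgraph reduces to two observations: the relabeling is a graph isomorphism on the non-decision and natural-counterfactual vertices, so the global Markov property of $(E_j)$ with respect to the bidirected subgraph of $\gG$ transfers verbatim; and the independence of each $E^*_t$ from everything else is exactly the independence encoded by $A_t(g)$ being isolated in the bidirected subgraph of $\gG(g)$.

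With both ingredients in place, I would conclude by appealing to the soundness of m-separation for NPSEMs---the fact that any distribution admitting a representation of the form \eqref{eq:npsem} is global Markov with respect to its ADMG, as in \textcite{richardson_markov_2003}---applied to $\gG(g)$ and the distribution of $(V(g), A^-(g))$.

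The main obstacle will be the bookkeeping of the vertex split, specifically matching the bidirected structure after separating each decision vertex into a noise-carrying half $A_t^-(g)$ and a freshly randomized half $A_t(g)$. An alternative, fully rigorous route that sidesteps the direct NPSEM-to-Markov appeal is to introduce the canonical DAG obtained by replacing every bidirected edge $V_j \bdedge V_k$ of $\gG$ with a latent common parent, apply the DAG SWIG result of \textcite{richardson_single_2013} to obtain that the enlarged counterfactual distribution is d-separation Markov with respect to the DAG SWIG, and then note that $\gG(g)$ is the latent projection of that DAG SWIG, using that latent projection preserves m-separation to descend to the marginal over $(V(g),A^-(g))$. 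In either route the crux is the same: correctly tracking how bidirected edges incident to decision vertices are reassigned to the $A_t^-(g)$ halves while $A_t(g)$ remains bidirected-isolated.
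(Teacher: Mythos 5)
Your proposal is correct and follows essentially the same route as the paper: you exhibit $(V(g), A^-(g))$ as an NPSEM whose associated ADMG is exactly the d-SWIG $\gG(g)$ (your detailed bookkeeping of the vertex split is what the paper compresses into a single sentence), and then invoke soundness of m-separation for such models---the paper cites Theorem 3 of \textcite{zhao_admgs_2024} (NPSEM implies nested, hence global, Markov) where you cite \textcite{richardson_markov_2003}. One caution on your fallback route: replacing bidirected edges by latent common parents covers a strictly narrower model class than the paper's, since the paper only requires the error distribution to be globally Markov with respect to the bidirected subgraph of $\gG$, which is weaker than the errors actually arising from latent common causes.
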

Next, we introduce a notion of \emph{consistency} that links
potential outcomes to observed outcomes. The causal inference
literature usually considers static interventions, for which
consistency (e.g.\ $V_j = V_j(V_{\pa(j)})$) naturally arises from the
structural equations. That is, when the observed and
counterfactual values of the parents agree, the observed and potential
outcome should coincide. Because we are considering adaptive
interventions, our consistency notion is slightly more complicated but folows from the same
underlying principle. Denote the future innovations at time $t$ as
$\underline{N}_t = (N_t, N_{t+1}, \dots, N_T)$ and future actions as
$\underline{A}_t = (A_t, A_{t+1}, \dots, A_T)$. Let $\underline{g}_t =
(\text{null}, \dots, \text{null}, g_t, g_{t+1}, \dots, g_T)$ denote
the ``sub-policy'' that only intervenes at and after time $t$. 

\begin{assumption}[Dynamic consistency] \label{assump:dynamic-consistency}
For any $t \in [T+1]$, we have $V_j(\underline{g}_t) = V_j
$ for every $V_j \prec A_{t}$ and the following recursion
    \begin{equation} \label{eq:dynamic-consistency}
        \begin{split}
            &\PP(\underline{N}_t(\underline{g}_{t-1}) =
              \underline{n}_t, \underline{A}_t(\underline{g}_{t-1}) =
              \underline{a}_t \mid A_{t-1}^{-}(\underline{g}_{t-1}) =
              A_{t-1}(\underline{g}_{t-1}) = a_t,
              S_{t-1}(\underline{g}_{t-1}) = s_{t-1}) \\
            = &\PP(\underline{N}_t(\underline{g}_{t}) =
                \underline{n}_t, \underline{A}_t(\underline{g}_{t}) =
                \underline{a}_t \mid A_{t-1}(\underline{g}_{t}) =
                a_{t-1}, S_{t-1}(\underline{g}_{t}) = s_{t-1}),
        \end{split}
      \end{equation}
      with the convention that $\underline{g}_{T+1}$ is the null
      intervention (so $V(\underline{g}_{T+1}) = V$) and $A_{T+1} =
      \emptyset$.
\end{assumption}

\begin{proposition} \label{prop:dynamic-consistency}
  \Cref{assump:dynamic-consistency} is true in the causal model
  defined above.
\end{proposition}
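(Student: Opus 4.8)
The plan is to work directly with the structural-equation definition of the potential outcomes and to exploit that $\underline{g}_{t-1}$ and $\underline{g}_t$ are identical policies except at the single decision $A_{t-1}$: there $\underline{g}_{t-1}$ injects $g_{t-1}(S_{t-1}(\underline{g}_{t-1}), E^*_{t-1})$ whereas $\underline{g}_t$ leaves $A_{t-1}$ at its natural value. Since every potential outcome is obtained, as a deterministic function of the underlying noise, by evaluating the equations in \eqref{eq:npsem} (with $A_s$ replaced by $g_s$ for $s \ge t$) in a topological order $\prec$, I can compare the two counterfactual worlds vertex-by-vertex. The consistency identity is then essentially a statement that an agreeing intervention is invisible, combined with the independence of the policy noise $E^*_{t-1}$ from everything else.

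First I would settle the assertion $V_j(\underline{g}_t)=V_j$ for $V_j \prec A_t$. Because $\underline{g}_t$ intervenes only at $A_t,\dots,A_T$, every $V_j \prec A_t$ is computed from the \emph{same} equations as under the full null policy, so a short induction along $\prec$ gives $V_j(\underline{g}_t)=V_j$, and likewise $V_j(\underline{g}_{t-1})=V_j$ for $V_j \prec A_{t-1}$. In particular $S_{t-1}(\underline{g}_{t-1})=S_{t-1}(\underline{g}_t)=S_{t-1}$ and $A_{t-1}^-(\underline{g}_{t-1})=A_{t-1}$, the observed value. Next, on the consistency event $\mathcal{C}:=\{A_{t-1}(\underline{g}_{t-1})=A_{t-1}^-(\underline{g}_{t-1})\}$ that appears in the left-hand conditioning, the value injected by $g_{t-1}$ coincides with the natural value $A_{t-1}$ that $\underline{g}_t$ uses. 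Since from $A_{t-1}$ onward both worlds share the same noise and the same policy $g_s$, $s \ge t$, a second induction along $\prec$ shows that on $\mathcal{C}$ one has $V_j(\underline{g}_{t-1})=V_j(\underline{g}_t)$ for \emph{all} $j$; in particular $\underline{N}_t(\underline{g}_{t-1})=\underline{N}_t(\underline{g}_t)$ and $\underline{A}_t(\underline{g}_{t-1})=\underline{A}_t(\underline{g}_t)$ there. This lets me replace every $\underline{g}_{t-1}$ by $\underline{g}_t$ inside the left-hand density, rewriting the left side as $\PP(\underline{N}_t(\underline{g}_t)=\underline{n}_t,\underline{A}_t(\underline{g}_t)=\underline{a}_t \mid \mathcal{C}, A_{t-1}=a_{t-1}, S_{t-1}=s_{t-1})$.

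It then remains to drop the extra event $\mathcal{C}$ from the conditioning. After conditioning on $\{A_{t-1}=a_{t-1}, S_{t-1}=s_{t-1}\}$, the event $\mathcal{C}$ becomes $\{g_{t-1}(s_{t-1},E^*_{t-1})=a_{t-1}\}$, a function of $E^*_{t-1}$ alone. On the other hand, because $\underline{g}_t$ never uses $E^*_{t-1}$, the quantities $\underline{N}_t(\underline{g}_t),\underline{A}_t(\underline{g}_t)$ together with the conditioning variables $A_{t-1},S_{t-1}$ are all deterministic functions of $(E_1,\dots,E_p,E^*_t,\dots,E^*_T)$. By the modelling assumption that $(E^*_1,\dots,E^*_T)$ is independent of $(E_1,\dots,E_p)$ (and the $E^*$'s are mutually independent), $E^*_{t-1}$ is independent of this entire collection, so $\mathcal{C} \indep (\underline{N}_t(\underline{g}_t),\underline{A}_t(\underline{g}_t)) \mid A_{t-1},S_{t-1}$. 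Removing $\mathcal{C}$ then yields exactly the right-hand side, using $A_{t-1}(\underline{g}_t)=A_{t-1}^-(\underline{g}_t)=A_{t-1}$. The boundary cases ($t=T+1$ with $\underline{g}_{T+1}$ the null policy, and the degenerate $t=1$) follow from the same bookkeeping together with the stated conventions.

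The main obstacle I anticipate is not a single deep step but the careful bookkeeping: making the two inductions along $\prec$ precise (exactly which variables coincide, and on which event), and cleanly partitioning the noise so that $\mathcal{C}$ is measurable with respect to $E^*_{t-1}$ while every quantity under $\underline{g}_t$ is measurable with respect to the complementary, independent noise. Phrasing the final independence argument at the level of conditional \emph{densities}, rather than probabilities of positive-measure events, is where I would be most careful, since that is what ultimately licenses dropping $\mathcal{C}$.
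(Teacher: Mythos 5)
Your proof is correct, and its core---the vertex-by-vertex induction along a topological order showing that, on the agreement event $\mathcal{C}=\{A_{t-1}(\underline{g}_{t-1})=A^-_{t-1}(\underline{g}_{t-1})\}$, every later potential outcome under $\underline{g}_{t-1}$ coincides with its counterpart under $\underline{g}_t$---is exactly the paper's argument. Where you go beyond the paper is the final step. The paper's proof stops at this pointwise coincidence and silently treats the two conditional densities in \eqref{eq:dynamic-consistency} as equal, even though the left side conditions on $\mathcal{C}\cap\{A_{t-1}=a_{t-1},S_{t-1}=s_{t-1}\}$ while the right side conditions on the strictly larger event $\{A_{t-1}=a_{t-1},S_{t-1}=s_{t-1}\}$. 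Your explicit argument for dropping $\mathcal{C}$---that, given $A_{t-1}=a_{t-1}$ and $S_{t-1}=s_{t-1}$, the event $\mathcal{C}$ is a function of $E^*_{t-1}$ alone, while every quantity appearing under $\underline{g}_t$ is a function of $(E_1,\dots,E_p,E^*_t,\dots,E^*_T)$---supplies exactly the missing step, and your parenthetical caveat is substantive rather than cosmetic: the argument needs $E^*_{t-1}\indep(E_1,\dots,E_p,E^*_t,\dots,E^*_T)$, which is slightly stronger than the paper's literal assumption that the vector of policy noises is jointly independent of the vector of structural noises (that statement permits arbitrary dependence among the $E^*_s$ themselves). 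This strengthening is in fact necessary: if, for instance, $E^*_{t-1}=E^*_t$ and both $g_{t-1},g_t$ are pure randomizations, then conditioning on $A_{t-1}(\underline{g}_{t-1})=a_{t-1}$ pins down $E^*_t$, so the left side of \eqref{eq:dynamic-consistency} forces $A_t(\underline{g}_{t-1})=a_{t-1}$ while the right side leaves $A_t(\underline{g}_t)$ fully randomized, and the proposition fails. In short: same route as the paper, but your version completes a step the paper elides and isolates the (natural, and genuinely needed) independence condition on the policy noise that makes it go through.
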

Intuitively, dynamic consistency says that given
no intervention takes place before time $t-1$ and the natural
counterfactual and policy intervention values of the decision
$A_{t-1}$ agree, we can ignore the policy intervention at time $t-1$
and treat future data as being generated from
$\PP(\underline{g}_t)$. 

\section{Identification of the value of a policy}\label{sec:identication}
In this section we discuss our recursive identification strategy in more detail
and prove \Cref{thm:main_result}. 
Let us first introduce a milder unconfoundedness
assumption that is sufficient for proving the main theorem.
\begin{assumption*}{\ref{assump:no-confounding-a}*}[Dynamic unconfoundedness] \label{assump:no-confounding-b}
    The following m-separations are true:
    \begin{align*} 
        \textnot N_{t+1}(\underline{g}_{k+1}) \confpath A_k \mid S_k
      \ingraph{\gG(\underline{g}_{k+1})}, ~ k \leq t \leq T.
    \end{align*}
\end{assumption*}
\begin{proposition}\label{prop:strong-unconfoundedness-implies-weak}
    Under Assumption \ref{assump:nested-states}, Assumption \ref{assump:no-confounding-a} implies Assumption \ref{assump:no-confounding-b}.
\end{proposition}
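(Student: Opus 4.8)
The plan is to prove the contrapositive by a walk-lifting argument that carries an offending walk in the intervention graph $\gG(\underline{g}_{k+1})$ back to $\gG$. Suppose Assumption~\ref{assump:no-confounding-b} fails: for some $k \leq t \leq T$ there is a walk $w$ from $N_{t+1}(\underline{g}_{k+1})$ to $A_k$ with an arrowhead into $A_k$ that is not blocked by $S_k$ in $\gG(\underline{g}_{k+1})$. From $w$ I will build an unblocked back-door walk from $A_k$ to $N_{t+1}$ in $\gG$, contradicting \Cref{assump:no-confounding-a}. As a preliminary I would recast \Cref{assump:no-confounding-a} in walk language: by the equivalence between ancestral blocking of paths and ordinary blocking of walks (\Cref{subsec:m-separation}; \textcite{shachterBayesballRationalPastime1998,guo_confounder_2023}), the back-door condition says exactly that in $\gG$ every walk from $A_k$ to $N_{t+1}$ bearing an arrowhead into $A_k$ is blocked by $S_k$. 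The usual loop-removal reduction from a walk to a path leaves the orientation at the $A_k$ endpoint intact, so restricting attention to back-door walks is legitimate.

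The crux is to understand how $w$ can meet the split decision vertices, recalling that $\underline{g}_{k+1}$ leaves $A_k$ and everything preceding it untouched and splits only $A_{k+1},\dots,A_T$. Each natural-counterfactual half $A_j^-(g)$ has no children, hence is a collider at every internal occurrence; being a counterfactual of a decision at time $j > k$ it cannot lie in $S_k$, so such a collider would block $w$. Thus $A_j^-(g)$ never occurs on $w$. Each potential-outcome half $A_j(g)$ carries no bidirected edges and receives only the policy edges out of $S_j$; a collider there would again sit outside $S_k$ and block $w$, so $A_j(g)$ occurs on $w$ only as a non-collider. Finally, since $A_k$ is not split, the terminal edge of $w$ into $A_k$---and likewise every edge incident to the unsplit vertex $N_{t+1}(\underline{g}_{k+1})$---is a genuine edge of $\gG$.

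I would then lift $w$ to $\gG$ by relabelling each $A_j(g)$ as $A_j$. Because the outgoing edges of $A_j(g)$ are exactly the child edges of $A_j$ in $\gG$, any non-collider passage through $A_j(g)$ that uses two child edges---a fork $V \ldedge A_j(g) \rdedge V'$---transfers verbatim to a non-collider at $A_j$ in $\gG$, and all remaining edges of $w$ are original edges that carry over unchanged. Whenever $w$ uses no policy edge, this already yields the required unblocked back-door walk in $\gG$.

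The sole obstruction, and the only point at which \Cref{assump:nested-states} enters, is a non-collider passage that enters $A_j(g)$ along a policy edge, i.e. a segment $p \rdedge A_j(g) \rdedge c$ with $p \in S_j$, since $p \rdedge A_j(g)$ may be absent from $\gG$. Two observations drive the repair. First, $p$ is traversed as a non-collider (it carries a tail towards $A_j(g)$), so for $w$ to stay unblocked we must have $p \notin S_k$. Second, iterating \Cref{assump:nested-states} gives $S_j \subseteq S_k \cup A_k \cup X_{k+1} \cup \dots \cup A_{j-1} \cup X_j$, so $p \notin S_k$ forces $p$ to be one of the later decisions or innovations. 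I would use this to reroute each policy edge inside $\gG$---recursing through the earlier split action $A_{j'}(g)$ when $p = A_{j'}(g)$, and otherwise anchoring on the genuine child edge $A_j \rdedge c$---while verifying that no vertex switches its collider/non-collider status relative to $S_k$, so that unblockedness is preserved throughout. Making this substitution rigorous for every policy passage simultaneously is the main technical hurdle; once it is discharged, the lifted walk is an unblocked back-door walk in $\gG$, contradicting the walk form of \Cref{assump:no-confounding-a} and hence proving Assumption~\ref{assump:no-confounding-b}.
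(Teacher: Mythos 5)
Your setup coincides with the paper's own proof strategy: argue from a putative unblocked confounding walk $w$ in $\gG(\underline{g}_{k+1})$, rule out occurrences of the natural counterfactuals $A_i^-(\underline{g}_{k+1})$ and of collider passages through $A_i(\underline{g}_{k+1})$, and use Assumption \ref{assump:nested-states} to place the parent $p \in S_i \setminus S_k$ of any policy edge among post-time-$k$ decisions and innovations. But there is a genuine gap exactly where you flag the ``main technical hurdle'': the policy-edge passage is never actually resolved, and the repair you sketch points in the wrong direction. You try to \emph{reroute} so as to recover an unblocked back-door walk in $\gG$ that still terminates at the original endpoint $N_{t+1}$. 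That cannot work in general: the policy edge $p \rdedge A_i(\underline{g}_{k+1})$ has no counterpart in $\gG$, and nothing connects $p$ to the child $c$ in $\gG$, so ``anchoring on the genuine child edge'' leaves a hole that cannot be spliced. (You also conflate two orientations of the passage $p \rdedge A_i(\underline{g}_{k+1}) \rdedge c$: when the child edge lies on the $A_k$ side of the walk, the configuration is outright impossible, because unblockedness would force a directed path from $A_i(\underline{g}_{k+1})$ back in time to $A_k \cup S_k$; this separate argument, the paper's cases 2--3, is absent from your proposal.)

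The paper's resolution is to \emph{truncate}, not reroute, and the enabling observation is that Assumption \ref{assump:no-confounding-a} quantifies over \emph{all} innovation times $\geq k$, not only $t+1$. Taking $w$ simple and $A_i(\underline{g}_{k+1})$ to be the right-most vertex entered by a policy edge, the parent $p$ cannot be a later potential-outcome vertex (that would reproduce the same pattern further to the right) nor $A_k$ itself (simplicity), so $p \in X_l$ for some $l$, and nested states give $p \in S_l \cap X_l \subseteq N_l$ with $k+1 \leq l \leq i$. Since $p \notin S_k$ is a non-collider and, being a post-$A_k$ variable, has no directed path back to $A_k \cup S_k$, the edge on its $A_k$ side must point into $p$; hence the sub-walk of $w$ from $p$ to $A_k$ is itself an unblocked confounding walk avoiding all split vertices. \Cref{lem:vertex-map} then maps this sub-walk to a walk $N_l \confpath A_k \mid S_k \ingraph{\gG}$, contradicting Assumption \ref{assump:no-confounding-a} applied to the pair $(A_k, N_l)$ rather than $(A_k, N_{t+1})$. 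Without this step of abandoning the endpoint $N_{t+1}$ in favour of a later innovation, your lifting argument does not close.
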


As will be seen in \Cref{sec:remarks-dynam-unconf} below, 
Assumption \ref{assump:no-confounding-a} is not much stronger than
Assumption \ref{assump:no-confounding-b}. We choose to present
Assumption \ref{assump:no-confounding-a} in the Introduction
because it is a condition on the original causal graph $\gG$ and is
thus easier to verify.

We now present two key conditional independences that follow from our
causal model and assumptions. Both results follow immediately from
applying \Cref{prop:g-markov} to the corresponding m-separations;
see the Online Supplement.
\begin{lemma}\label{lemma:innovation_action_unconfoundedness_gt}
    Assume that $R_t \subseteq S_t$ for $t \in [T]$. Under \Cref{assump:nested-states} and \ref{assump:no-confounding-b}, we have
    \begin{align*}
        (\underline{N}_{t+1} \cup
      \underline{A}_{t+1})(\underline{g}_t) \indep
      A^{-}_t(\underline{g}_t) \mid (A_t \cup S_t)(\underline{g}_t),~t
      \in [T].
    \end{align*}
\end{lemma}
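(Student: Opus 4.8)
The plan is to read the claimed conditional independence off the global Markov property of the d-SWIG, so that the whole argument collapses to verifying a single m-separation. By \Cref{prop:g-markov} the distribution of $(V(\underline{g}_t), A^-(\underline{g}_t))$ is global Markov with respect to $\gG(\underline{g}_t)$, and all of the variables appearing in the statement are vertices of this graph. Hence it suffices to establish
\[
  \textnot (\underline{N}_{t+1} \cup \underline{A}_{t+1})(\underline{g}_t) \mconn A_t^-(\underline{g}_t) \mid (A_t \cup S_t)(\underline{g}_t) \ingraph{\gG(\underline{g}_t)}
\]
and then invoke the Markov property; everything below takes place inside the fixed graph $\gG(\underline{g}_t)$, where I fix an arbitrary walk $w$ from $A_t^-(\underline{g}_t)$ to the future set and show it is blocked.

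I would start from two structural features of the d-SWIG. First, the natural counterfactual $A_t^-(\underline{g}_t)$ inherits only the incoming edges of $A_t$ and has no children, so every walk leaving it begins with an arrowhead into $A_t^-(\underline{g}_t)$; these are exactly the back-door walks that dynamic unconfoundedness is designed to kill. Second, the potential outcome $A_t(\underline{g}_t)$ carries no bidirected edge, has all of its parents inside $S_t$, and lies in the conditioning set. Therefore any walk visiting $A_t(\underline{g}_t)$ is blocked there: either it is a non-collider (and is itself conditioned on), or it is a collider reached through two parent edges, in which case the adjacent parent lies in $S_t$ and, carrying a tail along the walk, is a blocking non-collider. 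Consequently no walk can cross from the incoming side of $A_t$ to its downstream side, and the split at time $t$ is the only difference between $\gG(\underline{g}_t)$ and $\gG(\underline{g}_{t+1})$.

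With these observations in hand, a walk from $A_t^-(\underline{g}_t)$ to a future innovation $N_s(\underline{g}_t)$, $t+1 \le s \le T+1$, maps to a walk into $A_t$ in $\gG(\underline{g}_{t+1})$. Taking $k=t$ in \Cref{assump:no-confounding-b} and letting its bound index run over $\{t,\dots,T\}$ supplies an m-separation of $A_t$ from \emph{every} such $N_s$ given $S_t$ in $\gG(\underline{g}_{t+1})$—not merely from $N_{t+1}$—so all these walks are blocked. Enlarging the conditioning set from $S_t$ to $(A_t \cup S_t)(\underline{g}_t)$ is harmless under the walk-based blocking definition of \Cref{subsec:m-separation}: the only colliders it newly opens sit exactly at $A_t(\underline{g}_t)$, and these were just shown to be re-blocked by their $S_t$-parents. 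This disposes of all future innovations.

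The remaining and most delicate case is a walk terminating at a future action $A_s(\underline{g}_t)$, $s \ge t+1$, which \Cref{assump:no-confounding-b} does not address directly. Here I would argue by backward induction along $w$: since $A_s(\underline{g}_t)$ has no bidirected edges and all of its parents lie in $S_s$, iterating \Cref{assump:nested-states} forces each parent to be a vertex of $S_t$ (in the conditioning set, blocking as a non-collider), an earlier action $A_u(\underline{g}_t)$ with $t \le u < s$ (with $u=t$ blocking at $A_t(\underline{g}_t)$ and $u>t$ handled by recursion), or a future innovation $N_u(\underline{g}_t)$ (reducing to the previous paragraph, since truncating an unblocked walk at an interior vertex leaves it unblocked, contradicting the innovation m-separation). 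The hypothesis $R_t \subseteq S_t$ is what makes $N_u = S_u \cap X_u$, so this trichotomy is exhaustive, and the recursion terminates because it strictly decreases the time index. The main obstacle I anticipate is precisely this bookkeeping for the future actions—making the reduction to $\gG(\underline{g}_{t+1})$ and the truncation argument rigorous, and confirming that adding $A_t(\underline{g}_t)$ to the conditioning set never opens a new route—since this is where the SWIG splitting, nestedness, and the walk-based notion of blocking all interact.
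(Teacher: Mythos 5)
Your overall strategy is the same as the paper's: reduce the claimed independence to an m-separation in $\gG(\underline{g}_t)$ via \Cref{prop:g-markov}; show walks through $A_t(\underline{g}_t)$ are blocked either at $A_t$ itself or at one of its parents in $S_t$; map the surviving walks into $\gG(\underline{g}_{t+1})$ (this is the paper's \Cref{lem:vertex-map}); and reduce walks ending at future actions to walks ending at future innovations via \Cref{assump:nested-states} (the paper's \Cref{lem:special-walk-1} does this by taking the right-most future action rather than your backward recursion, but the idea is the same). Your reading of Assumption \ref{assump:no-confounding-b} is also correct: with $k=t$ it separates $A_t$ from every $N_s$, $t+1 \le s \le T+1$, not just $N_{t+1}$. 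One small inaccuracy: the trichotomy for a parent in $S_s$ does not need $R_t \subseteq S_t$, since $S_u \cap X_u \subseteq N_u$ holds by definition \eqref{eq:innovations}; indeed the paper's graphical version of this lemma drops that hypothesis entirely.

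There are, however, two concrete gaps, and both trace to the same missing ingredient: you never invoke acyclicity, i.e.\ that directed edges respect the temporal order. First, Assumption \ref{assump:no-confounding-b} forbids only unblocked \emph{confounding} walks ($\confpath$), i.e.\ walks with arrowheads into both endpoints; it says nothing about walks with a tail at the innovation endpoint. So after mapping into $\gG(\underline{g}_{t+1})$ you must still argue the image walk has an arrowhead into $N_s$. The paper does this by noting $\underline{N}_{t+1}(\underline{g}_t) \nordpath (A_t^- \cup A_t \cup S_t)(\underline{g}_t)$: a tail at $N_s$ would start a directed walk that can only terminate at a collider in the conditioning set or at the other endpoint, both of which lie strictly earlier in time. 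Second, your recursion for a walk ending at a future action $A_s(\underline{g}_t)$ treats only the case where the final edge is a parent edge $P \rdedge A_s$ with $P \in S_s$; the walk could instead reach $A_s$ from a child, i.e.\ end with an edge $A_s \rdedge C$ traversed backward, a case your trichotomy does not cover. This is excluded by the same temporal argument (the parenthetical in the paper's proof of \Cref{lem:special-walk-1}: such an ending would force a directed walk from $A_s$ back in time to the conditioning set or the other endpoint). Both repairs are routine, so your outline is salvageable, but as written the case analysis is not exhaustive and the invocation of Assumption \ref{assump:no-confounding-b} is not yet licensed.
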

Intuitively, \Cref{lemma:innovation_action_unconfoundedness_gt} states
that the future potential outcomes of innovations and actions are
independent of the actual treatment at time $t$ given the current
state and action. This is commonly referred to as \emph{no
  confoundedness} or \emph{ignorability} in the causal inference
literature. We will compare our version in more detail to the
well-known assumption of \emph{sequential ignorability} from the DTR
literature in \Cref{sec:dtr}.
\begin{lemma}\label{lemma:innovation_action_memorylessness_gt}
    Assume that $R_t \subseteq S_t$ for $t \in [T]$. Under Assumptions \ref{assump:nested-states} and \ref{assump:memorylessness}, we have
    \begin{align*}
        (\underline{N}_{t+1} \cup
      \underline{A}_{t+1})(\underline{g}_t) \indep (A_{t-1} \cup
      S_{t-1}) \setminus S_t \mid (A_t \cup
      S_t)(\underline{g}_t),~\text{for all}~t \in [T].
    \end{align*}
\end{lemma}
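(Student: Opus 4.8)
The plan is to reproduce, for memorylessness, the exact template that \Cref{lemma:innovation_action_unconfoundedness_gt} uses for unconfoundedness: turn the claimed conditional independence into an m-separation in the d-SWIG $\gG(\underline{g}_t)$ and then read that m-separation off \Cref{assump:memorylessness}. First I would apply \Cref{prop:g-markov}, which makes the law of $(V(\underline{g}_t), A^-(\underline{g}_t))$ global Markov with respect to $\gG(\underline{g}_t)$. Every quantity in the statement is a relabelled coordinate of this vector: the left-hand side is $(\underline{N}_{t+1}\cup\underline{A}_{t+1})(\underline{g}_t)$, the conditioning set is $(A_t\cup S_t)(\underline{g}_t)$, and since $A_{t-1}$, $S_{t-1}$ and $S_t$ all precede $A_t$, the ``$V_j(\underline{g}_t)=V_j$ for $V_j\prec A_t$'' half of \Cref{assump:dynamic-consistency} identifies $(A_{t-1}\cup S_{t-1})\setminus S_t$ with its own $\underline{g}_t$-potential outcome. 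It therefore suffices to establish
\[
  \textnot (\underline{N}_{t+1}\cup\underline{A}_{t+1})(\underline{g}_t)\mconn (A_{t-1}\cup S_{t-1})\setminus S_t \mid (A_t\cup S_t)(\underline{g}_t)\ingraph{\gG(\underline{g}_t)}.
\]

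Next I would relate this to the one-step separation that \Cref{assump:memorylessness} actually supplies. Using $R_t\subseteq S_t$, \eqref{eq:innovations} collapses to $N_s=S_s\cap X_s$, so each future innovation is a state component; combined with the recursion $S_s\subseteq S_{s-1}\cup A_{s-1}\cup X_s$ of \Cref{assump:nested-states}, every future state splits into an ``old'' part contained in $(A_t\cup S_t)$ and a ``novel'' part equal to some $N_s$ with $s\geq t+1$. The future actions $A_s(\underline{g}_t)$ are randomized and feed on the past only through $S_s$, so they inherit whatever separation the innovations enjoy, and the target separated set $(A_{t-1}\cup S_{t-1})\setminus S_t$ is a subset of the full discarded history $(\overline{S}_{t-1}\cup\overline{A}_{t-1})\setminus S_t$ in \Cref{assump:memorylessness}, to which separation of the larger set descends.

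Two structural facts then move the separation into the d-SWIG and absorb the extra conditioning on $A_t$. Passing from $\gG$ to $\gG(\underline{g}_t)$ only deletes or reroutes edges at the decision vertices $A_s$, $s\geq t$ (the potential outcome $A_s(\underline{g}_t)$ keeps only policy parents inside $S_s$ and carries no bidirected edge, while $A_s^-(\underline{g}_t)$ is childless), so any walk surviving in $\gG(\underline{g}_t)$ re-merges to a walk in $\gG$; hence an m-separation in $\gG$ yields the corresponding one in $\gG(\underline{g}_t)$. Moreover, enlarging the conditioning set from $S_t$ to $(A_t\cup S_t)$ is harmless: since $A_t(\underline{g}_t)$ has no bidirected edge and all its parents lie in $S_t$, any walk that opens a collider at $A_t$ must continue into a parent in $S_t\subseteq(A_t\cup S_t)$ and is blocked there, so no new route to the past is created.

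The main obstacle will be upgrading the one-step content of \Cref{assump:memorylessness} (which constrains only $N_{t+1}$ against the past given $S_t$) to the \emph{entire} future while the conditioning set stays fixed at $(A_t\cup S_t)$. I would argue by contradiction: given a walk from a future vertex to some $P\in(A_{t-1}\cup S_{t-1})\setminus S_t$ that is unblocked given $(A_t\cup S_t)$, trace it to the first point at which it crosses back to time $\leq t$; acyclicity and the temporal order force this crossing either into $(A_t\cup S_t)$, where it is blocked, or into a novel vertex $N_s$ with $s\geq t+1$. Memorylessness at the corresponding time then forces any past vertex feeding that innovation to lie in the state, and the no-re-entry property of nested states ($S_s\ni P$ with $P$ at time $\leq t-1$ implies $P\in S_t$) contradicts $P\notin S_t$. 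The delicate point is that memorylessness at time $s-1$ conditions on $S_{s-1}$, whereas the walk is only assumed unblocked given the smaller fixed set $(A_t\cup S_t)$, which omits the intervening novel vertices $N_{t+1},\dots,N_{s-1}$; reconciling this conditioning-set mismatch---carefully handling colliders whose descendants sit in $(A_t\cup S_t)$ and the irrelevant $X_s\setminus N_s$ coordinates that may appear as intermediate non-colliders---is what makes the ``first-crossing'' reduction genuinely require an induction on the latest future time, with the remaining bookkeeping being routine.
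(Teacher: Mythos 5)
Your outer scaffolding coincides with the paper's: apply \Cref{prop:g-markov} to reduce the claim to an m-separation in $\gG(\underline{g}_t)$, strip the future actions $\underline{A}_{t+1}(\underline{g}_t)$ from any offending walk (this is the paper's \Cref{lem:special-walk-1}), show the walk cannot pass through $A_t(\underline{g}_t)$ so that it may be dropped from the conditioning set, and then map the walk into $\gG$ (the paper's \Cref{lem:vertex-map}). You also isolate the right crux: \Cref{assump:memorylessness} at a later time $j$ conditions on $S_j$, whereas the offending walk is only known to be unblocked given $S_t$.

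However, precisely that crux is left unproved, and the ``first-crossing'' sketch you offer in its place fails as stated. The dichotomy ``the walk crosses back to time $\leq t$ either into $(A_t\cup S_t)$ or at a novel vertex $N_s$'' is false: an unblocked walk may run from $N_{j+1}$ into the past through an intermediate vertex $W \in X_s \setminus N_s$ that belongs to no state and no reward (for instance $N_{j+1} \ldedge W \ldedge P$ with $P \in (A_{t-1}\cup S_{t-1})\setminus S_t$), and such vertices are constrained by neither \Cref{assump:nested-states} nor \Cref{assump:memorylessness}. Likewise, ``memorylessness forces any past vertex feeding that innovation to lie in the state'' reads the assumption as a statement about parents, when it is an m-separation statement about all walks. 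The paper's proof of \Cref{lemma:innovation_action_memorylessness_gt_graph} resolves the mismatch by a different device: take $j$ minimal such that a simple walk from $N_{j+1}$ to $(A_{t-1}\cup S_{t-1})\setminus S_t$, unblocked given $S_t$, exists in $\gG$ (memorylessness at time $t$ forces $j > t$), then split according to how this walk meets the symmetric difference of $S_j$ and $S_t$, not according to where it crosses in time. If it avoids $(S_j\setminus S_t)\cup(S_t\setminus S_j)$ entirely, it is also unblocked given $S_j$ and contradicts memorylessness at time $j$ (your no-re-entry observation guarantees the past endpoint lies outside $S_j$). If it meets $S_t\setminus S_j$, the left-most such vertex must be a collider, so truncating there produces a walk unblocked given $S_j$ ending in $(\overline{S}_{j-1}\cup\overline{A}_{j-1})\setminus S_j$, again contradicting memorylessness. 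If it meets $S_j\setminus S_t$, that vertex lies in some $N_i$ with $t+1\leq i\leq j$ by nestedness (no action can appear on the walk), contradicting the minimality of $j$. This three-way case analysis is the mathematical substance of the lemma rather than routine bookkeeping, so the proposal has a genuine gap at its center.
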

\Cref{lemma:innovation_action_memorylessness_gt} states that the
future potential outcomes of innovations and actions are independent
of the history given the current state and action. In the MDP
literature, this is referred to as the \emph{Markov property}, which
requires that the future is independent of the past given the
present. Note that in our setting, there may be other variables in
$V(\underline{g}_t)$ that are not
independent of the future innovations and actions, but we only require
independence from the history of previous actions and
states. 

Finally, before proving our main result, we need a technical but
necessary assumption that ensures a non-zero probability of treatment
given the current state. This is commonly referred to as
\emph{positivity} \parencite{hernan_causal_nodate}. This is a standard
assumption in the causal inference literature, and thus is
implicit in the Introduction. 
\begin{assumption}[Positivity]\label{assump:positivity}
  Given the state $S_t$, the probability density of $A_t$
  under $\PP$ is strictly positive, that is,
  $
    \PP(A_{t} = a_t \mid S_{t} = s_t) > 0 ~ \text{for all $a_t, s_t$
      and $t \in [T]$}
   $.
\end{assumption}

We are now ready to state the main result in this section.
\begin{theorem}\label{thm:innovation_recursion}
    Assume that $R_t \subseteq S_t$ for $t \in [T]$. Under Assumptions
    \ref{assump:nested-states}, \ref{assump:memorylessness},
    \ref{assump:no-confounding-b}, \ref{assump:dynamic-consistency},
    and \ref{assump:positivity}, we have the following recursion for
    all $t \in [T+1]$:  
    \begin{equation}
      \label{eq:innovation_recursion}
      \begin{split}
        &\mathbb{P}(\underline{N}_{t}(\underline{g}_{t-1}) = \underline{n}_{t}, \underline{A}_{t}(\underline{g}_{t-1}) = \underline{a}_{t} \mid A_{t-1}(\underline{g}_{t-1}) = a_{t-1}, S_{t-1} = s_{t-1}) \\
        = &\mathbb{P}(\underline{N}_{t+1}(\underline{g}_t) = \underline{n}_{t+1}, \underline{A}_{t+1}(\underline{g}_t) = \underline{a}_{t+1} \mid A_{t}(\underline{g}_t) = a_t, S_t = s_t) g_t(a_t \mid s_t)\PP(n_t \mid a_{t-1}, s_{t-1}),
      \end{split}
    \end{equation}
    in which we use the convention that $A_0$, $S_0$,
    $N_{T+2}$, $S_{T+1}$, $A_{T+1}$, $A_{T+2}$ are empty.
  \end{theorem}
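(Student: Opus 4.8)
The plan is to prove the one-step recursion for a fixed $t$; the full identification in \Cref{thm:main_result} then follows by iterating it. Throughout I write all potential outcomes under the sub-policy indicated and repeatedly use the chain rule for densities, which is legitimate because positivity (\Cref{assump:positivity}) guarantees that every conditioning event I introduce has strictly positive density. Two preliminary observations organize the argument. First, since $R_t \subseteq S_t$ we have $N_t = S_t \cap X_t \subseteq S_t$, so conditioning on $N_t = n_t$ is redundant once we condition on $S_t = s_t$. Second, nestedness (\Cref{assump:nested-states}) gives $S_t \subseteq S_{t-1} \cup A_{t-1} \cup X_t$, so the value $s_t$ is completely determined by $(n_t, a_{t-1}, s_{t-1})$; I will therefore freely append $S_t = s_t$ to any conditioning set that already contains these.

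The first and most delicate step converts the left-hand side, which lives in the $\underline{g}_{t-1}$ world, into the $\underline{g}_t$ world. I would apply \Cref{lemma:innovation_action_unconfoundedness_gt} at index $t-1$, namely $(\underline{N}_t \cup \underline{A}_t)(\underline{g}_{t-1}) \indep A_{t-1}^-(\underline{g}_{t-1}) \mid (A_{t-1} \cup S_{t-1})(\underline{g}_{t-1})$. This says that conditioning on $A_{t-1}(\underline{g}_{t-1}) = a_{t-1}$ and $S_{t-1} = s_{t-1}$ already fixes the conditional law of $(\underline{N}_t, \underline{A}_t)(\underline{g}_{t-1})$, so I may append the event $A_{t-1}^-(\underline{g}_{t-1}) = a_{t-1}$ for free. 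With the natural counterfactual now pinned to the policy value, dynamic consistency (\Cref{assump:dynamic-consistency}, \eqref{eq:dynamic-consistency}) applies verbatim and rewrites the expression as $\PP((\underline{N}_t, \underline{A}_t)(\underline{g}_t) = (\underline{n}_t, \underline{a}_t) \mid A_{t-1}(\underline{g}_t) = a_{t-1}, S_{t-1} = s_{t-1})$, so that everything now sits under $\underline{g}_t$.

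Once in the $\underline{g}_t$ world I would factorize by the chain rule into the three target factors, respecting the topological order $N_t \prec A_t \prec (\underline{N}_{t+1}, \underline{A}_{t+1})$. Because $\underline{g}_t$ intervenes only from time $t$, dynamic consistency gives $N_t(\underline{g}_t) = N_t$ (and likewise $A_{t-1}(\underline{g}_t) = A_{t-1}$, $S_{t-1}(\underline{g}_t) = S_{t-1}$), so the leading factor $\PP(N_t(\underline{g}_t) = n_t \mid A_{t-1} = a_{t-1}, S_{t-1} = s_{t-1})$ is simply the observational $\PP(n_t \mid a_{t-1}, s_{t-1})$. Appending $S_t = s_t$ and peeling off $A_t(\underline{g}_t)$ next, the definition of the policy in the causal model—$A_t(\underline{g}_t)$ is a function of $S_t(\underline{g}_t)$ and exogenous noise, hence (via \Cref{prop:g-markov}) independent of the past given $S_t$—identifies this factor as $g_t(a_t \mid s_t)$. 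What remains is $\PP((\underline{N}_{t+1}, \underline{A}_{t+1})(\underline{g}_t) \mid A_t = a_t, N_t = n_t, A_{t-1} = a_{t-1}, S_{t-1} = s_{t-1}, S_t = s_t)$.

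Finally I would strip the redundant history from this last factor. Conditioning on $N_t = n_t$ is subsumed by $S_t = s_t$, and \Cref{lemma:innovation_action_memorylessness_gt} states $(\underline{N}_{t+1} \cup \underline{A}_{t+1})(\underline{g}_t) \indep (A_{t-1} \cup S_{t-1}) \setminus S_t \mid (A_t \cup S_t)(\underline{g}_t)$, which lets me drop the parts of $A_{t-1}$ and $S_{t-1}$ lying outside $S_t$; the factor collapses to $\PP((\underline{N}_{t+1}, \underline{A}_{t+1})(\underline{g}_t) \mid A_t(\underline{g}_t) = a_t, S_t = s_t)$. Multiplying the three factors reproduces the right-hand side. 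The boundary cases are handled by the stated conventions: at $t = 1$ the sets $A_0, S_0$ are empty, so the unconfoundedness/consistency step is vacuous, while at $t = T+1$ the policy and future factors are empty and only the observational factor survives, which the same reasoning produces. I expect the genuine obstacle to be the first step: one must check that appending $A_{t-1}^-(\underline{g}_{t-1}) = a_{t-1}$ preserves positive density of the conditioning event (again by \Cref{assump:positivity}) before dynamic consistency can be invoked, since this is precisely where unconfoundedness, consistency, and positivity interlock.
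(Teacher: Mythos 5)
Your proposal is correct and follows essentially the same route as the paper's proof: the same four-step chain of equalities, using \Cref{lemma:innovation_action_unconfoundedness_gt} with positivity to append the event $A_{t-1}^-(\underline{g}_{t-1}) = a_{t-1}$, then dynamic consistency to pass to the $\underline{g}_t$ world, then the chain-rule factorization (with $R_t \subseteq S_t$ making $N_t \subseteq S_t$ and $s_t$ determined by $(n_t, a_{t-1}, s_{t-1})$), and finally \Cref{lemma:innovation_action_memorylessness_gt} to drop $(A_{t-1} \cup S_{t-1}) \setminus S_t$ from the conditioning set. Your added remarks on the boundary conventions and on why the policy factor equals $g_t(a_t \mid s_t)$ are consistent with what the paper leaves implicit.
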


\begin{proof}[Proof of
  \Cref{thm:main_result,thm:innovation_recursion}]
  \Cref{thm:innovation_recursion} follows from a sequence of equalities:
    \begin{align*}
        & \mathbb{P}(\underline{N}_{t}(\underline{g}_{t-1}) = \underline{n}_{t}, \underline{A}_{t}(\underline{g}_{t-1}) = \underline{a}_{t} \mid A_{t-1}(\underline{g}_{t-1}) = a_{t-1}, S_{t-1} = s_{t-1})\\
        =& \mathbb{P}(\underline{N}_{t}(\underline{g}_{t-1}) = \underline{n}_{t}, \underline{A}_{t}(\underline{g}_{t-1}) = \underline{a}_{t} \mid A_{t-1}^{-}(\underline{g}_{t-1}) = A_{t-1}(\underline{g}_{t-1}) = a_{t-1}, S_{t-1} = s_{t-1}) \\
        =& \mathbb{P}(\underline{N}_{t}(\underline{g}_t) = \underline{n}_{t}, \underline{A}_{t}(\underline{g}_t) = \underline{a}_{t} \mid A_{t-1} = a_{t-1}, S_{t-1} = s_{t-1}) \\
        =& \mathbb{P}(\underline{N}_{t+1}(\underline{g}_{t}) = \underline{n}_{t+1}, \underline{A}_{t+1}(\underline{g}_{t}) = \underline{a}_{t+1} \mid A_{t}(\underline{g}_t) = a_t, S_t = s_t, (A_{t-1} \cup S_{t-1}) \setminus S_t = (a_{t-1} \cup s_{t-1}) \setminus s_t)\\
        &g_t(a_t \mid s_t)\PP(n_t \mid a_{t-1}, s_{t-1}) \\
        =& \mathbb{P}(\underline{N}_{t+1}(\underline{g}_t) = \underline{n}_{t+1}, \underline{A}_{t+1}(\underline{g}_t) = \underline{a}_{t+1} \mid A_{t}(\underline{g}_t) = a_t, S_t = s_t) g_t(a_t \mid s_t)\PP(n_t \mid a_{t-1}, s_{t-1}).
    \end{align*}
    The first equality uses
    \Cref{lemma:innovation_action_unconfoundedness_gt} and positivity,
    the second uses dynamic
    consistency, the third equality factorizes the density 
    and uses $R_t \subseteq S_t$, and the fourth uses
    \Cref{lemma:innovation_action_memorylessness_gt}.

\Cref{thm:main_result} then follows from recursively applying
\Cref{thm:innovation_recursion} by noting that the left hand side of the
identification formula in \Cref{thm:main_result} is exactly the
left hand side of \eqref{eq:innovation_recursion} when $t = 1$. Recall
that $N_{T+1} = R_{T+1}$, so for $t = T+1$, equation
\eqref{eq:innovation_recursion} is simply
\begin{align*}
    \mathbb{P}(R_{T+1}(\underline{g}_T) = r_{T+1} \mid A_T(\underline{g}_{T}) = a_T, S_{T} = s_{T}) = \mathbb{P}(R_{T+1} = r_{T+1} \mid A_{T} = a_T, S_T = s_T).
\end{align*}
This is basically the usual identification formula for static
interventions in the causal inference literature given
unconfoundedness.
  \end{proof}

\subsection{Different versions of dynamic
  unconfoundedness}\label{sec:remarks-dynam-unconf}
As \Cref{prop:strong-unconfoundedness-implies-weak} above shows, Assumption
\ref{assump:no-confounding-b} is implied by the ``no dynamic
back-door'' criterion in Assumption
\ref{assump:no-confounding-a}. \Cref{fig:d-SWIG-examples} presents an
example where causal identification is still possible when Assumption
\ref{assump:no-confounding-b} holds but Assumption
\ref{assump:no-confounding-a} does not.
Clearly in \Cref{subfig:d_SWIG_a},
\Cref{assump:no-confounding-a} fails regardless of how $S_1$ is chosen
because of the back-door path
\[
    N_3 \ldedge A_2 \bdedge A_1 \mid S_1 \ingraph{\gG}.
\]
This failure shows that we cannot
identify the value of \emph{all} policies $g$. To see this,
let $g = (g_1, \text{null})$ be a policy that intervenes at time $t=1$
only, and let $S_1 = \{L_1\}$. The corresponding d-SWIG is presented
in
\Cref{fig:d-SWIG-g1}.
Because of the bidirected edge $A_1 \bdedge A_2$ in
\Cref{subfig:d_SWIG_a}, there still exists a back-door path
\[
    R_3(g_1) \ldedge A_2(g_1) \bdedge A_1^{-}(g_1) \mid S_1(g_1) \ingraph{\gG(g_1,\text{null})}
\]
no matter how $S_1$ is chosen. In other words, it is not possible to
choose a state $S_1$ so that $R_3(g_1)$ and $A_1^{-}(g_1)$ are
guaranteed to be conditionally
independent (i.e.\ \Cref{lemma:innovation_action_unconfoundedness_gt}
is not true). Thus, it is generally not possible to identify the
distribution of $R_3(g_1)$.

Surprisingly, it is still possible to identify the
distribution of $R_3(g)$ for $g = (g_1,g_2)$ that intervenes
in both periods
or $g = (\text{null}, g_2)$ that intervenes in the second period
only. As
demonstrated in \Cref{subfig:d_SWIG_c,subfig:d_SWIG_b}, the
intervention on $A_2$ breaks the above back-door path and thus the
confounding dependence between $A_1$ and $R_3$, so Assumption
\ref{assump:no-confounding-b} holds in these cases. 

To provide intuition on this somewhat peculiar example, we
find it useful to draw a parallel to the principle of
optimality in \textcite[Chap.\ III.\ 3]{bellman_dynamic_1957} for
policy learning:
\begin{quote}
    ``\emph{An optimal policy has the property that whatever the
      initial state and initial decision are, the remaining decisions
      must constitute an optimal policy with regard to the state
      resulting from the first decision.}''
\end{quote}
Although we are considering causal identification instead
of policy optimization, a similar reasoning is applied in our proof: in order
to identify the value of a policy $g = (g_1,
\underline{g}_2)$, we just need to make sure that $A_1$ is not
confounded given that we may intervene on all future decisions, so the
problem can be reduced to identify the value of the policy
$\underline{g}_2$. This argument is then applied recursively to prove
\Cref{thm:main_result} using \Cref{thm:innovation_recursion}.



\section{Dynamic Treatment Regimes}\label{sec:dtr}
In the statistical literature on dynamic treatment regimes (DTRs), often just
a single reward $R_{T+1} \in X_{T+1}$ is considered and the state variables include the
entire history.

\begin{assumption}[DTR] \label{assump:DTR}
  We assume $R_1 = \dots = R_T = \emptyset$ and
\begin{equation*} 
  S_t = S_{t-1} \cup A_{t-1} \cup X_t = X_1 \cup A_1 \cup \dots \cup X_{t-1} \cup
  A_{t-1} \cup X_t,~t \in [T].
\end{equation*}
\end{assumption}
It is easy to see that
\Cref{assump:nested-states,assump:memorylessness} are trivially
satisfied under \Cref{assump:DTR}. 
    To ensure that the distribution of $R_{T+1}(g)$ is
identified, it is typically assumed that the treatment assignments satisfy
``sequential ignorability''
\parencite{robins_new_1986,robins_causal_1997,murphy_optimal_2003}. In
our notation, this can be expressed as the following graphical
condition. 
\begin{assumption}[Sequential
  ignorability] \label{assump:dtr_seq_ignorability} We have the
  following m-separations :
\[    \textnot    R_{T+1}(g) \mconn A_t^{-}(g) \mid S_t(g) \cup
\overline{A}^{-}_{t-1}(g) \ingraph{\gG(g)}~\text{for all $t \in
  [T]$}.\] 
  \end{assumption}

The next proposition shows that this graphical criterion is equivalent
to our dynamic unconfoundedness in Assumption \ref{assump:no-confounding-b}
if we make a mild assumption that there exists a directed path from
every state variable at any time point to the final reward:
\begin{equation}
  \label{eq:directed-path-state-reward}
        V_j(g) \rdpath R_{T+1}(g) \ingraph{\gG(g)}, ~ \text{for
          all}~V_j \in \overline{S}_T.
\end{equation}

\begin{proposition}\label{prop:seq_ignorability_and_dyn_unconfoundedness}
    Let \Cref{assump:DTR} be given. Then
    Assumption \ref{assump:no-confounding-b} implies
    \Cref{assump:dtr_seq_ignorability}. Conversely, if we further
    assume \eqref{eq:directed-path-state-reward}, then
    \Cref{assump:dtr_seq_ignorability} implies
    Assumption \ref{assump:no-confounding-b}.
\end{proposition}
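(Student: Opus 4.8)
The plan is to exploit the special structure of \Cref{assump:DTR}, under which the innovations are $N_t = X_t$ for $t \in [T]$ and $N_{T+1} = R_{T+1}$, while each conditioning set $S_k$ equals the entire history $X_1 \cup A_1 \cup \dots \cup X_{k-1} \cup A_{k-1} \cup X_k$. The difficulty is that \Cref{assump:dtr_seq_ignorability} lives in the single d-SWIG $\gG(g)$ whereas Assumption~\ref{assump:no-confounding-b} ranges over the whole family $\gG(\underline{g}_{k+1})$; the heart of the argument is therefore a \emph{transport lemma} that puts unblocked walks out of $A_k^-(g)$ in $\gG(g)$ into correspondence with unblocked confounding walks out of $A_k$ in $\gG(\underline{g}_{k+1})$, each to a common target vertex and each conditioned on the full history. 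Granting such a lemma, the forward implication will use only the $s = T$ instances of Assumption~\ref{assump:no-confounding-b}, while the converse will use \eqref{eq:directed-path-state-reward} to reduce the intermediate innovations $N_{s+1} = X_{s+1}$ ($s < T$) to the terminal reward.

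First I would prove the transport lemma. The key observation is that conditioning on the complete history kills every walk that touches a history vertex along a directed edge: such a vertex is then a conditioned non-collider and blocks the walk. In particular, a walk leaving $A_k^-(g)$ (equivalently $A_k$) through a directed-in edge $A_k^-(g) \ldedge P$ is blocked at once because $P \in \pa(A_k)$ lies in $S_k$; hence every unblocked walk must leave the treatment endpoint through a bidirected edge, so it is confounding there. The same analysis shows that the only surviving way to route through a past decision is as an open collider formed by two bidirected edges, since a directed-in edge blocks at its conditioned tail and the extra policy-parent edges of the potential-outcome half $A_j(g)$ are likewise blocked at their conditioned tails. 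Because the unsplit vertex $A_j$ in $\gG(\underline{g}_{k+1})$ and the sink half $A_j^-(g)$ in $\gG(g)$ carry exactly the same bidirected edges, and the future decisions $A_{k+1}, \dots, A_T$ are split identically in both graphs, these surviving passages match up, giving the two-way correspondence of unblocked walks. I expect this collider bookkeeping under full-history conditioning to be the main obstacle.

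For Assumption~\ref{assump:no-confounding-b} $\Rightarrow$ \Cref{assump:dtr_seq_ignorability} I would argue by contraposition. An unblocked walk from $R_{T+1}(g)$ to $A_t^-(g)$ given $S_t(g) \cup \overline{A}^-_{t-1}(g)$ in $\gG(g)$ necessarily has an arrowhead into $A_t^-(g)$, since that vertex inherits only incoming edges; transporting it produces an unblocked confounding walk from $N_{T+1} = R_{T+1}$ to $A_t$ given $S_t$ in $\gG(\underline{g}_{t+1})$, i.e.\ a violation of the $k = t$, $s = T$ case of Assumption~\ref{assump:no-confounding-b}. No appeal to \eqref{eq:directed-path-state-reward} is needed for this direction.

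For the converse I would assume \Cref{assump:dtr_seq_ignorability} and suppose a confounding walk from $A_k$ to $N_{s+1}(\underline{g}_{k+1})$ given $S_k$ exists in $\gG(\underline{g}_{k+1})$ for some $k \le s \le T$. Transporting it into $\gG(g)$ gives an unblocked walk from $A_k^-(g)$ to $N_{s+1}(g)$ given $S_k(g) \cup \overline{A}^-_{k-1}(g)$. If $s = T$ this already terminates at $R_{T+1}(g)$ and contradicts \Cref{assump:dtr_seq_ignorability} at $t = k$. If $s < T$, then $N_{s+1} = X_{s+1} \subseteq \overline{S}_T$, so \eqref{eq:directed-path-state-reward} furnishes a directed path $X_{s+1}(g) \rdpath R_{T+1}(g)$ in $\gG(g)$, which I append at $X_{s+1}(g)$. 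There the incoming arrowhead of the confounding walk meets the outgoing tail of the directed path, making $X_{s+1}(g)$ a non-collider; since $s + 1 > k$ it is not in $S_k$ and so does not block, and every interior vertex of the appended directed path sits at a time strictly greater than $k$, hence outside the conditioning set, so the concatenation remains unblocked. This yields an unblocked walk from $A_k^-(g)$ to $R_{T+1}(g)$ given $S_k(g) \cup \overline{A}^-_{k-1}(g)$, once more contradicting \Cref{assump:dtr_seq_ignorability} at $t = k$, and establishes Assumption~\ref{assump:no-confounding-b}.
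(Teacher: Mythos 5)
Your proposal is correct and takes essentially the same approach as the paper: your ``transport lemma'' is precisely what the paper accomplishes with \Cref{lem:vertex-map} together with the same full-history collider bookkeeping (unblocked walks must leave the treatment through a bidirected edge and can only cross past decisions as bidirected colliders, which match between the unsplit $A_j$ and the sink half $A_j^-(g)$), and your converse, like the paper's, invokes \eqref{eq:directed-path-state-reward} to append a directed path lifting a violation at an intermediate innovation $X_{s+1}$ up to $R_{T+1}$. The only steps your sketch leaves implicit---verifying that the violating walk also carries an arrowhead into $R_{T+1}$ (so that its image is genuinely a \emph{confounding} walk, the only kind Assumption~\ref{assump:no-confounding-b} forbids) and excluding walks that pass through the treatment's own outgoing half $A_t(g)$ as a fork---are dispatched in the paper by short ``no directed path runs back in time'' arguments and are routine to fill in; in exchange, your converse is actually more explicit than the paper's, which appends the directed path inside $\gG(\underline{g}_{k+1})$ and leaves the return transport to $\gG(g)$ tacit.
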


Thus, our setup generalizes the DTR setting because \Cref{assump:DTR}
is not required. This can be seen from the following quick proof
of the well-known g-computation formula in
\textcite{robins_new_1986,robins_causal_1997}.
\begin{corollary}[Robins' g-formula] \label{cor:g-formula}
  Let dynamic consistency and positivity be given. Under
  Assumptions \ref{assump:DTR} and \ref{assump:dtr_seq_ignorability} and
  \eqref{eq:directed-path-state-reward}, the joint distribution under
  policy $g$ is identified by
  \begin{align*}
    &\mathbb{P}(R_{T+1}(g) = r_{T+1}, \overline{X}_{T}(g) =
      \overline{x}_{T}, \overline{A}_{T}(g) = \overline{a}_{T}) \\
    =& \mathbb{P}(r_{T+1} \mid \overline{x}_T, \overline{a}_T) \prod_{t=1}^T g(a_{t} \mid \overline{x}_{t}, \overline{a}_{t-1}) \PP(x_{t} \mid \overline{x}_{t-1}, \overline{a}_{t-1}).
  \end{align*}
\end{corollary}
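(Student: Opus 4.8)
The plan is to reduce \Cref{cor:g-formula} to the main identification result by checking that every hypothesis of that result is automatically available under \Cref{assump:DTR}, and then simplifying the resulting formula using the special structure of the DTR innovations and states.

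First I would unpack the innovations from \eqref{eq:innovations}. Under \Cref{assump:DTR} the state $S_t$ equals the full history $X_1 \cup A_1 \cup \dots \cup X_t$, so $S_t \cap X_t = X_t$; since $R_t = \emptyset$ for $t \in [T]$, this gives $N_t = X_t$ for all $t \le T$. Using the convention $S_{T+1} = \emptyset$ together with $R_{T+1} \subseteq X_{T+1}$, we also get $N_{T+1} = R_{T+1}$. Hence $\overline{N}_{T+1} = (\overline{X}_T, R_{T+1})$, the state reads $s_t = (\overline{x}_t, \overline{a}_{t-1})$, and the conditioning pair becomes $(a_{t-1}, s_{t-1}) = (\overline{x}_{t-1}, \overline{a}_{t-1})$. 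I would also record that the hypothesis $R_t \subseteq S_t$ required by \Cref{thm:main_result} holds trivially because $R_t = \emptyset$ for $t \in [T]$.

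Next I would verify the structural assumptions. As already noted in the text, \Cref{assump:nested-states,assump:memorylessness} are trivially satisfied under \Cref{assump:DTR}, since the state retains the entire history, so nothing is ever dropped and there is no novel conditioning left to check. For unconfoundedness I would invoke \Cref{prop:seq_ignorability_and_dyn_unconfoundedness}: under \Cref{assump:DTR} and the directed-path condition \eqref{eq:directed-path-state-reward}, \Cref{assump:dtr_seq_ignorability} implies Assumption \ref{assump:no-confounding-b}. Together with the assumed dynamic consistency and positivity, all hypotheses of \Cref{thm:innovation_recursion} are then in place.

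Finally I would apply the identification recursion. Since the proof of \Cref{thm:main_result} proceeds entirely through \Cref{thm:innovation_recursion}, which requires only Assumption \ref{assump:no-confounding-b} rather than the stronger back-door form, the identification formula of \Cref{thm:main_result} holds under the present hypotheses. Substituting the DTR identities $\overline{N}_{T+1} = (\overline{X}_T, R_{T+1})$, $s_t = (\overline{x}_t, \overline{a}_{t-1})$, and $(a_{t-1}, s_{t-1}) = (\overline{x}_{t-1}, \overline{a}_{t-1})$ turns the left-hand side into $\PP(R_{T+1}(g) = r_{T+1}, \overline{X}_T(g) = \overline{x}_T, \overline{A}_T(g) = \overline{a}_T)$ and the factors into $\PP(r_{T+1} \mid \overline{x}_T, \overline{a}_T)$, $g(a_t \mid \overline{x}_t, \overline{a}_{t-1})$, and $\PP(x_t \mid \overline{x}_{t-1}, \overline{a}_{t-1})$, which is exactly the claimed g-formula. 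The only point requiring care, and the place I would be most explicit, is that \Cref{thm:main_result} is stated with \Cref{assump:no-confounding-a} whereas sequential ignorability only yields Assumption \ref{assump:no-confounding-b}; I would therefore appeal to \Cref{thm:innovation_recursion} directly rather than to the literal statement of \Cref{thm:main_result}. Everything else is substitution of definitions.
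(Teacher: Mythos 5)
Your proposal is correct and follows essentially the same route as the paper: invoke \Cref{prop:seq_ignorability_and_dyn_unconfoundedness} to obtain Assumption \ref{assump:no-confounding-b} from sequential ignorability and \eqref{eq:directed-path-state-reward}, then recursively apply \Cref{thm:innovation_recursion} exactly as in the proof of \Cref{thm:main_result}, with the DTR identities $N_t = X_t$, $N_{T+1} = R_{T+1}$, and $s_t = (\overline{x}_t, \overline{a}_{t-1})$ yielding the g-formula. Your explicit care in appealing to \Cref{thm:innovation_recursion} rather than the literal statement of \Cref{thm:main_result} (which assumes the stronger back-door condition) matches the paper's reasoning, just spelled out in more detail.
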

\begin{proof}
  By \Cref{prop:seq_ignorability_and_dyn_unconfoundedness}, Assumption
  \ref{assump:no-confounding-b} is satisfied. This result then follows
  from recursively applying \Cref{thm:innovation_recursion} in the
  same way as in our proof of \Cref{thm:main_result}.
\end{proof}

Compared to \textcite[Corollary 34]{richardson_single_2013} who aim
to identify the distribution of $R_{T+1}(g)$, here we
assume \eqref{eq:directed-path-state-reward} in addition. This is
needed to identify the joint distribution of all variables and is not
a strong assumption: \eqref{eq:directed-path-state-reward} is true
when every non-decision variable is used by the policy for at least
one decision (so $V_j(g) \rdedge A_t(g)$ for some $t$ and all $V_j \in
X$) and every decision has a causal effect on the final reward (so
$A_t(g) \rdpath R_{T+1}(g)$ for all $t$). This assumption can be
avoided if we are not interested in identifying the non-decision
variables that have no causal effect on $R_{T+1}$.



\section{Markov decision processes}\label{sec:mdp}
In RL, the data generating mechanism, often referred to as the
``environment'', is typically stated in the form of an MDP, where a
reward $R_t$ is observed at every time-point
\parencite{sutton_reinforcement_2018,puterman_markov_2014}.
A large body of the RL literature focuses on estimating the value of an
\emph{evaluation policy} from data generated using another policy.
This problem is known as \emph{offline policy
  evaluation}
\parencite{sutton_reinforcement_2018,uehara_review_2022}. However, no
formal causal theory exists for MDPs to the
best of our knowledge. Thus, it is not
always clear what causal assumptions are made behind the scenes in
off-policy evaluation learning in RL; this is discussed in
\Cref{sec:mdp-assumptions}. In subsequent subsections, we discuss the
cross-temporal nature of our Assumption
\ref{assump:no-confounding-a}, review the use of causal diagrams
in the MDP literature, and discuss why a popular
relaxation---\emph{partially observed Markov decision process}
(POMDP)---is generally not identified.

\subsection{Additional assumptions in the MDP literature} \label{sec:mdp-assumptions}

\subsubsection{Time-invariance}
While time-invariance (sometimes referred to as \emph{stationarity} or
\emph{time-homogeneity}) does not have any implications for the
identification problem, it is a standard assumption in the MDP
literature. We state it here for completeness.
\begin{assumption}[Time-invariant reward and state
  transitions]\label{assump:mdp_time_invariance}
  The conditional distribution
  \begin{equation}
    \label{eq:transition}
    \PP(S_{t+1}(g) = s', R_{t+1}(g) = r \mid A_t(g) = a, S_t(g) =
    s)
  \end{equation}
  is independent of $t$ and $g$.
\end{assumption}

Because the policy $g$ does not directly intervenes on the state or
the reward, the requirement that \eqref{eq:transition} is independent
of $g$ is natural from a causal perspective. The real assumption is
that it is also independent of time, which implies that the state
variables $S_1, S_2, \dots$ take values in the same space. In the RL
literature, the expression in \eqref{eq:transition} is often
denoted as $\PP(s', r \mid a, s)$ or something similar
\parencite{sutton_reinforcement_2018,uehara_review_2022}. The
conditional probability distribution of the next state $\PP(S_{t+1}(g)
= s' \mid A_t(g) = a, S_t(g) = s)$, obtained by marginalising
\eqref{eq:transition}, is usually referred to as the \emph{transition
  probability function}. In
addition to Assumption \ref{assump:mdp_time_invariance}, some authors
assume that the next state and reward are statistically
independent given the state and action, i.e.\ $\PP(s', r
\mid a, s) = \PP(s' \mid a, s) \PP(r \mid a, s)$. Such assumptions
have implications for computating the tangent spaces in
semiparametric settings \parencite{kallus_efficiently_2021}.

\begin{assumption}[Time-invariant policy] \label{assump:policy_time_invariance}
  The evaluation and null policies are time-invariant (and thus only depends
  on the state) in the sense that
  $\PP(A_t(g) = a \mid S_t(g) = s)$ is independent of $t$.
\end{assumption}

This assumption enables importance sampling for
off-policy evaluation when the null policy needs to be estimated from
a single trajectory. With
\Cref{assump:mdp_time_invariance,assump:policy_time_invariance}, an
MDP is commonly described as a tuple of state space, action space, state
transition probability function, and reward function (expected reward
given state and action).




\subsubsection{Randomized decisions}
RL has in large part been popularized through impressive results in
games \parencite{silver_mastering_2016}. State selection is often an
easy task in such settings. For example, in a chess game the
current board position contains all the information
needed for a player to make the next move. More broadly, it is
common practice to benchmark RL methodologies
against settings where the state is \emph{prespecified} as part of the
environment. \texttt{Gymnasium}\footnote{\texttt{Gymnasium} is a
  maintained fork of
  OpenAI's Gym library. See \url{https://gymnasium.farama.org/}.} is a
popular API often referenced in scientific work that offers a range
of RL environments. For example, in the CartPole environment in
\texttt{Gymnasium}, the
task is to balance a pole on a cart moving along a frictionless track
\parencite{bartosutton1983}. The prespecifie state is cart
position, cart velocity, pole angle, and pole angular velocity.
In the MountainCar environment, one
must accelerate a car up a hill of a sinusoidal valley to gain
momentum \parencite{Moore90efficientmemory-based}. The prespecified
state is position (x-axis) and velocity of the car.


Clearly, when the state $S_t$ is prespecified by the environment, any
natural policy used to generate
historical data is \emph{unconfounded} given the state. Consequently,
the data may be viewed as the result of a \emph{stratified randomized
  experiment}. This can be formally stated as the following graphical
requirement.
\begin{assumption}[Randomized
  decisions]\label{assump:rl_behavior_policy}
  In the causal graph $\gG$, there is no bidirected edge with one end
  being a decision variable $A_t$, and the state $S_t$ contains the
  parent set of $A_t$ for all $t \in [T]$.
\end{assumption}

\Cref{assump:rl_behavior_policy} clearly implies
\Cref{assump:no-confounding-a}, because any backdoor path from $A_k$
must begin with $A_k \ldedge S_k$. Thus, much of the RL literature
makes the causal assumption of dynamic unconfoundedness without
an explicit statement. Rather than using a causal diagram, many
authors uses the factorization
\[
  \PP(\underline{a}_1,\underline{s}_1) = \PP(s_1) \prod_{t=1}^T
  \PP(a_t \mid s_t) \PP(s_{t+1}, r_{t+1} \mid a_t, s_t)
\]
and makes an implicit assumption that under a policy $g = (g_1,\dots,
g_T)$, the probability $\PP(g)$ factorizes in the same way with the term
$\PP(a_t \mid s_t)$ replaced by $g_t(a_t \mid s_t)$.


Although \Cref{assump:rl_behavior_policy} may be quite reasonable in
simulated environments, it may not hold in real-world problems. In the
causal ADMG, \Cref{assump:rl_behavior_policy} fails if the graph contains any
edge of the kind $A_t \fullstraighalf V \setminus S_t$. This is the
case in dynamic pricing problem depicted in
\Cref{sub@subfig:competitor_price},
where a shipping professional relies on verbally communicated sales
intelligence to set prices. As another example, when
prescribing a drug, a physician might consider the patient's own
concerns that are not recorded in the health database. In RL, the null
policy that generates the observed data is often referred to as the
behavior policy. In light of the discussion above, this
terminology can be misleading because a real-world ``behavior'' may not
satisfy \Cref{assump:rl_behavior_policy}.



\subsection{Why controlling for immediate confounding is not enough}
For readers with a background in RL, it may seem odd at first why
Assumption \ref{assump:no-confounding-a} requires blocking back-door
paths across all time periods, especially given the memorylessness in
Assumption \ref{assump:memorylessness}. In other words, one might
wonder why
\begin{equation}\label{eq:dynamic_back_door_t_only}
    \textnot N_{t+1} \confpath A_t \mid S_t \ingraph{\gG} ~ \text{for all $t \in [T]$},
\end{equation}
is not sufficient for identifiability. The reason is that when $S_t$
includes previous actions, blocking the paths in
\eqref{eq:dynamic_back_door_t_only} may not be enough.
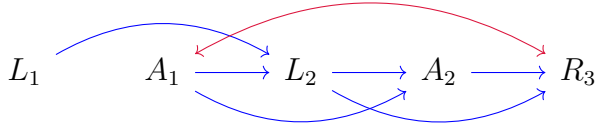
\begin{figure}[t]
  \centering
  \begin{tikzcd}
      L_1 \arrow[rr, graphBlue, bend left] & A_1 \arrow[r, graphBlue] \arrow[rr, graphBlue, bend right] \arrow[rrr, graphRed, leftrightarrow, bend left] & L_2 \arrow[r, graphBlue] \arrow[rr, graphBlue, bend right] & A_2 \arrow[r, graphBlue] & R_3
  \end{tikzcd}
  \caption{Memorylessness and \eqref{eq:dynamic_back_door_t_only} are
    satisfied for $S_2 = \{A_1, L_2\}$, but
    \Cref{assump:no-confounding-a} fails.}
  \label{fig:d-SWIG-2a-explained}
\end{figure}
\Cref{fig:d-SWIG-2a-explained} gives an example. Clearly,
memorylessness is satisfied when $S_2 = \{A_1, L_2\}$, that is, we
have
\[
    \textnot R_3 \mconn (L_1 \cup A_1) \setminus S_2 \mid S_2.
\]
Furthermore, $S_2 = \{A_1, L_2\}$ ancestrally blocks all back-door
paths from $A_2$ to $R_3$, so \eqref{eq:dynamic_back_door_t_only} is
also satisfied. However, the bidirected edge $A_1 \bdedge R_3$ clearly
precludes causal identification of any policy with any choice of
state. The existence of this
bidirected edge fails \Cref{assump:no-confounding-a}, which requires
all back-door paths from $A_1$ to $R_3$ to be ancestrally blocked.

Despite this observation, the next proposition shows that
\eqref{eq:dynamic_back_door_t_only} is sufficient when $A_{t-1} \notin
S_t$, that is, if previous decisions are not included in the
states. This may be reasonable in some games and simulated
environments but difficult to defend in some other problems.

\begin{proposition}\label{prop:dynamic_back_door_no_At_in_St}
  Assume $A_{t-1} \notin S_t$ for all $t \in [T]$ and Assumptions
  \ref{assump:nested-states} and \ref{assump:memorylessness}. Then
  \Cref{assump:no-confounding-a} is equivalent to \eqref{eq:dynamic_back_door_t_only}.
\end{proposition}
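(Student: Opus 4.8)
The forward direction is immediate, since taking $k=t$ in \Cref{assump:no-confounding-a} is exactly \eqref{eq:dynamic_back_door_t_only} and uses none of the extra hypotheses. The plan is thus to prove the converse: that \eqref{eq:dynamic_back_door_t_only}, together with $A_{t-1}\notin S_t$, \Cref{assump:nested-states} and \Cref{assump:memorylessness}, implies \Cref{assump:no-confounding-a}. I would work throughout with the walk-based (Bayes-ball) notion of blocking and m-separation, which is equivalent to the path-based ancestral blocking used to phrase \Cref{assump:no-confounding-a} by the discussion in \Cref{subsec:m-separation}; the advantage is that the collider status of an interior vertex of a walk is preserved when the walk is split, and that a walk is never blocked at its endpoints.

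First I would record two bookkeeping facts. Iterating \Cref{assump:nested-states} and using $A_{j-1}\notin S_j$ to discard the action term gives $S_t\subseteq S_k\cup X_{k+1}\cup\cdots\cup X_t$ for every $k\le t$; since $A_k\notin S_k$ and $A_k$ is not an innovation, this forces $A_k\notin S_t$ for all $t>k$. The same containment shows that any vertex of $S_t\setminus S_k$ lies in some $X_\tau$ with $k<\tau\le t$ and, as the $X_\tau$ are disjoint, in fact lies in $S_\tau\cap X_\tau\subseteq N_\tau$. This is what lets me relabel a conditioning vertex that blocks a walk at time $t$ as an innovation $N_\tau$ at a strictly earlier time.

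The core is a minimal-counterexample argument (equivalently, induction on $t-k$). Suppose \Cref{assump:no-confounding-a} fails and choose a violation with $t-k$ minimal: indices $k\le t$ and a walk $w$ from $A_k$ to $N_{t+1}$ with an arrowhead into $A_k$ that is unblocked by $S_k$. If $t=k$ this contradicts \eqref{eq:dynamic_back_door_t_only}, so assume $t>k$. Because $A_k\in\overline{A}_{t-1}$ and $A_k\notin S_t$, \Cref{assump:memorylessness} at time $t$ m-separates $A_k$ from $N_{t+1}$ given $S_t$, so $w$ must be blocked by $S_t$. Let $u$ be the vertex witnessing this $S_t$-block (a collider not in $S_t$, or a non-collider in $S_t$) that is closest to the $N_{t+1}$ endpoint, and let $w_2$ be the sub-walk from $u$ to $N_{t+1}$, which is then unblocked by $S_t$. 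If $u$ is a non-collider in $S_t$, then $u\notin S_k$, so $u\in N_\tau$ for some $k<\tau\le t$ by the bookkeeping above, and the prefix of $w$ from $A_k$ to $u$ is a backdoor walk into $A_k$ reaching $N_\tau$ that is still unblocked by $S_k$ --- a violation with $k\le \tau-1<t$, contradicting minimality. If instead $u$ is a collider not in $S_t$, then $u\in S_k\setminus S_t\subseteq\overline{S}_{t-1}\setminus S_t$, so \Cref{assump:memorylessness} separates $u$ from $N_{t+1}$ given $S_t$ and forces $w_2$ to be blocked by $S_t$, contradicting the choice of $u$.

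I expect the collider case to be the crux, and it is precisely where $A_{t-1}\notin S_t$ earns its keep: that hypothesis is what guarantees $A_k\notin S_t$ (and $u\notin S_t$ for a past collider $u$), so that \Cref{assump:memorylessness} can be applied both to the whole walk $w$ and to the sub-walk $w_2$. When it fails the equivalence breaks, as in \Cref{fig:d-SWIG-2a-explained}, where $A_1\in S_2$ and the confounding edge $A_1\bdedge R_3$ survives every state choice. The remaining care is routine: checking that interior collider status is inherited by sub-walks so that $S_k$-unblockedness passes to prefixes, that the witness $u$ genuinely makes $w_2$ unblocked by $S_t$, and that each reduction lands on an innovation at a strictly smaller $t-k$ so the induction terminates.
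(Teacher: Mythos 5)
Your proposal is correct and follows essentially the same route as the paper's proof: an induction on the gap $t-k$ (your minimal-counterexample formulation), the same bookkeeping fact that $S_t \setminus S_k$ consists of innovations $N_\tau$ with $k < \tau \leq t$, a truncation of the walk at such a vertex to contradict minimality, and an appeal to \Cref{assump:memorylessness} with conditioning set $S_t$ for the remaining case. The only difference is organizational: the paper splits into three cases according to how the walk meets $(S_t\setminus S_k)\cup(S_k\setminus S_t)$, whereas you first invoke memorylessness to force an $S_t$-blocking vertex and then case on its type, which absorbs the paper's ``swap the conditioning set from $S_k$ to $S_t$'' case into the opening step.
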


\subsection{The abuse of causal diagrams in the MDP literature}
In light of our results, we next review the use of causal diagrams
in the RL literature \parencite[see
e.g.][]{kallus2020double}. Typically, it is assumed that the MDP is
time-invariant
(\Cref{assump:mdp_time_invariance,assump:policy_time_invariance} are
satisfied), and
the states, actions and rewards are represented as vertices in a
DAG. This is illustrated in \Cref{fig:rl_causal_diagram} that repeat
across two time points. Some authors also include
latent variables and edges like $A_t \ldedge U_t \rdedge S_{t+1}$ in
the DAG to indicate unmeasured confounding
\parencite{xu2023instrumentalvariableapproachconfounded}.
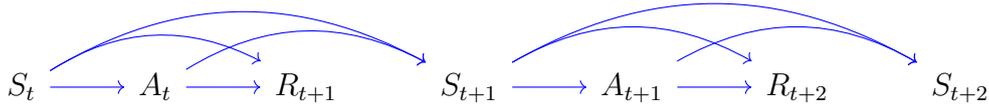
\begin{figure}[t]
  \begin{subfigure}[b]{\textwidth}
      \centering
      \begin{tikzcd}
          S_t \arrow[r, graphBlue] \arrow[rr, graphBlue, bend left] \arrow[rrr, graphBlue, bend left] & A_t \arrow[r, graphBlue] \arrow[rr, graphBlue, bend left] & R_{t+1}  & S_{t+1} \arrow[r, graphBlue] \arrow[rr, graphBlue, bend left] \arrow[rrr, graphBlue, bend left] & A_{t+1} \arrow[r, graphBlue] \arrow[rr, graphBlue, bend left] & R_{t+2} & S_{t+2}
      \end{tikzcd}
  \end{subfigure}
  \caption{A Markov decision process represented by a DAG $\gG$.}

  \label{fig:rl_causal_diagram}
\end{figure}

From a causal perspective, diagrams like \Cref{fig:rl_causal_diagram}
should be interpreted as the latent projection of the original graph
$\gG$ \parencite{verma_pearl_1990} onto all the states, actions, and
rewards. However, this latent projection can be misleading when the
MDP constructs---the states, actions, and rewards---have overlapping
variables. For this reason, we
recommend against using causal diagrams in which the vertices are states,
decisions, and rewards (such as
\Cref{fig:rl_causal_diagram,fig:wrong_projection}).

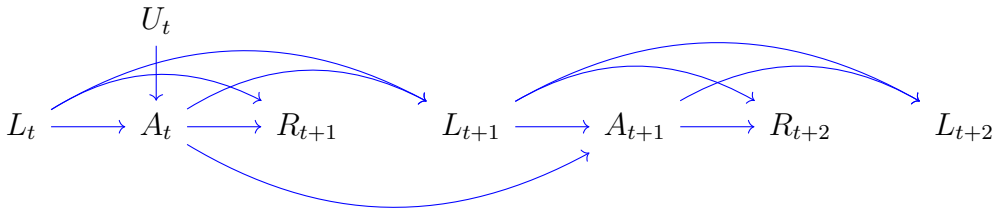
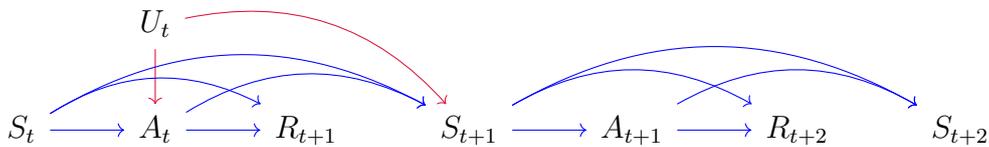
\begin{figure}[t]
  \centering
  \begin{subfigure}[b]{\textwidth}
      \centering
    \begin{tikzcd}
      & U_t \arrow[d, graphBlue] &&&&& \\
      L_t \arrow[r, graphBlue] \arrow[rr, graphBlue, bend left] \arrow[rrr, graphBlue, bend left] & A_t \arrow[r, graphBlue] \arrow[rr, graphBlue, bend left] \arrow[rrr, graphBlue, bend right] & R_{t+1}  & L_{t+1} \arrow[r, graphBlue] \arrow[rr, graphBlue, bend left] \arrow[rrr, graphBlue, bend left] & A_{t+1} \arrow[r, graphBlue] \arrow[rr, graphBlue, bend left]  & R_{t+2} & L_{t+2}
    \end{tikzcd}
    \caption{Original causal graph.}
  \label{fig:rl_original_graph}
  \end{subfigure}

  \begin{subfigure}[b]{\textwidth}
      \centering
      \begin{tikzcd}
        & U_t \arrow[d, graphRed] \arrow[rrd, graphRed, bend left] &&&& \\
        S_t \arrow[r, graphBlue] \arrow[rr, graphBlue, bend left] \arrow[rrr, graphBlue, bend left] & A_t \arrow[r, graphBlue] \arrow[rr, graphBlue, bend left] & R_{t+1}  & S_{t+1} \arrow[r, graphBlue] \arrow[rr, graphBlue, bend left] \arrow[rrr, graphBlue, bend left] & A_{t+1} \arrow[r, graphBlue] \arrow[rr, graphBlue, bend left] & R_{t+2} & S_{t+2}
      \end{tikzcd}
      \caption{Latent projection with the choice of state $S_t =
        \{A_{t-1},L_{t}\}$.}
  \label{fig:wrong_projection}
  \end{subfigure}

  \caption{An illustration of ``phantom confounding'' created by latent projection.}
  \label{fig:phantom}
\end{figure}

To illustrate this, consider the causal graph in
\Cref{fig:rl_original_graph}. If we choose the state as $S_t =
\{L_t\}$, memorylessness is violated due to the directed edge $A_t
\rdedge A_{t+1}$. In contrast, the choice $S_t = \{A_{t-1}, L_t\}$ satisfies
Assumptions \ref{assump:nested-states}, \ref{assump:memorylessness},
and \ref{assump:no-confounding-a}, and thus the value of any policy that
depends on this choice of $S_t$ can be identified by
\Cref{thm:main_result}. However, we would come to a wrong conclusion
about this state choice if we use the latent projection graph as shown
in \Cref{fig:wrong_projection}. Due to the additional edge $U_t \rdedge
S_{t+1}$ (which needs to be added because $A_t \in S_{t+1}$), there is
a ``phantom'' back-door path $A_t \ldedge U_t \rdedge S_{t+1}$ that
cannot be blocked because $U_t$ is not observed.

\subsection{Violation of memorylessness in POMDPs}\label{subsec:pomdps}
Modeling decision processes as MDPs is appealing for many reasons. For
instance, time-invariant MDPs do not suffer from the curse of
dimensionality in long and infinite horizon
problems. 
However, as discussed earlier, MDPs impose strong assumptions
on the data generating mechanism. 

A prominent relaxation of MDPs is the \emph{partially observed Markov
  decision processes} (POMDPs) \parencite{ASTROM1965174,
  KAELBLING199899} that assume imperfect
information. In recent years POMDPs have received considerable
attention as a more reasonable trade-off between model flexibility and
structural assumptions \parencite[e.g.][]{guo2016pac}. In stage $t$
of a POMDP, it is assumed that the decision process can be fully
described by some latent state $H_t$ goverened by a transition kernel
$\PP(u_{t+1} \mid a_t, h_t)$. The agent measures a subset or
transformation $X_t$ of the true state $U_t$ through some probability
distribution $\PP(x_t \mid h_t)$.\footnote{It is sometimes
  assumed that $X_t$ also depends on previous actions.} Usually, it is
assumed that the observed data is randomized (see
\Cref{assump:rl_behavior_policy}) so that $A_t$ only
depends on $X_t$. Finally, the decision-maker received a reward
$R_t$. \Cref{fig:pomdp} shows the typical setting of a POMDP.

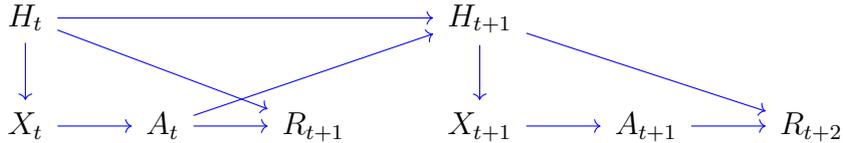
\begin{figure}[t]
  \centering
  \begin{tikzcd}
    H_t \arrow[d, graphBlue] \arrow[rrd, graphBlue] \arrow[rrr, graphBlue] &  &  & H_{t+1} \arrow[d, graphBlue] \arrow[rrd, graphBlue] &   &  \\
    X_t \arrow[r, graphBlue] & A_t \arrow[r, graphBlue] \arrow[rru, graphBlue] & R_{t+1} & X_{t+1} \arrow[r, graphBlue] & A_{t+1} \arrow[r, graphBlue] & R_{t+2}
  \end{tikzcd}
  \caption{A graphical representation of POMDP.}
  \label{fig:pomdp}
\end{figure}

Clearly, the decision process in \Cref{fig:pomdp} satisfies
\Cref{assump:nested-states,assump:memorylessness,assump:no-confounding-a}
with $S_t = \{H_t\}$ or $S_t = \{H_t, X_t\}$. However, this is
infeasible because $H_t$ is unobserved. In fact, it is clear that
memorylessness fails unless $S_t$ includes the entire history. As an
example, there is a path $
  R_{t+2} \ldedge H_{t+1} \ldedge H_t \rdedge X_t$ not ancestrally
  unblocked by $S_{t+1}$ when $S_t = \{X_t\}$ and $S_{t+1} =
  \{X_{t+1}\}$.

For this reason, POMDPs are generally not identified. However,
approximate inference may still be possible without including the full
history, for example if the process is sufficiently mixed
\parencite{hu2023off}.

\section{Simulation Study for Dynamic Pricing}\label{sec:simulation_study}
We now revisit the dynamic pricing problem and expand the
simulation in \Cref{sec:motivating-example}. 

\subsection{Basic simulation setup}

In \Cref{sec:motivating-example}, the container logistics company sets
a single price for a specific departure. In
reality this price is continuously changed until either the total
bookings exceed the vessel's capacity or the departure date is
reached,
and customers may book containers for multiple
departures at any point in time. To capture the dynamic nature of this
pricing problem, we consider a
simulation setting with weekly vessel departures where
the company publishes the initial price two weeks before each
departure and can change the price a week before the departure.

\begin{figure}[ht]
  \centering
  \begin{tikzcd}[nodes={text width=1cm, align=center}]
    & D_{t-1} \arrow[rr, graphOrange, dashed] \arrow[d, graphBlue] \arrow[rdddd, graphBlue, bend left=40] \arrow[rrrdddd, graphBlue, bend left=20] &  & D_t \arrow[d, graphBlue] \arrow[rddddd, graphBlue, bend left=40] & & \\
    \cdots & \hat{D}_{t-1}  \arrow[ddd, graphBlue, bend right=40] &  & \hat{D}_t \arrow[dddd, graphBlue, bend right=30] & \\ \\
    B_{t-1}^1 \arrow[r, graphBlue] \arrow[rd, graphBlue] \arrow[rr,
    graphBlue, rightarrow, bend left] & A_{t-1}^2  \arrow[r,
    graphBlue] \arrow[rrd, graphGreen, dashed] \arrow[rrrd,
    graphGreen, dashed] & B_{t}^2 &  & & \cdots \\
    & A_{t-1}^1 \arrow[r, leftrightarrow, graphRed, bend left, dashed]
    \arrow[rrd, graphBlue] \arrow[r, graphBlue] \arrow[rr, graphBlue,
    bend left] & B_{t}^1 \arrow[r, graphBlue] \arrow[rd, graphBlue]
    \arrow[rr, graphBlue, rightarrow, bend left] & A_{t}^2 \arrow[r,
    graphBlue] \arrow[rrd, graphGreen, dashed] & B_{t+1}^2 \\
    && & A_{t}^1 \arrow[r, graphBlue] \arrow[rr, graphBlue, bend left]
    \arrow[r, leftrightarrow, graphRed, bend left, dashed] \arrow[rrd,
    graphBlue]  & B_{t+1}^1 \arrow[r, graphBlue] \arrow[rd, graphBlue]
    & A_{t+1}^2 & \\
    &&& & & A_{t+1}^1 & \\
    &&& & &
  \end{tikzcd}
  \caption{Dynamic pricing ADMG. The basic scenario $\gG$ consists of all
    solid blue edges. Given the state set $S_t = \{A^1_{t-1},
    \hat{D}_{t-1}, B_{t}^1, \hat{D}_{t}\}$, the orange arrow $D_{t-1}
    ~ \textcolor{graphOrange}{\longrightarrow} ~ D_t$ signifies a
    violation of the Markov property, and $A^2_{t+1} ~
    \textcolor{graphGreen}{\longleftarrow} ~ A^2_t ~
    \textcolor{graphGreen}{\longrightarrow} ~ B^2_{t+2}$ signifies a
    violation of no dynamic back-door. Finally, $A^1_t ~
    \textcolor{graphRed}{\longleftrightarrow} ~ B^1_{t+1}$ denotes
    unmeasured confounding that cannot be controlled for.}

  \label{fig:pricing_example}
\end{figure}

To set up the problem, we use subscripts to index calendar time and
superscripts to index relative time for a departure. For the
vessel departing at time $t+1$, the initial price $A_{t-1}^1$ is
published two weeks before the departure. In the following week,
customers book container slots $B_t^1$ at this price. The price is
then updated to $A^2_t$, and in the second week customers book slots
$B^2_{t+1}$ at the updated price. Looking in a different way, at each
time $t$ we observe four variables: $B_t^2$
is the second-week booking for the vessel departing at time $t$,
$B_t^1$ is the first-week booking for the vessel departing at time
$t+1$, $A_t^2$ is the updated price for the vessel departing at
time $t+1$, and $A_t^1$ is the initial price for the vessel departing
at time $t+2$. The revenue for the week is then $R_t = A^2_{t-1}
B^2_{t} + A^1_{t-1} B^1_{t}$.

The exact structural assumptions between the prices and bookings are
depicted in the ADMG in \Cref{fig:pricing_example} and we will
consider several scenarios. Let
$\gG$ be graph for the basic scenario consisting of the blue arrows
($\textcolor{graphBlue}{\longrightarrow}$) in
\Cref{fig:pricing_example}. We assume the initial price
$A^1_t$ is determined by shipping professionals using the previous
initial price $A^1_{t-1}$, its associated bookings $B^1_t$, and an
estimate $\hat{D}_t$ of the true market demand $D_t$ (for departure at
time $t+2$ and $D_t$ cannot be observed).\footnote{The noisy estimate
  $\hat{D}_t$ may reflect all
the information of the latent demand the decision-maker has access
to. For example, the Chinese New Year heavily impacts the demand for
cargo flowing between East Asia and Western Europe. $\hat{D}_t$ may
also reflect intel from forecast reports that shipping professional
rely on.} In the basic scenario, the
price is updated to
$A^2_{t+1}$ in the second week based on the initial price $A^1_t$ and
the associated bookings $B^1_{t+1}$. The initial bookings $B^1_{t+1}$ is
determined by the initial price $A^1_t$ and latent market demand
$D_t$. In the second period the bookings $B^2_{t+2}$ is determined
by the updated price $A^2_{t+1}$, the initial bookings $B^1_{t+1}$
(through capacity constraints), and the latent market demand
$D_t$. The reward variable $R_{t}$ is not shown in
\Cref{fig:pricing_example} to simplify the graph.

In this basic scenario (with only blue arrows in
\Cref{fig:pricing_example}), the reader is invited to verify that the
state set
\begin{equation}\label{eq:dynamic_pricing_state}
  S_t = \{A^1_{t-1}, \hat{D}_{t-1}, B^1_t, \hat{D}_t\}
\end{equation}
satisfies Assumptions \ref{assump:nested-states},
\ref{assump:memorylessness}, and \ref{assump:no-confounding-a}.



As is common in the economics literature, we assume that the
unconstrained bookings is Poisson distributed with mean linearly determined by
the parents in the graph: a high latent demand $D_t$ for a departure
means the price-elasticity is lower. We assume a fixed capacity $C$
for every vessel so that $B^1_{t+1} + B^2_{t+2} \leq C$. 
For the null policy, the initial and updated prices are piecewise
linear
functions of their parents (and with a small probability the prices are
uniformly distributed to ensure positivity). 
More details on the simulation setup can be found in the Online
Supplement.

\subsection{More sophiscated scenarios}

We compare the performance of the null policy to the policies learned
from policy iteration using different state sets in the basic and
three more sophisticated scenarios:
\begin{description}
  \item[1. Macroeconomic trend.] Suppose there is some macroeconomic
    trend causing inertia in the latent demand over time. Graphically,
    this can be modelled by adding $D_{t-1} ~
    \textcolor{graphOrange}{\longrightarrow} ~ D_t$ to $\gG$ in
    \Cref{fig:pricing_example}. We denote the resulting graph as
    $\overset{\textcolor{graphOrange}{\longrightarrow}}{\gG_1}$. The
    degree of inertia is modeled by varying the probability of
    $D_{t} = D_{t-1}$ in the simulation. Due to this inertia, the
    state choice in \eqref{eq:dynamic_pricing_state} no longer
    satisfies memorylessness (Assumption
    \ref*{assump:memorylessness}); for example, there is an unblocked path
    $B^1_{t+1} \leftarrow D_{t} \leftarrow D_{t-1} \leftarrow D_{t-2}
    \rightarrow \hat{D}_{t-2} \rightarrow A^1_{t-2}$. Further, since
    $D_t$ is unobserved, no feasible state set
    exists except when $S_t$ includes the full history.

  \item[2. Retrospective price updates.] Suppose customers compare
    prices when making a decision: if the
    current price $A^2_{t}$ is lower than the previous price,
    customers may view this as a good bargain and book more
    (graphically we add the edge $A^2_{t-1} \textcolor{graphGreen}{
      \longrightarrow} B^2_{t+1}$). Shipping professionals may also
    factor this behavior into the price update ($A^2_{t-1}
    \textcolor{graphGreen}{\longrightarrow} A^2_{t}$).
  The resulting graph is denoted
  $\overset{\textcolor{graphGreen}{\longrightarrow}}{\gG_2}$. The
  degree of this confounding relationship is modeled by varying the
  boost to demand that occurs when $A^2_{t} < A^2_{t-1}$. Due to the
  founding path $B^2_{t+1} \textcolor{graphGreen}{\ldedge} A^2_{t-1}
  \textcolor{graphGreen}{\rdedge} A^2_t$, the
    state choice in \eqref{eq:dynamic_pricing_state} no longer
    satisfies Assumption \ref{assump:no-confounding-a}. Instead,
    including the previous price update in the state set (so $S_t =
    \{A^2_{t-1}, A^1_{t-1}, \hat{D}_{t-1}, B^1_t, \hat{D}_t\}$)
    satisfies all identifiability assumptions.

  \item[3. Competitor prices.] In this case, we return to the
    competitor price example from \Cref{subfig:competitor_price} where
    shipping professionals receive word-of-mouth intelligence on competitor
    prices. Clearly, competitor prices also impact the number of
    bookings received ($A^1_t
    \textcolor{graphRed}{\longleftrightarrow} B^1_{t+1}$). The graph
    is denoted
    $\overset{\textcolor{graphRed}{\longleftrightarrow}}{\gG_3}$. The
    degree of unmeasured confounding is modeled by varying the boost
    to demand that occurs when the shipping professional undercuts
    competitors and the probability of receiving accurate
    intelligence. In this case, no state choice satisfies Assumption
    \ref{assump:no-confounding-a}.
  \end{description}

\renewcommand{\arraystretch}{0.5}
\begin{table}[t]
  \centering
  \caption{Policy iteration results from $N=\num{100000}$
    episodes of length $T=500$.} 
  \label{table:dynamic_pricing_results}
  \begin{threeparttable}
    \begin{tabular}{>{\centering\arraybackslash}p{1.5cm}
      >{\centering\arraybackslash}p{1.5cm}
      >{\centering\arraybackslash}p{1.5cm}
      >{\centering\arraybackslash}p{2cm}
      >{\centering\arraybackslash}p{2cm}
      >{\centering\arraybackslash}p{2cm}
      >{\centering\arraybackslash}p{2cm}}
      \toprule \\
      & & \multicolumn{3}{c}{Mean cumulative reward (std. dev.)} & \multicolumn{2}{c}{Regret (\%)}  \\
      \cmidrule(lr){3-5} \cmidrule(lr){6-7}
      Graph & Degree & Null policy & PI with state
                              \eqref{eq:dynamic_pricing_state} & PI
                                                                 with
                                                                 correct
                     state & PI with state
                              \eqref{eq:dynamic_pricing_state} & PI
                                                                 with
                                                                 correct
      state \\ \midrule \\
      ${\gG}$                                                   &     -     &   1969.0   &  2056.0    & 2055.0 & 4.42   & 4.37    \\
      &           &   (87.37)         &  (75.56)          & (74.84)       &                                                           &                                                           \\ \addlinespace
$\overset{\textcolor{graphOrange}{\longrightarrow}}{\gG_1}$   &  0.1  &   1971.0   &  2058.0    & -                     & 4.41   & -                                                         \\
      &           &   (87.65)         &  (75.08)          &                       &                                                           &                                                           \\ \addlinespace
      &  0.5  &   1978.0   &  2063.0    & -                     & 4.3   & -                                                         \\
      &           &   (90.72)         &  (80.15)          &                       &                                                           &                                                           \\ \addlinespace
      &  0.9  &   1986.0   &  1987.0    & -                     & 0.05   & -                                                         \\
      &           &   (118.25)         &  (134.66)          &                       &                                                           &                                                           \\ \addlinespace
$\overset{\textcolor{graphGreen}{\longrightarrow}}{\gG_2}$    &  1  &   2227.0   &  2020.0    & 2451.0 & -9.3   & 10.06    \\
      &           &   (80.69)         &  (91.17)          & (86.81)       &                                                           &                                                           \\ \addlinespace
      &  2  &   2402.0   &  2041.0    & 3019.0 & -15.03   & 25.69    \\
      &           &   (82.25)         &  (93.41)          & (91.11)       &                                                           &                                                           \\ \addlinespace
      &  4  &   2415.0   &  2042.0    & 3911.0 & -15.45   & 61.95    \\
      &           &   (82.81)         &  (93.2)          & (87.05)       &                                                           &                                                           \\ \addlinespace
$\overset{\textcolor{graphRed}{\longleftrightarrow}}{\gG_3}$      &  1  &   2390.0   &  2621.0    &  -                    & 9.67   & -                                                         \\
      &           &   (84.99)         &  (82.39)          &                       &                                                           &                                                           \\ \addlinespace
      &  3  &   3002.0   &  2934.0    &  -                    & -2.27   & -                                                         \\
      &           &   (93.06)         &  (112.95)          &                       &                                                           &                                                           \\ \addlinespace
      &  5  &   3221.0   &  3112.0    &  -                    & -3.38   & -                                                         \\
      &           &   (96.22)         &  (120.95)          &                       &                                                           &                                                           \\ \addlinespace \addlinespace
      \bottomrule
  \end{tabular}
  \end{threeparttable}
\end{table}

\subsection{Results}

\Cref{table:dynamic_pricing_results} shows the mean cumulative reward
in $T= 500$ weeks for each above scenario. We use policy iteration to
train a policy using $N = 100,000$ episodes with the state set in
\eqref{eq:dynamic_pricing_state}. In the scenarios where a state set
satisfying all identifiability assumptions exists, we also run policy
iteration using that state set. We report the regret of these learned
policies in relation to the null policy that is used to generate the
training data.

In the basic scenario ($\gG$), the state set
\eqref{eq:dynamic_pricing_state} satisfies the identifiability
assumptions, and policy iteration leads to a modest improvement oer
the null policy. In the scenario with macroeconomic trend
($\overset{\textcolor{graphOrange}{\longrightarrow}}{\gG_1}$), policy
iteration with \eqref{eq:dynamic_pricing_state} also leads to a modest
improvement, but the improvement becomes smaller when the the
macroeconomic trend over time becomes stronger. In the scenario with
retrospective price updates
($\overset{\textcolor{graphGreen}{\longrightarrow}}{\gG_2}$), policy
iteration with the state set \eqref{eq:dynamic_pricing_state} actually
leads to inferior policy to the null. In contrast, if policy iteration
is run with the correct state set, a substantial improvement over the
null policy is observed. Finally, in the scenario with competitor
prices, $\overset{\textcolor{graphRed}{\longleftrightarrow}}{\gG_3}$,
whether or not policy iteration improves over the null policy depends
on the degree of confounding. These results highlight the importance
of choosing a good state set when using off-policy learning algorithms
in real-world problems.

\section{Discussion}
\label{sec:discussion}

In this paper we consider the identification problem arising in
off-policy learning for a general sequential decision problem. We present
graphical criteria for identifiability of the
value of an adaptive policy that require the state set to satisfy a
memorylessness property that is akin to the Markov property in the MDP
literature and an unconfoundedness property that extends Pearl's
back-door criterion for a static intervention. Our
results provide a unified causal framework for sequential decision
problems and generalize the common sequential ignorability assumption
in the DTR literature. Further, our framework reveals the
implicit assumption of randomized decisions in the current RL
literature and provides a principled basis for
wider application of RL to real-world problems.

A possible avenue of future work is to develop algorithms for state
variable selection in off-policy learning. This is closely related to confounder selection in causal inference
\parencite{guo_confounder_2023-1} and testing MDP assumptions
\parencite{shi2020does}. It is particularly interesting to
develop online RL algorithms (with some observational historical
data available) that can select state variables and improve the policy
at the same time. Another interesting problem is to extend the
graphical identifiability criteria when we can also access the natural
counterfactuals, a setting that is attracting some attention
recently \parencite{kallus2023efficient,stensrud2024optimal}. In the
DTR setting, this problem has been considered by
\textcite{young2014identification} and
\textcite{richardson_single_2013}.



\section*{Acknowledgements}
This work is in part supported by A.P. Moller Maersk, the
Innovation Fund Denmark, and the Engineering \& Physical Sciences
Research Council (EP/V049968/1).


\appendix

\section{Remarks on m-separation and no confounding}
For completeness we introduce some sets of walks. Let $W[J \mconn K
\mid L \ingraph{\gG}]$ denote the set of walks from $J$ to $K$ that is
not blocked given $L$.\footnote{More precisely, $W[J \mconn K \mid L \ingraph{\gG}]$ is a matrix with rows corresponding to vertices in
$J$, columns corresponding to vertices in $K$, and entries being a collection of walks from the corresponding row to the corresponding
  column. This allows us to represent walk concatenation as matrix multiplication. See \textcite{zhao_matrix_2024} for more detail.} Let
$W_s[J \mconn K \mid L \ingraph{\gG}]$ denote the subset of $W[J
\mconn K \mid L \ingraph{\gG}]$ that is simple. Similarly, let
$W_s[V_j \confpath V_k \mid L]$ denote the set of simple confounding
walks from $V_j$ to $V_k$ that is not blocked by $L$, and if this set
is empty we say $V_j$ and $V_k$ are \emph{unconfounded} given $L$ in
$\gG$. 
With this notation, confounder selection, with $V_j$ being the cause and $V_k$ being the effect under investigation, can be formulated as selecting $L \subset V$ such that
\begin{equation} \label{eq:no-confounding}
  V_j \nordpath L, V_k \nordpath L,~\text{and}~ W_s[V_j \confpath V_k
  \mid L] = \emptyset,
\end{equation}
that is, $L$ is not a descendant of $V_j$ or $V_k$ and all simple
confounding walks from $V_j$ to $V_k$ are blocked by $L$. \textcite{guo_confounder_2023} show that, under the first two conditions in \eqref{eq:no-confounding}, the third condition is
equivalent to the back-door criterion of \textcite{pearl_causal_1995},
which is a sufficient condition for the ignorability condition of
\textcite{rosenbaum83_centr_role_propen_score_obser} under a structural
interpretation of the graph as described in the main paper.

It is obvious that an ancestrally blocked (see main text) path is also blocked in our sense. Let $P[V_j \confpath V_k \mid_a L]$ be the set of confounding paths from $V_j$ to $V_k$ that is not ancestrally blocked by $L$. It is shown in \textcite[Lemma 2]{guo_confounder_2023} that
\[
  W_s[V_j \confpath V_k \mid L] = \emptyset\quad\text{if and only if}\quad P[V_j \confpath V_k \mid_a L] = \emptyset.
\]
This equivalence also holds for m-connection, directed paths, and confounding arcs.

\section{Technical proofs}\label{sec:technical-proofs}

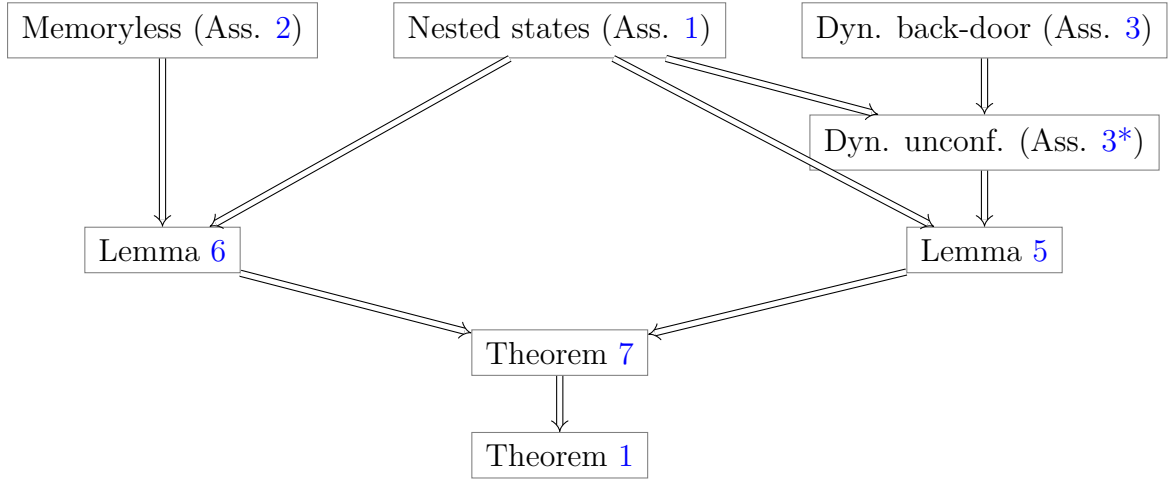
\begin{figure}[h!]
  \centering
  \vspace{1ex}
  \begin{tikzcd}[cells={nodes={draw=gray}}]
    \text{Memoryless (Ass. \ref{assump:memorylessness})} \arrow[dd, Rightarrow, black] & \text{Nested states (Ass. \ref{assump:nested-states})} \arrow[ldd, Rightarrow, black] \arrow[rd, Rightarrow, black]  \arrow[rdd, Rightarrow, black] & \text{Dyn. back-door (Ass. \ref{assump:no-confounding-a})} \arrow[d, Rightarrow, black] \\
    & & \text{Dyn. unconf. (Ass. \ref{assump:no-confounding-b})} \arrow[d, Rightarrow, black]  \\
    \text{\Cref{lemma:innovation_action_memorylessness_gt}} \arrow[rd, Rightarrow, black] & & \text{\Cref{lemma:innovation_action_unconfoundedness_gt}} \arrow[ld, Rightarrow, black] \\
    & \text{\Cref{thm:innovation_recursion}} \arrow[d, Rightarrow, black] & & \\
    & \text{\Cref{thm:main_result}} & &
  \end{tikzcd}
  \caption{Implications between assumptions, lemmas and theorems. Positivity and dynamic consistency are assumed to hold in all implications.}
  \label{fig:assump_implications}
\end{figure}

\begin{proof}[Proof of \Cref{prop:g-markov}]
  The construction of the new graph $\gG(g)$ describes the structure
  of the equations that define $V(g)$ and $V^-(g)$. The claim then
  follows from Theorem 3 in \textcite{zhao_admgs_2024} that shows that the
  nonparametric system of equations model implies the nested
  Markov model \parencite{richardson_nested_2023} which is global
  Markov by definition.
\end{proof}

\begin{proof}[Proof of \Cref{prop:dynamic-consistency}]
  Consider a topological ordering of $\gG(\underline{g}_{t-1})$ in the sense that $V_{j}(\underline{g}_{t-1}) \rdedge V_k(\underline{g}_{t-1}) \ingraph{\gG(\underline{g}_{t-1})}$ implies $V_j(\underline{g}_{t-1}) \prec V_k(\underline{g}_{t-1})$. Let $V_{j_1}(\underline{g}_{t-1}) \prec V_{j_2}(\underline{g}_{t-1}) \prec \dots \prec V_{j_p}(\underline{g}_{t-1})$ where every $V_{j_i}(\underline{g}_{t-1}) \in \{\underline{N}_t\cup \underline{A}_t\}(\underline{g}_{t-1})$. Clearly, since since $V_{j_1}(\underline{g}_{t-1})$ does not have any parents in $\{\underline{N}_t\cup \underline{A}_t\}(\underline{g}_{t-1})$, it takes the same value as $V_{j_1}(\underline{g}_{t})$ given that $A_{t-1}^{-}(\underline{g}_{t-1}) = A_{t-1}(\underline{g}_{t-1})$. This follows from simply replacing $A_{t-1}(\underline{g}_{t-1})$ by $A_{t-1}^{-}(\underline{g}_{t-1})$ in the structural equation and noticing that $A_{t-1}^{-}(\underline{g}_{t-1}) = A_{t-1}(\underline{g}_{t})$ (which is also equal to $A_{t-1}$). Considering $V_{j_2}(\underline{g}_{t-1})$, which may depend on $V_{j_1}(\underline{g}_{t-1})$, the arguments is the same. But in addition we use that $V_{j_1}(\underline{g}_{t-1}) = V_{j_1}(\underline{g}_{t})$ given that $A_{t-1}^{-}(\underline{g}_{t-1}) = A_{t-1}(\underline{g}_{t-1})$. The claim then follows from repeating this argument for $j_3, j_4, \dots, j_p$.
\end{proof}

\begin{lemma} \label{lem:vertex-map}
  For $1 \leq t \leq T$ and $1 \leq k \leq T - t +1$, consider the
  dynamic SWIG $\gG(\underline{g}_t)$ with vertices
  \[
    V(\gG(\underline{g}_{t})) = \overline{X}_{T+1}(\underline{g}_{t})
    \cup \overline{A}_T(\underline{g}_{t}) \cup \bigcup_{i=t}^T
    \{A_i^-(\underline{g}_{t})\},
  \]
  and another dynamic SWIG $\gG(\underline{g}_{t+k})$ with vertices
  \[
    V(\gG(\underline{g}_{t+k})) = \overline{X}_{T+1}(\underline{g}_{t+k})
    \cup \overline{A}_T(\underline{g}_{t+k}) \cup \bigcup_{i=t+k}^T
    \{A_i^-(\underline{g}_{t+k})\}.
  \]
  Given a walk $w$ in $\gG(\underline{g}_{t})$ that does not contain
  any $A_i(\underline{g}_t)$ for all $t \leq i \leq t+k-1$,
  the following map of vertices maps $w$ to a walk in
  $\gG(\underline{g}_{t+k})$:
  \begin{align*}
      &X_k(\underline{g}_t) \mapsto X_k(\underline{g}_{t+k}), ~ 1
        \leq k \leq T, \\
      &A_i^{-}(\underline{g}_t) \mapsto
      \begin{cases}
          A_i(\underline{g}_{t+k}), ~ t \leq i \leq t + k - 1, \\
          A_i^{-}(\underline{g}_{t+k}), ~ t+k \leq i \leq T,
      \end{cases} \\
      &A_i(\underline{g}_t) \mapsto A_i(\underline{g}_{t+k}), ~ 1
        \leq i \leq t-1 ~\text{and}~ t+k \leq i \leq T.
  \end{align*}
  In the image walk of $w$, every $A_i(\underline{g}_{t+k})$ for $t
  \leq i \leq t + k - 1$ is either an endpoint or
  a collider. Furthermore, if $w$ is a simple walk, its image is also a
  simple walk; if $w$ is a path, its image is also a path.
\end{lemma}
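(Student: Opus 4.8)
The plan is to prove this entirely by tracking the d-SWIG construction and verifying that the stated vertex map carries each edge traversed by $w$ to a genuine edge of $\gG(\underline{g}_{t+k})$ with matching orientations at both endpoints. Once this edge-level correspondence is established, the claim that the image is a walk, the collider/endpoint statement, and the preservation of simplicity and path-ness all follow by routine bookkeeping. So the real content is a careful comparison of the two graph transformations.

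First I would recall how both d-SWIGs arise from the common underlying ADMG $\gG$: $\gG(\underline{g}_t)$ splits exactly the decisions $A_i$ with $i\geq t$, while $\gG(\underline{g}_{t+k})$ splits exactly those with $i\geq t+k$. Consequently the two graphs coincide, up to the relabelling by policy superscript, on all $X$-vertices, on the unsplit decisions $A_i$ with $i\leq t-1$, and on the split decisions $A_i$ with $i\geq t+k$. The only structural difference concerns the decisions $A_i$ with $t\leq i\leq t+k-1$, which are split in $\gG(\underline{g}_t)$ but kept whole in $\gG(\underline{g}_{t+k})$. For such an $i$, the natural-counterfactual half $A_i^{-}(\underline{g}_t)$ inherits precisely the incoming edges of $A_i$ in $\gG$, that is, all directed and bidirected edges with an arrowhead into $A_i$, and has no children; the potential-outcome half $A_i(\underline{g}_t)$ inherits the outgoing edges. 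Since these $A_i$ are not intervened under $\underline{g}_{t+k}$, the merged vertex $A_i(\underline{g}_{t+k})$ retains all of these edges, so the incoming edges of $A_i^{-}(\underline{g}_t)$ correspond exactly to the incoming edges of $A_i(\underline{g}_{t+k})$.

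Next I would go edge by edge along $w$. For edges not incident to any merged decision, the map is just the relabelling and the image is plainly an edge of $\gG(\underline{g}_{t+k})$ with unchanged orientation. The only edges incident to a merged decision that $w$ can traverse are the incoming edges of $A_i^{-}(\underline{g}_t)$, because the hypothesis rules out the vertex $A_i(\underline{g}_t)$ for $t\leq i\leq t+k-1$, and hence rules out every outgoing edge of that half, which is exactly the type of edge that would have no image. Each such incoming edge maps to the corresponding incoming edge of $A_i(\underline{g}_{t+k})$, with its arrowhead still pointing into that vertex. Because consecutive edges of $w$ share a vertex and the map is defined on all vertices of $w$, the images of consecutive edges still share the mapped vertex, so the image sequence is a walk in $\gG(\underline{g}_{t+k})$. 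The collider claim is then immediate: whenever $A_i(\underline{g}_{t+k})$ with $t\leq i\leq t+k-1$ is a non-endpoint of the image walk, both adjacent edges are images of incoming edges of $A_i^{-}(\underline{g}_t)$ and thus carry arrowheads into $A_i(\underline{g}_{t+k})$, making it a collider; otherwise it is an endpoint.

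Finally I would observe that the vertex map is injective: distinct indices go to distinct $A$-vertices, and for a fixed $i$ no two source vertices are mapped to the same target, since for $t\leq i\leq t+k-1$ only $A_i^{-}(\underline{g}_t)$ lies in the domain, and for $i\geq t+k$ the two halves go to the two distinct halves $A_i(\underline{g}_{t+k})$ and $A_i^{-}(\underline{g}_{t+k})$. Injectivity on vertices immediately transfers the ``at most once'' condition defining simple walks and paths from $w$ to its image, giving the last two assertions. I expect the main obstacle to be the precise edge bookkeeping of the middle step, specifically confirming that the edges of $w$ incident to a merged decision are exactly its incoming edges and that these match the incoming edges of the unsplit vertex in $\gG(\underline{g}_{t+k})$; the hypothesis excluding $A_i(\underline{g}_t)$ for $t\leq i\leq t+k-1$ is what makes this matching clean and guarantees every edge of $w$ has a well-defined image.
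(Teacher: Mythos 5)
Your proposal is correct and takes essentially the same route as the paper's own proof: an edge-by-edge verification that the vertex map sends edges of $\gG(\underline{g}_t)$ to edges of the same type and orientation in $\gG(\underline{g}_{t+k})$, the observation that $A_i^{-}(\underline{g}_t)$ has only incoming edges (which gives the endpoint-or-collider claim), and injectivity of the vertex map (which gives preservation of simple walks and paths). The only difference is that you spell out the bookkeeping---in particular why excluding $A_i(\underline{g}_t)$ for $t \leq i \leq t+k-1$ guarantees every traversed edge has a well-defined image---which the paper compresses into ``easy to verify using the definition of dynamic SWIGs.''
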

\begin{proof}
It is easy to verify using the definition of dynamic SWIGs that
edges in $\gG(\underline{g}_t)$ between the pre-images of the above vertex
map remain edges of the same type in $\gG(\underline{g}_{t+k})$
between the images of the vertex map. The claim that every $A_i(\underline{g}_{t+k})$ for $t \leq i \leq t + k - 1$ in the image of $w$ are either endpoints or colliders follows from the fact that $A_i^{-}(\underline{g}_t)$ only contains incoming edges in $\gG(\underline{g}_{t})$. The fact that a simple walk
(path) is mapped to a simple walk (path) follows from the
observation that the vertex map is injective.
\end{proof}

\begin{lemma} \label{lem:special-walk-1}
Let Assumption \ref{assump:nested-states} be given. Consider any
$Z_t(\underline{g}_t) \in A^{-}_t(\underline{g}_t) \cup (A_{t-1} \cup
S_{t-1}) \setminus S_t$. Suppose
\begin{equation}
  \label{eq:w-2}
   (\underline{N}_{t+1} \cup \underline{A}_{t+1})(\underline{g}_t)
   \mconn Z_t(\underline{g}_t) \mid (A_t \cup S_t)(\underline{g}_t).
\end{equation}
Then there exists a simple walk that does not contain any vertex in
$\underline{A}_{t+1}(\underline{g}_t)$ and looks like
\[
   \underline{N}_{t+1}(\underline{g}_t) \mconn
   Z_t(\underline{g}_t) \mid (A_t \cup S_t)(\underline{g}_t).
\]
\end{lemma}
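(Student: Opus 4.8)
The plan is to start from the m-connecting walk supplied by the hypothesis, read it from the $Z_t$ end, and show that the first vertex it meets in the ``future region'' $\underline{N}_{t+1}(\underline{g}_t)\cup\underline{A}_{t+1}(\underline{g}_t)$ must already lie in $\underline{N}_{t+1}(\underline{g}_t)$; truncating there produces the required walk. Concretely, since m-connection via walks is equivalent to m-connection via paths (\Cref{subsec:m-separation}), I would first take an unblocked \emph{path} $w=(Z_t=W_0,W_1,\dots,W_m)$ with $W_m\in\underline{N}_{t+1}(\underline{g}_t)\cup\underline{A}_{t+1}(\underline{g}_t)$, and set $\ell=\min\{\,i\ge 1: W_i\in\underline{N}_{t+1}(\underline{g}_t)\cup\underline{A}_{t+1}(\underline{g}_t)\,\}$. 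The sub-path $(W_0,\dots,W_\ell)$ is again an unblocked path whose interior avoids the future region entirely, so it suffices to prove $W_\ell\in\underline{N}_{t+1}(\underline{g}_t)$; reversing it then gives the statement.

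Before that I would record two structural facts about $\gG(\underline{g}_t)$, writing $L=(A_t\cup S_t)(\underline{g}_t)$ for the conditioning set. First, for $j\ge t+1$ the potential-outcome vertex $A_j(\underline{g}_t)$ carries no bidirected edge and only incoming directed edges from $\pa(A_j)\subseteq S_j(\underline{g}_t)$, and $A_j(\underline{g}_t)\notin L$; hence in any unblocked walk it is never a collider, every strict descendant of it is topologically later than $A_t$ and therefore outside $L$, while $Z_t\prec A_j$ for each admissible form of $Z_t$. Second, applying \Cref{assump:nested-states} repeatedly, $\pa(A_j)\subseteq L\cup\underline{N}_{t+1}(\underline{g}_t)\cup\underline{A}_{t+1}(\underline{g}_t)$: an action-parent $A_i$ with $i<t$, or an innovation parent lying in some $X_i$ with $i\le t$, survives in the nested states all the way down to $S_t$ and hence sits in $L$; an innovation parent in $X_i$ with $i>t$ lies in $N_i\subseteq\underline{N}_{t+1}(\underline{g}_t)$; and an action-parent $A_i$ is $A_t\in L$ when $i=t$ and is in $\underline{A}_{t+1}(\underline{g}_t)$ when $i>t$.

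With these in hand I would rule out $W_\ell\in\underline{A}_{t+1}(\underline{g}_t)$, say $W_\ell=A_j$, by inspecting the edge $W_{\ell-1}-W_\ell$. If it points \emph{out} of $A_j$ (towards $Z_t$), then every vertex of $(W_{\ell-1},\dots,W_0)$ is a descendant of $A_j$, so none can be a collider of $w$ (a collider would have to lie in $L$, impossible for a descendant of $A_j$); the non-collider condition then forces each successive edge to point forward, exhibiting a directed path $A_j\rdpath Z_t$ and contradicting $Z_t\prec A_j$. If instead the edge points \emph{into} $A_j$, then $W_{\ell-1}$ is a parent of $A_j$ lying outside the future region, hence in $L$ by the second fact; being incident to a tail it is a non-collider of $w$ lying in $L$, so $w$ would be blocked, a contradiction. (The degenerate case $\ell=1$ is excluded because $Z_t$, being either $A_t^-(\underline{g}_t)$ or an element of $(A_{t-1}\cup S_{t-1})\setminus S_t$, can never be a parent of $A_j$.) Thus $W_\ell\in\underline{N}_{t+1}(\underline{g}_t)$, and the (reversed) sub-path is a simple walk from $\underline{N}_{t+1}(\underline{g}_t)$ to $Z_t$, unblocked given $L$ and containing no vertex of $\underline{A}_{t+1}(\underline{g}_t)$.

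I expect the outgoing-edge case to be the crux: it is the one place where the ``parent bookkeeping'' coming from \Cref{assump:nested-states} does not settle matters, and one must instead invoke acyclicity together with the fact that $Z_t$ lies strictly before every future action in the topological order. The other delicate point is getting the second structural fact exactly right, so that the ``dropped'' history variables are correctly allocated either to $L$ or to $\underline{N}_{t+1}(\underline{g}_t)$; this is precisely where \Cref{assump:nested-states} does its work.
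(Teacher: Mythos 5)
Your case analysis at the critical vertex is essentially the paper's own argument, read from the $Z_t$ end rather than from the future-action end: the outgoing-edge case is killed by extracting a directed path from a future action back in time, and the incoming-edge case by noting that the neighbour is a parent of $A_j(\underline{g}_t)$, which Assumption \ref{assump:nested-states} forces either into the conditioning set $L$ (blocking the walk) or into the future region (contradicting minimality). The genuine gap is in your very first step. The equivalence you cite from \Cref{subsec:m-separation} is between unblocked \emph{walks} and \emph{ancestrally} unblocked paths; it does not provide a path that is unblocked in the paper's walk sense, i.e.\ with every collider lying \emph{in} $L$, yet that is exactly the property your argument uses (``a collider would have to lie in $L$''). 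The two notions genuinely differ: if $V_1 \rdedge V_3 \ldedge V_2$ and $V_3 \rdedge V_4$ with $L=\{V_4\}$, then the walk $V_1 \rdedge V_3 \rdedge V_4 \ldedge V_3 \ldedge V_2$ is unblocked, but the only path $V_1 \rdedge V_3 \ldedge V_2$ contains the collider $V_3 \notin L$ and so is blocked in the walk sense (though not ancestrally).

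Two repairs are available. (i) Keep paths and use ancestral blocking throughout: your Case A still goes through, because a strict descendant of $A_j(\underline{g}_t)$ is topologically later than $A_t$ and hence cannot even be an \emph{ancestor} of $L$, and Case B is unchanged; but your output is then only an ancestrally unblocked path, whereas the lemma asserts the existence of an unblocked simple walk avoiding $\underline{A}_{t+1}(\underline{g}_t)$, so you owe the conversion step: the Bayes-ball detours that upgrade such a path to an unblocked walk pass only through ancestors of $L$, which all precede $A_{t+1}$ topologically and so never reintroduce a future action (with a final truncation to restore simplicity if a detour revisits $Z_t$). (ii) Cleaner, and what the paper does: work with unblocked \emph{simple walks} from the start. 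Truncating an unblocked simple walk at the first future-region vertex met from the $Z_t$ end preserves unblockedness (interior vertices keep their collider or non-collider status) and keeps both endpoints unique, after which your two cases apply verbatim. With either repair your proof is complete and coincides in substance with the paper's.
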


\begin{proof}
  Consider a simple walk $w$ of the kind in \eqref{eq:w-2}. If $w$
  does not contain any vertex in
  $\underline{A}_{t+1}(\underline{g}_t)$, the claim follows
  immediately. Otherwise, let $A_{i}(\underline{g}_t)$, $t+1 \leq i
  \leq T$, to be the right-most such vertex in $w$. This shows that
  $w$ contains a
  subwalk $w'$ of the kind
  \[
      A_{i}(\underline{g}_t) \mconn Z_t(\underline{g}_t) \mid
      (A_t \cup S_t)(\underline{g}_t),
  \]
  and by construction $w'$ does not contain any vertex in
  $\underline{A}_{t+1}(\underline{g}_t)$ besides its left endpoint.
  Since $S_{i}(\underline{g}_t)$ contain the parent set of
  $A_{i}(\underline{g}_t)$, $w'$ must begin with
  $A_{i}(\underline{g}_t) \ldedge S_{i}(\underline{g}_t) \setminus
  S_t(\underline{g}_t)$ (we can substract $S_t(\underline{g}_t)$
  because otherwise the walk is blocked; $w'$ cannot start with
  $A_{i}(\underline{g}_t) \rdedge$ because it implies a
  directed walk from $A_{i}(\underline{g}_t)$ to $(Z_t \cup A_t \cup
  S_t)(\underline{g}_t)$ that goes back in time).
  By Assumption \ref{assump:nested-states}, we know
  \[
    S_{i}(\underline{g}_t) \setminus S_t(\underline{g}_t) \subseteq
  \underline{N}_{t+1}(\underline{g}_t) \cup
  \underline{A}_{t+1}(\underline{g}_t).
  \]
  Because $w'$ does not contain any vertex in
  $\underline{A}_{t+1}(\underline{g}_t)$ as a non-endpoint, this
  shows that $w'$ looks like
  \[
    A_i(\underline{g}_t) \ldedge
    \underline{N}_{t+1}(\underline{g}_t) \halfsquigfull \ast
      \fullsquigfull Z_t(\underline{g}_t) \mid (A_t \cup
      S_t)(\underline{g}_t).
    \]
  By considering the right most vertex in
  $\underline{N}_{t+1}(\underline{g}_t)$ that is contained in $w'$,
  we obtain a simple subwalk $w''$ of the kind
  \[
    \underline{N}_{t+1}(\underline{g}_t) \mconn Z_t(\underline{g}_t) \mid (A_t \cup
      S_t)(\underline{g}_t),
  \]
  so the claim follows.
\end{proof}

\begin{lemma}\label{lemma:innovation_action_unconfoundedness_gt_graph}
  Under Assumption \ref{assump:nested-states} and \ref{assump:no-confounding-b}, we have
    \begin{align*}
        \textnot (\underline{N}_{t+1} \cup \underline{A}_{t+1})(\underline{g}_t) \mconn A^{-}_t(\underline{g}_t) \mid (A_t \cup S_t)(\underline{g}_t).
    \end{align*}
\end{lemma}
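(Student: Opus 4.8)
The plan is to argue by contradiction against the dynamic unconfoundedness Assumption~\ref{assump:no-confounding-b}. Suppose the asserted m-separation fails, i.e.\ there is an unblocked walk $(\underline{N}_{t+1} \cup \underline{A}_{t+1})(\underline{g}_t) \mconn A_t^-(\underline{g}_t) \mid (A_t \cup S_t)(\underline{g}_t)$ in $\gG(\underline{g}_t)$. The first step is to invoke \Cref{lem:special-walk-1} with $Z_t(\underline{g}_t) = A_t^-(\underline{g}_t)$, which replaces this walk by a \emph{simple} unblocked walk $w$ running from some $N_{s+1}(\underline{g}_t)$ (with $t \leq s \leq T$) to $A_t^-(\underline{g}_t)$ and avoiding every potential outcome in $\underline{A}_{t+1}(\underline{g}_t)$. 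Since $A_t^-(\underline{g}_t)$ is a natural counterfactual with only incoming edges, $w$ terminates with an arrowhead into $A_t^-(\underline{g}_t)$.

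The crux is to promote $w$ to a genuine confounding walk (arrowheads at \emph{both} endpoints) and transport it to the graph $\gG(\underline{g}_{t+1})$ where Assumption~\ref{assump:no-confounding-b} lives. Two facts must be verified. First, $w$ cannot pass through the potential outcome $A_t(\underline{g}_t)$: were it to do so, then since $A_t(\underline{g}_t)$ lies in the conditioning set it would have to be an unblocked collider, forcing both of its walk-neighbours to be parents lying in $S_t(\underline{g}_t)$; but such a neighbour $P$ meets $A_t(\underline{g}_t)$ through the outgoing edge $P \rdedge A_t(\underline{g}_t)$, so $P$ carries a tail there and cannot itself be a collider, yet $P \in S_t(\underline{g}_t)$ is conditioned upon, contradicting that $w$ is unblocked. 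Second, $w$ must \emph{begin} with an arrowhead into $N_{s+1}(\underline{g}_t)$: if instead it started $N_{s+1}(\underline{g}_t) \rdedge \cdots$, the maximal initial directed segment would terminate at a collider $V_m$ that is a strict descendant of $N_{s+1}(\underline{g}_t)$, hence strictly later than $A_t$ in the topological order $\prec$; such a $V_m$ is neither in $S_t(\underline{g}_t)$ (whose vertices precede $A_t$) nor equal to $A_t(\underline{g}_t)$, so it lies outside the conditioning set and blocks $w$, a contradiction. Thus $w$ is in fact a simple confounding walk $N_{s+1}(\underline{g}_t) \confpath A_t^-(\underline{g}_t)$, unblocked given $(A_t \cup S_t)(\underline{g}_t)$.

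The final step is to apply the vertex map of \Cref{lem:vertex-map} with $k = 1$, which is admissible precisely because $w$ avoids both $A_t(\underline{g}_t)$ and $\underline{A}_{t+1}(\underline{g}_t)$. This map is injective and preserves edge types, sends $A_t^-(\underline{g}_t)$ to the recombined vertex $A_t(\underline{g}_{t+1})$ and each $S_t$-vertex to its relabelled copy, and therefore carries $w$ to a simple confounding walk $N_{s+1}(\underline{g}_{t+1}) \confpath A_t(\underline{g}_{t+1})$ in $\gG(\underline{g}_{t+1})$. Because $w$ never visits $A_t(\underline{g}_t)$, and blocking in the walk-based (Bayes-ball) sense depends only on which vertices of the walk lie in the conditioning set, dropping the $A_t$ component leaves the image unblocked given $S_t(\underline{g}_{t+1})$. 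This contradicts Assumption~\ref{assump:no-confounding-b} applied with $k = t$, completing the argument.

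The main obstacle I anticipate is the endpoint-orientation bookkeeping in the middle step: \Cref{lem:special-walk-1} only guarantees a generic m-connecting walk with unrestricted endpoints, whereas Assumption~\ref{assump:no-confounding-b} is stated for confounding walks with arrowheads at both ends and in the \emph{different} d-SWIG $\gG(\underline{g}_{t+1})$. Converting the temporal/ancestral intuition—that a forward-directed start would be forced to ``turn around'' at a late collider lying outside $S_t$—into a clean proof that $w$ enters $N_{s+1}$ with an arrowhead, together with excluding $A_t(\underline{g}_t)$ from $w$, is the delicate part; once these are settled, the transport via the vertex map and the contradiction with Assumption~\ref{assump:no-confounding-b} are routine.
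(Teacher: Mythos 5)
Your proposal is correct and takes essentially the same route as the paper's proof: invoke \Cref{lem:special-walk-1} to get a simple walk from $\underline{N}_{t+1}(\underline{g}_t)$ avoiding $\underline{A}_{t+1}(\underline{g}_t)$, show it cannot pass through $A_t(\underline{g}_t)$ and must carry arrowheads at both endpoints (the paper's three-case analysis together with its appeal to the absence of directed paths going back in time), and then transport the resulting confounding walk to $\gG(\underline{g}_{t+1})$ via \Cref{lem:vertex-map} to contradict \Cref{assump:no-confounding-b}. The only differences are cosmetic: you reorder the steps and spell out the endpoint-orientation argument that the paper compresses into a single sentence.
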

\begin{proof}[Proof of
  \Cref{lemma:innovation_action_unconfoundedness_gt_graph}]
  If the claim is not true, by \Cref{lem:special-walk-1} there exists
  a simple walk $w$ of the kind
  \[
    \underline{N}_{t+1}(\underline{g}_t) \mconn
    A^{-}_t(\underline{g}_t) \mid (A_t \cup S_t)(\underline{g}_t).
  \]
  Because $\underline{N}_{t+1}(\underline{g}_t) \nordpath (A_t^- \cup
  A_t \cup S_t)(\underline{g}_t)$ and $A^{-}_t(\underline{g}_t)$ has
  no outgoing edges, $w$ must look like
  \[
    \underline{N}_{t+1}(\underline{g}_t) \confpath
    A^{-}_t(\underline{g}_t) \mid (A_t \cup S_t)(\underline{g}_t).
  \]
  Now consider the following cases:
    \begin{enumerate}
        \item $w$ contains $S_t(\underline{g}_t) \rdedge
          A_t(\underline{g}_t)$ or $A_t(\underline{g}_t) \ldedge
          S_t(\underline{g}_t)$.

        This contradicts the assumption that $w$ is not blocked by
        $S_t(\underline{g}_t)$.

      \item $w$ contains $\ldedge A_t(\underline{g}_t) \rdedge $.

        This contradicts the assumption that $w$ is not blocked by
        $A_t(\underline{g}_t)$.

      \item $w$ does not contain $A_t(\underline{g}_t)$.

        In this case, we may drop $A_t(\underline{g}_t)$ from the
        conditioning set, so $w$ looks like
        \[
          \underline{N}_{t+1}(\underline{g}_t) \confpath
        A^{-}_t(\underline{g}_t) \mid S_t(\underline{g}_t).
      \]
      By \Cref{lem:vertex-map}, $w$ can be mapped into a simple
      walk of the kind
        \[
            \underline{N}_{t+1}(\underline{g}_{t+1}) \confpath A_t \mid S_t \ingraph{\gG(\underline{g}_{t+1})}.
        \]
        However, this contradicts Assumption \ref{assump:no-confounding-b}.
    \end{enumerate}
    In conclusion, the existence of the walk $w$ contradicts the
    assumptions.
\end{proof}

\begin{lemma}\label{lemma:innovation_action_memorylessness_gt_graph}
  Under Assumptions
  \ref{assump:nested-states}, \ref{assump:memorylessness} and \ref{assump:no-confounding-b}
  we have
    \begin{align*}
        \textnot (\underline{N}_{t+1} \cup \underline{A}_{t+1})(\underline{g}_t) \mconn (A_{t-1} \cup S_{t-1}) \setminus S_t \mid (A_t \cup S_t)(\underline{g}_t)
    \end{align*}
\end{lemma}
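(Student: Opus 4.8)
The plan is to argue by contradiction and to mirror, step for step, the proof of \Cref{lemma:innovation_action_unconfoundedness_gt_graph}, replacing its appeal to the unconfoundedness assumption by an appeal to \Cref{assump:memorylessness} in the original graph $\gG$. Suppose the asserted m-separation fails. Since the target set $(A_{t-1}\cup S_{t-1})\setminus S_t$ is exactly of the form covered by \Cref{lem:special-walk-1}, I would first invoke that lemma to extract a simple walk $w$ in $\gG(\underline{g}_t)$ of the form $\underline{N}_{t+1}(\underline{g}_t)\mconn Z_t(\underline{g}_t)\mid (A_t\cup S_t)(\underline{g}_t)$ that contains no vertex of $\underline{A}_{t+1}(\underline{g}_t)$, where $Z_t\in (A_{t-1}\cup S_{t-1})\setminus S_t$. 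The whole problem is thereby reduced to excluding such a walk.

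Next I would analyse how the potential-outcome vertex $A_t(\underline{g}_t)$ can occur on $w$, following the case analysis of \Cref{lemma:innovation_action_unconfoundedness_gt_graph}. Because the only parents of $A_t(\underline{g}_t)$ in the d-SWIG are the conditioning variables $S_t(\underline{g}_t)$, any occurrence of $A_t(\underline{g}_t)$ as a non-collider is blocked (it lies in the conditioning set), so the only surviving pattern is a collider $S_t(\underline{g}_t)\rdedge A_t(\underline{g}_t)\ldedge S_t(\underline{g}_t)$ with both parents in $S_t$. Chasing the adjacent conditioned state vertices (which must themselves be colliders to avoid blocking), and dispatching any residual confounding sub-walk exactly as in the previous lemma with Assumption~\ref{assump:no-confounding-b} available, lets me reroute $w$ to a simple unblocked walk that avoids the potential-outcome half $A_t(\underline{g}_t)$ entirely; $A_t(\underline{g}_t)$ may then be dropped from the conditioning set to leave $\underline{N}_{t+1}(\underline{g}_t)\mconn Z_t(\underline{g}_t)\mid S_t(\underline{g}_t)$.

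Since the resulting walk avoids every potential-outcome half $A_i(\underline{g}_t)$ for $i\geq t$, every decision vertex it meets is a natural counterfactual $A_i^-(\underline{g}_t)$, which carries exactly the incoming edges (directed and bidirected) of $A_i$ in $\gG$. Merging the split vertices (the inverse of the d-SWIG construction, $A_i^-(\underline{g}_t)\mapsto A_i$ and $V_j(\underline{g}_t)\mapsto V_j$) is therefore an edge- and collider-preserving injective relabelling, and it sends the walk to an unblocked walk in $\gG$ from some innovation $N_s$, $t+1\leq s\leq T$, to $Z_t\in (A_{t-1}\cup S_{t-1})\setminus S_t$ given $S_t$. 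When $s=t+1$ this directly contradicts \Cref{assump:memorylessness} at time $t$, because $Z_t\in (\overline{S}_{t-1}\cup\overline{A}_{t-1})\setminus S_t$.

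The main obstacle is the case $s>t+1$. Here \Cref{assump:memorylessness} separates $N_s$ from the past only when one conditions on $S_{s-1}$, whereas the walk is controlled only by $S_t$; unlike the unconfoundedness lemma, where Assumption~\ref{assump:no-confounding-b} is stated for all future innovations against the \emph{same} conditioning set $S_k$, here the conditioning set changes with the time index. To close the gap I would align the conditioning sets using \Cref{assump:nested-states}: iterating it gives $S_{s'}\setminus S_t\subseteq \{A_t,\dots,A_{s'-1}\}\cup\{X_{t+1},\dots,X_{s'}\}$ for every $t\leq s'\leq s-1$, and shows that the fixed past set $(A_{t-1}\cup S_{t-1})\setminus S_t$ is disjoint from every $S_{s'}$ with $s'\geq t$. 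One can then telescope the memorylessness separations at times $t,t+1,\dots,s-1$, using that m-separation in a fixed ADMG obeys the (compositional) graphoid axioms, to transport the conditioning set from $S_{s-1}$ down to $S_t$ and obtain $N_s$ separated from $(A_{t-1}\cup S_{t-1})\setminus S_t$ given $S_t$, contradicting the walk. Equivalently, one may run a downward induction on $t$, with base case $t=T$ where $\underline{N}_{T+1}=N_{T+1}$ is a single innovation handled directly by \Cref{assump:memorylessness}, so each step shifts the conditioning by a single state $S_{t+1}\to S_t$; either way the delicate point is that nested states guarantees the state variables added or removed between $S_{s'}$ and $S_t$ all lie at times $\geq t$ and hence cannot re-open a connection from the future innovations to the strictly earlier set $(A_{t-1}\cup S_{t-1})\setminus S_t$.
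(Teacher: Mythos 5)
Your first half tracks the paper's proof: contradiction, \Cref{lem:special-walk-1}, exclusion of $A_t(\underline{g}_t)$, and the vertex map into $\gG$. Two corrections there. First, the collider pattern $S_t(\underline{g}_t) \rdedge A_t(\underline{g}_t) \ldedge S_t(\underline{g}_t)$ cannot ``survive'' and needs no rerouting and no appeal to \Cref{assump:no-confounding-b} (which the paper's proof of this lemma never actually uses): the adjacent $S_t$-vertices carry edge tails, hence are non-colliders; they are non-endpoints (the endpoints lie in $\underline{N}_{t+1}(\underline{g}_t)$ and $(A_{t-1}\cup S_{t-1})\setminus S_t$, both disjoint from $S_t$); and they lie in the conditioning set, so the walk is blocked outright. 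Second, and more consequentially, the same reasoning applies to every $A_i^{-}(\underline{g}_t)$, $i \geq t$: these vertices have only incoming edges, so they could occur only as colliders, yet they are not in $(A_t\cup S_t)(\underline{g}_t)$. Hence the walk contains \emph{no} decision vertex for $i \geq t$ whatsoever. You instead assert that natural counterfactuals may lie on the walk and merge to $A_i$ in $\gG$; the paper notes explicitly that the image walk avoids every $A_i$, $i\geq t$, and this fact is essential below.

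The genuine gap is in your treatment of the crux, $s > t+1$. ``Transporting the conditioning set from $S_{s-1}$ down to $S_t$'' is not a licensed graphoid move: to delete $S_s \setminus S_t$ from a conditioning set by contraction you need exactly the auxiliary separation of $S_s\setminus S_t$ from $P := (A_{t-1}\cup S_{t-1})\setminus S_t$ given $S_t$, which is an instance of the very statement being proved (it concerns the intermediate innovations), so it must be supplied by an induction on the innovation index with contraction against the inductive hypothesis. Worse, carried out in $\gG$ as you propose, this auxiliary separation can be outright false: when intermediate states contain previous decisions, $S_s\setminus S_t$ contains behaviour actions $A_i$, $t\leq i<s$, and such an $A_i$ may depend directly on a dropped past variable in $P$ without violating \Cref{assump:memorylessness} (memorylessness constrains connections to $N_{i+1}$, not to $A_i$ itself). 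The telescoping only goes through in the induced subgraph on $V\setminus\{A_t,\dots,A_T\}$---legitimate precisely because of the omitted fact that the walk avoids all such vertices---where $S_s\setminus S_t$ consists of innovations only. Your fallback, a downward induction on $t$ with base $t=T$, also fails as stated: the hypothesis at $t+1$ concerns $(A_t\cup S_t)\setminus S_{t+1}$ and says nothing about the strictly earlier set $(A_{t-1}\cup S_{t-1})\setminus S_t$. The paper instead closes this step with a minimal-counterexample walk argument: take $j$ minimal such that an unblocked walk from $N_{j+1}$ to $P$ given $S_t$ exists, then a three-way case analysis on non-endpoints in $(S_j\setminus S_t)\cup(S_t\setminus S_j)$ yields a contradiction with either memorylessness at time $j$ or the minimality of $j$. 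Your graphoid route can be repaired (restrict to the action-free subgraph, induct on $s$, and combine weak union with contraction against the inductive hypothesis), but as written the decisive step is asserted rather than proved.
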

\begin{proof}[Proof of
  \Cref{lemma:innovation_action_memorylessness_gt_graph}]

  If the claim is not true, by \Cref{lem:special-walk-1} there exists
  a simple walk of the kind
  \[
    \underline{N}_{t+1}(\underline{g}_t) \mconn
    (A_{t-1} \cup S_{t-1}) \setminus S_t \mid (A_t \cup S_t)(\underline{g}_t),
  \]
  that does not contain any vertices in
  $\underline{A}_{t+1}(\underline{g}_t)$.

  It is easy to see that any such simple walk does not contain $A_t(\underline{g}_t)$, because otherwise the walk is blocked (see the first two cases in the proof of \Cref{lemma:innovation_action_unconfoundedness_gt_graph}). Hence, we may drop $A_t(\underline{g}_t)$ from the conditioning set, and therefore there exists a simple walk of the kind
  \[
  \underline{N}_{t+1}(\underline{g}_t) \halfsquigfull \ast \fullsquighalf
  (A_{t-1} \cup S_{t-1}) \setminus S_t \mid S_t(\underline{g}_t).
  \]
  By \Cref{lem:vertex-map} (with $k = T - t + 1$), any simple walk of the above kind can be mapped into a simple walk that looks like
  \begin{align}\label{eq:all_innovations_simple_walk}
      \underline{N}_{t+1} \halfsquigfull \ast \fullsquighalf
          (A_{t-1} \cup S_{t-1}) \setminus S_t \mid S_t \ingraph{\gG}.
  \end{align}
  Note that by \Cref{lem:vertex-map}, every $A_i$ for $i \geq t$ in this image in \eqref{eq:all_innovations_simple_walk} can only be a collider. However, by Assumption \ref{assump:nested-states} we know $A_i \notin S_t$, so clearly no such simple walk can contain $A_i$ for any $i \geq t$.

  Next, let $j$ be the smallest integer in $t \leq j \leq T - 1$ such that there exists a simple walk in
  \eqref{eq:all_innovations_simple_walk} with $N_{j+1}$ as the left endpoint. That is, there exists a simple walk $w$ like
  \begin{align}\label{eq:innovation_simple_walk}
      N_{j+1} \halfsquigfull \ast \fullsquighalf
      (A_{t-1} \cup S_{t-1}) \setminus S_t \mid S_t \ingraph{\gG},
  \end{align}
  but
  \begin{align}\label{eq:no_innovation_simple_walk}
      \textnot N_l \halfsquigfull \ast \fullsquighalf (A_{t-1} \cup
    S_{t-1}) \setminus S_t \mid S_t \ingraph{\gG},~\text{for all}~t+1 \leq l \leq j.
  \end{align}
  By Assumption \ref{assump:memorylessness} we know $j \neq t$ so $j >
  t$.

  We will now show that this leads to a contradiction. We first make a simple observation. By
  Assumption \ref{assump:nested-states}, $S_{j} \subseteq S_{t} \cup A_{t} \cup X_{t+1}  \dots \cup A_{j-1} \cup X_{j}$, therefore
  \[
      (A_{t-1} \cup S_{t-1}) \setminus S_{t} \subseteq (A_{t-1} \cup
      S_{t-1}) \setminus S_j.
  \]
  So $w$ is also a simple walk in
  \[
      N_{j+1} \halfsquigfull \ast \fullsquighalf (A_{t-1} \cup S_{t-1}) \setminus S_{j} \mid S_t \ingraph{\gG}.
    \]
    Now consider the following cases:
  \begin{enumerate}
    \item  $w$ does not contain any non-endpoint in $(S_j \setminus S_t) \cup (S_t \setminus S_j)$.

    It is easy to see that $w$ is still unblocked given $S_j$ instead of $S_t$, so
    \begin{align*}\label{eq:innovation_simple_walk_Sj}
        N_{j+1} \halfsquigfull \ast \fullsquighalf (A_{t-1} \cup
      S_{t-1}) \setminus S_{j} \mid S_j \ingraph{\gG},
    \end{align*}
    which contradicts Assumption \ref{assump:memorylessness}.

    \item $w$ contains no non-endpoint in $S_j \setminus S_t$ but some non-endpoint in $S_t \setminus S_j$.

    Take $M$ to be the left-most such non-endpoint. Then $w$ looks like
    \[
        N_{j+1}  \halfsquigfull \ast \fullsquighalf M
        \halfsquigfull \ast \fullsquighalf (A_{t-1} \cup S_{t-1})
        \setminus S_{t} \mid S_{t}.
    \]
    Thus, there exists a simple subwalk $w'$ of $w$ that looks like
    \[
        N_{j+1}  \halfsquigfull \ast \fullsquighalf M \mid S_{t}.
    \]
    Since $w'$ does not contain any non-endpoint in $(S_j \setminus S_t) \cup (S_t \setminus S_j)$, $w'$ must belong to
    \[
        N_{j+1}  \halfsquigfull \ast \fullsquighalf M \mid S_{j}.
    \]
    But since $M \in S_t \setminus S_j$, this contradicts Assumption \ref{assump:memorylessness} again.

    \item $w$ contains some non-endpoint in $S_j \setminus S_t$.

    Let $L$ be such a non-endpoint. This means that $w$ looks like
    \[
        N_{j+1}  \mconn L \halfsquigfull \ast \fullsquighalf (A_{t-1} \cup S_{t-1}) \setminus S_{t} \mid S_{t}.
      \]
    Recall that $w$ does not contain any non-endpoint in $A_i$ for $i \geq t$. By Assumption \ref{assump:nested-states}, there must exist $t+1 \leq i \leq j$ such that $L \in N_i$, so
    \[
        N_{i} \halfsquigfull \ast \fullsquighalf (A_{t-1} \cup S_{t-1}) \setminus S_{t} \mid S_{t},
    \]
    which contradicts \eqref{eq:no_innovation_simple_walk}.
  \end{enumerate}
\end{proof}

\begin{proof}[Proof of \Cref{lemma:innovation_action_unconfoundedness_gt,lemma:innovation_action_memorylessness_gt}]
  \Cref{lemma:innovation_action_unconfoundedness_gt} follows from
  \Cref{lemma:innovation_action_unconfoundedness_gt_graph} and
  \Cref{prop:g-markov}. \Cref{lemma:innovation_action_memorylessness_gt}
  follows from \Cref{lemma:innovation_action_memorylessness_gt_graph}
  and \Cref{prop:g-markov}.
\end{proof}

\begin{proof}[Proof of \Cref{prop:strong-unconfoundedness-implies-weak}]
  Suppose there exists a walk $w$ that looks like
  \[
      N_{t+1}(\underline{g}_{k+1}) \confpath A_k \mid S_k
      \ingraph{\gG(\underline{g}_{k+1})},~\text{for some}~k \leq t \leq T.
  \]
  Consider the following cases.
  \begin{enumerate}
    \item $w$ contains $\rdedge A_i(\underline{g}_{k+1}) \ldedge$ for some $i \geq k+1$.

    Then $w$ looks like
    \[
        N_{t+1}(\underline{g}_{k+1}) \fullsquigfull \ast \fullsquighalf S_i(\underline{g}_{k+1}) \rdedge A_i(\underline{g}_{k+1}) \ldedge S_i(\underline{g}_{k+1}) \halfsquigfull \ast \fullsquigfull A_k \mid S_k \ingraph{\gG(\underline{g}_{k+1})}, k\leq t \leq T.
    \]
    But since $A_i(\underline{g}_{k+1}) \notin S_k$ this walk is blocked.

    \item $w$ contains $S_i(\underline{g}_{k+1}) \rdedge
      A_i(\underline{g}_{k+1}) \rdedge$ for some $i \geq k+1$.

    Then $w$ looks like
    \[
        N_{t+1}(\underline{g}_{k+1}) \fullsquigfull \ast \fullsquighalf S_i(\underline{g}_{k+1}) \rdedge A_i(\underline{g}_{k+1}) \nosquigfull \ast \fullsquigfull A_k \mid S_k \ingraph{\gG(\underline{g}_{k+1})}.
    \]
    However, this contradicts $A_i(\underline{g}_{k+1}) \nordpath A_k \cup S_k$.

    \item $w$ contains $\ldedge A_i(\underline{g}_{k+1}) \rdedge$ for
      some $i \geq k+1$.

    Then $w$ looks like
    \[
        N_{t+1}(\underline{g}_{k+1}) \fullsquigfull \ast \fullsquigno A_i(\underline{g}_{k+1}) \rdpath \ast \fullsquigfull  A_k \mid S_k \ingraph{\gG(\underline{g}_{k+1})},
    \]
    Same as 2, this contradicts $A_i(\underline{g}_{k+1}) \nordpath A_k \cup S_k$.

    \item $w$ contains $\ldedge A_i(\underline{g}_{k+1}) \ldedge
      S_i(\underline{g}_{k+1})$ for some $i \geq k+1$.

    By considering the right-most $A_i(\underline{g}_{k+1})$ in this
    walk, $w$ has a sub-walk that looks like
    \[
        A_i(\underline{g}_{k+1}) \ldedge S_i(\underline{g}_{k+1})
        \setminus S_k \halfsquigfull \ast \fullsquigfull A_k \mid S_k \ingraph{\gG(\underline{g}_{k+1})},
      \]
    and we can subtract $S_k$ here because the walk is unblocked given $S_k$.
    Since by Assumption \ref{assump:nested-states},
    \[
        S_i(\underline{g}_{k+1}) \subseteq (S_k \cup A_k \cup X_{k+1} \cup \dots, A_{i-1} \cup X_i)(\underline{g}_{k+1}),
    \]
    and $S_i(\underline{g}_{k+1}) \setminus S_k \nordpath A_k \cup
    S_k$, this subwalk must look like
    \[
        A_i(\underline{g}_{k+1}) \ldedge S_i(\underline{g}_{k+1})
        \setminus S_k
        \confpath A_k \mid S_k \ingraph{\gG(\underline{g}_{k+1})}.
      \]
    Note that if for some variable $L \in X_l$ we have $L \in
    S_i(\underline{g}_{k+1})$, then by Assumption
    \ref{assump:nested-states} we have $L \in
    S_l(\underline{g}_{k+1})$. Thus, there exists a simple walk $w'$
    that does not contain $A_i(\underline{g}_{k+1})$ and is like
    \[
        N_l(\underline{g}_{k+1}) \fullsquigfull \ast \fullsquigfull A_k \mid S_k \ingraph{\gG(\underline{g}_{k+1})},
    \]
    for some $k + 1 \leq l \leq i$. By \Cref{lem:vertex-map} $w'$ maps to
    a simple walk in $\gG$ of the kind
    \[
        N_l \fullsquigfull \ast \fullsquigfull A_k \mid S_k \ingraph{\gG},
    \]
    which contradicts Assumption \ref{assump:no-confounding-a}.

    \item $w$ contains no $A_i(\underline{g}_{k+1})$ for $i \geq k+1$.

    Then by \Cref{lem:vertex-map}, $w$ maps to a walk in $\gG$ that looks like
    \[
        N_{t+1} \confpath A_k \mid S_k \ingraph{\gG}, ~ k \leq t \leq T.
    \]
    However, this immediately contradicts Assumption \ref{assump:no-confounding-a}.
  \end{enumerate}
  In each case above, a contradiction is obtained. Thus, Assumption
  \ref{assump:no-confounding-b} must hold.
\end{proof}

\begin{proof}[Proof of \Cref{prop:seq_ignorability_and_dyn_unconfoundedness}]
  We first show that Assumption \ref{assump:no-confounding-b} implies
  sequential ignorability, that is, the ``$\Rightarrow$''
  direction in this Proposition. Suppose there is a walk $w$ that looks like
  \[
      R_{T+1}(\underline{g}_1) \mconn A_t^{-}(\underline{g}_1) \mid
      S_t(\underline{g}_1) \cup
      \overline{A}^{-}_{t-1}(\underline{g}_1) \ingraph{\gG(\underline{g}_1)}.
  \]
  Since $R_{T+1}(\underline{g}_1) \nordpath S_t(\underline{g}_1) \cup \overline{A}^{-}_{t}(\underline{g}_1)$ and $A_t^{-}(\underline{g}_1)$ has no outgoing arrows, $w$ must be of the kind
  \[
      R_{T+1}(\underline{g}_1) \confpath A_t^{-}(\underline{g}_1) \mid
      S_t(\underline{g}_1) \cup
      \overline{A}^{-}_{t-1}(\underline{g}_1) \ingraph{\gG(\underline{g}_1)}.
  \]
  Since $S_i(\underline{g}_1)$
  contain the parent set of $A_i(\underline{g}_1)$, $w$ cannot contain
  any $A_i(\underline{g}_1)$ for $i < t$; otherwise it would contain
  either $S_i(\underline{g}_1) \rdedge A_i(\underline{g}_1)$ or
  $\ldedge A_i(\underline{g}_1) \rdedge$ and both would be
  blocked (because $(S_i \cup A_i)(\underline{g}_1) \subseteq
  S_t(\underline{g}_1)$ by Assumption \ref{assump:DTR}). Thus, by
  \Cref{lem:vertex-map}, $w$ can be mapped to a walk $w'$
  in $\gG(\underline{g}_t)$ that looks like
  \[
      R_{T+1}(\underline{g}_t) \confpath A_t^{-}(\underline{g}_t) \mid S_t
      \ingraph{\gG(\underline{g}_t)}.
  \]
  Next, suppose that $w'$ contains $A_t(\underline{g}_t)$. Then since $A_t(\underline{g}_t) \notin S_t$, it cannot be a collider in $w'$. Furthermore, since $S_t$ contain the parent set of $A_t(\underline{g}_t)$, $w'$ must look like
  \begin{equation}
    \label{eq:w-prime}
      R_{T+1}(\underline{g}_t) \fullsquigfull \ast \fullsquigno
      A_t(\underline{g}_t) \nosquigfull \ast \fullsquigfull
      A_t^{-}(\underline{g}_t)
      \mid S_t \ingraph{\gG(\underline{g}_t)}.
  \end{equation}
  However, since
  \[
      A_t(\underline{g}_t) \nordpath (A_t^{-}\cup S_t)(\underline{g}_t),
  \]
  \eqref{eq:w-prime} is not possible. Hence, $w'$ cannot contains $A_t(\underline{g}_t)$, and by \Cref{lem:vertex-map}, $w$ can be mapped to a walk in $\gG(\underline{g}_{t+1})$ of the kind
  \begin{equation}\label{eq:seq_ignorability_type_walk}
      R_{T+1}(\underline{g}_{t+1}) \confpath A_t \mid S_t \ingraph{\gG(\underline{g}_{t+1})}, ~ \text{for} ~ t\in [T].
  \end{equation}
  However, this contradicts Assumption \ref{assump:no-confounding-b}.

  We now show the reverse implication ``$\Leftarrow$'' given
  \eqref{eq:directed-path-state-reward}, which says
  $V_j(g) \rdpath R_{T+1}(g)$ for all $V_j \in
  \overline{S}_T$. Suppose there exists a walk $w$ that is like
  \[
      N_{t+1}(\underline{g}_{k+1}) \confpath A_k \mid S_k \ingraph{\gG(\underline{g}_{t+1})}, ~ k\leq t\leq T.
  \]
  By \eqref{eq:directed-path-state-reward}, this implies that
  \[
      R_{T+1}(\underline{g}_{k+1}) \ldpath N_{t+1}(\underline{g}_{k+1}) \confpath A_k \mid S_k \ingraph{\gG(\underline{g}_{k+1})}, ~ k\leq t \leq T,
  \]
  which clearly also contradicts Assumption \ref{assump:no-confounding-b}.
\end{proof}

\begin{proof}[Proof of \Cref{prop:dynamic_back_door_no_At_in_St}]
  It suffices to prove that \eqref{eq:dynamic_back_door_t_only}
  implies Assumption \ref{assump:no-confounding-a}.
  Consider the induction hypothesis indexed by $0 \leq m \leq T-1$:
  \[
      \text{there is no simple walk like}~N_{t+1} \confpath A_k \mid S_k \ingraph{\gG}~\text{for
        all}~1 \leq k \leq t \leq T~\text{such that}~t \leq k + m.
    \]
  Equation \eqref{eq:dynamic_back_door_t_only} is equivalent to this
  hypothesis with $m = 0$ and Assumption \ref{assump:no-confounding-a}
  is equivalent to this hypothesis with $m = T-1$. We next prove that if
  this hypothesis is true for some $0 \leq m \leq T-2$, it is also true for
  $m + 1$.

  Suppose this hypothesis is not true for $m+1$, then there exists $t
  = k + m + 1$ and a simple walk $w$ like
  \[
      N_{t+1} \confpath A_k \mid S_k \ingraph{\gG}.
  \]
  Consider the following cases.
  \begin{enumerate}
    \item $w$ does not contain any vertex in $(S_t \setminus S_k) \cup (S_k \setminus S_t)$.

    Then we are free to condition on $S_t$ instead of $S_k$ and $w$ looks like
    \[
        N_{t+1} \confpath A_k \mid S_{t}
    \]
    But this immediately contradicts memorylessness (Assumption
    \ref{assump:memorylessness}).

  \item $w$ contains some vertex $L \in S_t \setminus S_k$.

    By Assumption \ref{assump:nested-states} and the assumption that
    the state contains no previous decisions, we know
    \[L \in S_t \setminus S_k
      \subseteq N_{k+1} \cup \dots \cup N_t
    \]
    so $L \nordpath A_k \cup S_k$. 
    and $w$ must look like
    \[
        N_{t+1} \fullsquigfull \ast \fullsquighalf L \confpath A_k
        \mid S_k.
    \]
    By choosing the right-most such $L$, we obtain a simple walk in
    \[
        N_{k+1} \cup \dots \cup N_t \confpath A_k \mid S_k,
    \]
    which contradicts the induction hypothesis.

    \item $w$ does not contain any vertex in $S_t \setminus S_k$, but
      does contain some vertex $M \in S_k \setminus S_t$.

    Since $M \in S_k$, $M$ must be a collider (otherwise $w$ is
    blocked by $S_k$). So $w$ looks like
    \[
        N_{t+1} \confpath M \confpath A_k \mid S_k.
    \]
    By taking $M$ to be the left-most such vertex, this implies the
    existence of a simple subwalk $w'$ that looks like
    \[
        N_{t+1} \confpath S_k \setminus S_t \mid S_{k},
    \]
    that does not contain any vertex in $(S_t \setminus S_k) \cup (S_k
    \setminus S_t)$. Thus, we are free to condition on $S_t$, and $w'$
    looks like
    \[
      N_{t+1} \confpath S_k \setminus S_t \mid S_{t}.
    \]
    This contradicts memorylessness (Assumption \ref{assump:memorylessness}).
  \end{enumerate}
\end{proof}





\section{Simulation Details}
Here we outline the details of the dynamic pricing simulation from \Cref{sec:simulation_study}.

\subsection{The graph simulator package}
For the purpose of simulating data from DAGs, we created a simple Python module \texttt{graph\_simulator}\footnote{\url{https://pypi.org/project/graph-simulator/}} implemented in C++. Dependencies between vertices in the graph are specified in \texttt{YAML} file. For example, the simple graph $X \rightarrow Y$ is specified by the YAML code below:
\begin{tcolorbox}[colframe=white, colback=white, boxrule=0mm, sharp corners=south]
    \begin{lstlisting}[basicstyle=\bfseries\ttfamily, frame=none, breaklines=true, keywordstyle=\color{blue}]
    X:
        kernel:
          type: "uniform"
          sample_domain: [0, 1]
          terms: null
        dependencies: null
    Y:
        kernel:
          type: "linear"
          sample_domain: [1, 1.5]
          noise: 0.1
          terms:
            - intercept: 1
              indicators: null
              value: 0.5
              variable:
                1: "X"
        dependencies:
        1: ["X"]
    \end{lstlisting}
\end{tcolorbox}
Here $X$ does not have any parents as specified by ``\texttt{dependencies:null}'', and it is uniformly distributed with support $\{0, 1\}$. This means that at every time-point, we sample $X_t$ from $\textup{Unif}(\{0, 1\})$. In contrast $Y_t$ has a parent $X_{t-1}$ (graphically we write $X_{t-1} \rdedge Y_t$) which is specified by \texttt{1:["X"]} -- interpreted as ``X of lag 1''-- under \texttt{dependencies}. In the \texttt{kernel} section, we see that $Y$ depends on its parents linearly (\texttt{type:"linear"}), but takes a random value with probability 0.1 in $\{1, 1.5\}$, as indicated by \texttt{noise:0.1} and \texttt{sample domain:[1, 1.5]}. The linear dependency is specified in the \texttt{terms} subsection. In this case, there is a single term composed of an intercept of value 1 (\texttt{intercept:1}), plus a value of 0.5 (\texttt{value:0.5}) times the values of ``$X$ at lag 1'' (\texttt{1:"X"} under \texttt{variable}). Terms may in addition depend of indicator functions that can render the whole term zero. Thus, in the above example we may write the functional form of $Y_t$ as
\[
    Y_t =
    \begin{cases}
        1 + 0.5 \cdot X_{t-1} ~ \textup{with probability} ~ 0.9 \\
        \textup{Unif}(\{1, 1.5\}) ~ \textup{with probability} ~ 0.1
    \end{cases}.
\]

\subsection{Dynamic pricing setup}
At each time-step $t$ we generate variables in the following order: the latent demand $D_t$, the company estimate $\hat{D}_t$, the realized bookings $B^1_t$ and $B^2_t$, the reward $R_t$, the unobserved competitor price $A^{c, 1}_t$ and $A^{c, 2}_t$, and the initial and updated company prices $A^1_t$, $A^2_t$.

\subsubsection*{Demand}
The latent demand is either uniformly distributed or equal to its previous value depending on a Bernoulli draw
\[
    D_t =
    \begin{cases}
        \sim \textup{Unif}(\{0, 1\})    & \textup{w/p \quad $p_D$} \\
        D_{t-1}                             & \textup{w/p \quad $1 - p_D$} \\
    \end{cases}.
\]
We assume that the company's ability to infer $D_t$ occurs with a fixed accuracy given by $p_{\hat{D}}$, that is $\hat{D}_t$ is equal to $D_t$ with probability $p_{\hat{D}}$ and otherwise it is uniformly distributed over $\{0, 1\}$. In the basic graph $\gG$ we have $p_{\hat{D}} = 1$, but in the ``macroeconomics trend scenario'' (graph $\overset{\textcolor{graphOrange}{\longrightarrow}}{\gG_1}$) we assume $p_{\hat{D}} < 1$.

\subsubsection*{Bookings}
The total capacity on each vessel is assumed to be fixed ($C=6$) so that
\[
    B_{t}^1 + B_{t+1}^2 \leq C.
\]
For each vessel departure the unconstrained bookings $\tilde{B}_{t}^1$ and $\tilde{B}^2_{t+1}$ (i.e. the bookings that would have been realized with no capacity constraint) are assumed to be given by a mixed distribution,
\[
    \tilde{B}_{t}^i  =
    \begin{cases}
        \sim \textup{Poisson}(\lambda^i_{t}) ~ \textup{w/p $1 - p_{B^i}$} \\
        \sim \textup{Unif}(\{0, 1, \dots, C\}) ~ \textup{w/p $1 - p_{B^i}$}
    \end{cases}
\]
where the intensity parameter $\lambda^i_{t}$ is specific to the departure and time
\[
    \lambda^i_{t} = \textup{exp}\left\{\beta^i_0 + (\beta^i_1 + \beta^i_2 D_{t-i}) (A^i_{t-1})^2 + \beta^i_3 \mathds{1}(A^i_{t-1} < A^{c, i}_{t-1}) + \beta^i_4 \mathds{1}(A^i_{t-1} > A^i_{t-2})\right\}.
\]
In the basic scenario $\gG$, we assume that $\beta^i_3 = \beta^i_4 = 0$, so that bookings only depend on the latent demand for the departure $D_{t-i}$ and the price the customer faces $A^i_{t-1}$. In the ``competitor price scenario'' (graph $\overset{\textcolor{graphRed}{\longrightarrow}}{\gG_3}$), we in addition assume that $\beta^1_{3} > 0$, so that in the case the company undercuts the competitors. This boosts the demand by $\beta^1_{3}$. In the ``retrospective price update'' (graph $\overset{\textcolor{graphGreen}{\longrightarrow}}{\gG_2}$) scenario, we assume that $\beta^1_4 > 0$, so that customers react to prices being lower than usual at the time of the price update. The realized bookings are simply truncated in the following way
\[
    B_{t}^1 = \textup{min}\{C, \tilde{B}_{t}^1\}, \quad \textup{and} \quad B_{t+1}^2 = \textup{min}\{C - B_{t}^{1}, \tilde{B}_{t+1}^2\}.
\]

\subsubsection*{Reward}
The intermediate reward is given by
\[
  R_{t+1} = A^2_t B^2_{t+1} + A^1_t B^1_{t+1}.
\]

\subsubsection*{Prices}
For simplicity we assume that the company and competitors may set three prices - ``low'', ``medium'' and ``high''. We treat $A^{c, 1}_t$ and $A^{c, 2}_t$ as an exogenous variables and let them be uniformly distributed.
\[
    A^{c, i}_t \sim \textup{Unif}(\{1, 2, 3\}).
\]
The company prices $A^1_t$, $A^2_t$ are assumed to depend on the previous prices, the observed bookings and the demand estimate through the mixed form
\[
  A^i_t =
    \begin{cases}
        \alpha^i_0 + \alpha^i_1 \hat{D}_t + (\alpha^i_2 + \alpha^i_3 A^1_{t-1}) \mathds{1}\{(B^1_t > \bar{B}^1) \cap (A^1_{t-1} > \bar{P}^1)\} + \alpha^i_4 A^2_{t-1}  & \textup{w/p $p_{A^{c, i}}(1 - p_{A^i})$} \\
        \xi^i_0 + \xi^i_1 A^{c, 1}_t & \textup{w/p $(1 - p_{A^{c, i}})(1 - p_{A^i})$} \\
        \sim \textup{Unif}(\{1, 2, 3\}) &\textup{w/p} ~ p_{A^i}.
    \end{cases}
\]
In the basic graph $\gG$ we have $p_{A^{c, i}} = 1$ and $\alpha_4^i = 0$, so that prices may only rely the demand estimate $\hat{D}_t$ and the previous initial price $A^1_{t-1}$ and the bookings realized at that price $B^1_t$.
For the initial price we assume $\alpha^1_0 = \alpha^1_1 = \alpha^1_2 = 1$ and $\alpha^1_3 = \alpha^1_4 = 0$, that is, if the demand estimate $\hat{D}_t$ is higher, this increases the initial price, and similarly, if a sufficient combination of price and bookings is met ($\mathds{1}\{(B^1_t > \bar{B}^1) \cap (A^1_{t-1} > \bar{P}^1)\}$) for the previous vessel departure, the initial price is set higher. In the ``competitor price scenario'' (graph $\overset{\textcolor{graphRed}{\longrightarrow}}{\gG_3}$) we assume that with some probability $(1 - p_{A^{c, i}})(1 - p_{A^i})$ (i.e. $p_{A^{c, i}} < 1$), the company has access to an accurate estimate of the initial competitor price $A_t^{c, 1}$, in which case, the initial price is set to undercut the competitor price ($\xi^i_0=-1$ and $\xi^i_1 = 1$).
For the price update we assume $\alpha^2_0 = \alpha^2_3 = 1$ and $\alpha^2_1 = \alpha^2_2 = \alpha^2_4 = 0$, so that the price update follows a simple rule of increasing the initial price $A^1_{t-1}$ by 1, if the sufficient combination of price and bookings ($\mathds{1}\{(B^1_t > \bar{B}^1) \cap (A^1_{t-1} > \bar{P}^1)\}$) is met. In the ``retrospective price update scenario'' (graph $\overset{\textcolor{graphGreen}{\longrightarrow}}{\gG_1}$)  we in addition let $\alpha^2_4 = -1$, so that shipping professionals generally try to undercut the previous price update level, so attract more demand.

To see the full parametrization, see \Cref{table:sim_param_values}.
\renewcommand{\arraystretch}{1.1}
\begin{table}
  \centering
  \caption{Simulation parameter values}
  \label{table:sim_param_values}
  \begin{threeparttable}
    \footnotesize
    \begin{tabular}{c c c c c c c c c c c c c c} \toprule
      & \multicolumn{13}{c}{Graph} \\
      & $\gG$ & \multicolumn{4}{c}{$\overset{\textcolor{yellow}{\longrightarrow}}{\gG}$} & \multicolumn{4}{c}{$\overset{\textcolor{green}{\longrightarrow}}{\gG}$} & \multicolumn{4}{c}{$\overset{\textcolor{red}{\longrightarrow}}{\gG}$} \\
      \cmidrule(lr){3-6} \cmidrule(lr){7-10} \cmidrule(lr){11-14}
      Degree & - & 0 & 0.1 & 0.5 & 0.9 & 0 & 1 & 2 & 4 & 0 & 1 & 3 & 5\\ \midrule
      $C$ & 6 & 6 & 6 & 6 & 6 & 6 & 6 & 6 & 6 & 6 & 6 & 6 & 6 \\
      $p_D$ & 1 & 1.0 & 0.9 & 0.5 & 0.1 & 1 & 1 & 1 & 1 & 1 & 1 & 1 & 1 \\
      $p_{\hat{D}}$ & 0.25 & 0.25 & 0.25 & 0.25 & 0.25 & 0.25 & 0.25 & 0.25 & 0.25 & 0.25 & 0.25 & 0.25 & 0.25 \\
      $p_{B^{1}}$ & 0.15 & 0.15 & 0.15 & 0.15 & 0.15 & 0.15 & 0.15 & 0.15 & 0.15 & 0.15 & 0.15 & 0.15 & 0.15 \\
      $\beta^1_0$ & 1 & 1 & 1 & 1 & 1 & 1 & 1 & 1 & 1 & 1 & 1 & 1 & 1 \\
      $\beta^1_1$ & -0.65 & -0.65 & -0.65 & -0.65 & -0.65 & -0.65 & -0.65 & -0.65 & -0.65 & -0.65 & -0.65 & -0.65 & -0.65 \\
      $\beta^1_2$ & 0.5 & 0.5 & 0.5 & 0.5 & 0.5 & 0.5 & 0.5 & 0.5 & 0.5 & 0.5 & 0.5 & 0.5 & 0.5 \\
      $\beta^1_3$ & 0 & 0 & 0 & 0 & 0 & 0 & 0 & 0 & 0 & 0 & 1 & 3 & 5 \\
      $\beta^1_4$ & 0 & 0 & 0 & 0 & 0 & 0 & 0 & 0 & 0 & 0 & 0 & 0 & 0 \\
      $p_{B^{2}}$ & 0.15 & 0.15 & 0.15 & 0.15 & 0.15 & 0.15 & 0.15 & 0.15 & 0.15 & 0.15 & 0.15 & 0.15 & 0.15 \\
      $\beta^2_0$ & 1 & 1 & 1 & 1 & 1 & 1 & 1 & 1 & 1 & 1 & 1 & 1 & 1 \\
      $\beta^2_1$ & -0.65 & -0.65 & -0.65 & -0.65 & -0.65 & -0.65 & -0.65 & -0.65 & -0.65 & -0.65 & -0.65 & -0.65 & -0.65 \\
      $\beta^2_2$ & 0.2 & 0.2 & 0.2 & 0.2 & 0.2 & 0.2 & 0.2 & 0.2 & 0.2 & 0.2 & 0.2 & 0.2 & 0.2 \\
      $\beta^2_3$ & 0 & 0 & 0 & 0 & 0 & 0 & 0 & 0 & 0 & 0 & 0 & 0 & 0 \\
      $\beta^2_4$ & 0 & 0 & 0 & 0 & 0 & 0 & 1 & 2 & 4 & 0 & 0 & 0 & 0 \\
      $p_{A^{1}}$ & 0.15 & 0.15 & 0.15 & 0.15 & 0.15 & 0.15 & 0.15 & 0.15 & 0.15 & 0.15 & 0.15 & 0.15 & 0.15 \\
      $p_{A^{c, 1}}$ & 1 & 1 & 1 & 1 & 1 & 1 & 1 & 1 & 1 & 1 & 0.5 & 0.25 & 0.1 \\
      $\alpha^1_0$ & 1 & 1 & 1 & 1 & 1 & 1 & 1 & 1 & 1 & 1 & 1 & 1 & 1 \\
      $\alpha^1_1$ & 1 & 1 & 1 & 1 & 1 & 1 & 1 & 1 & 1 & 1 & 1 & 1 & 1 \\
      $\alpha^1_2$ & 1 & 1 & 1 & 1 & 1 & 1 & 1 & 1 & 1 & 1 & 1 & 1 & 1 \\
      $\alpha^1_3$ & 0 & 0 & 0 & 0 & 0 & 0 & 0 & 0 & 0 & 0 & 0 & 0 & 0 \\
      $\alpha^1_4$ & 0 & 0 & 0 & 0 & 0 & 0 & 0 & 0 & 0 & 0 & 0 & 0 & 0 \\
      $\xi^1_0$ & -1 & -1 & -1 & -1 & -1 & -1 & -1 & -1 & -1 & -1 & -1 & -1 & -1 \\
      $\xi^1_1$ & 1 & 1 & 1 & 1 & 1 & 1 & 1 & 1 & 1 & 1 & 1 & 1 & 1 \\
      $p_{A^{2}}$ & 0.15 & 0.15 & 0.15 & 0.15 & 0.15 & 0.15 & 0.15 & 0.15 & 0.15 & 0.15 & 0.15 & 0.15 & 0.15 \\
      $p_{A^{c, 2}}$ & 1 & 1 & 1 & 1 & 1 & 1 & 1 & 1 & 1 & 1 & 1 & 1 & 1 \\
      $\alpha^2_0$ & 1 & 1 & 1 & 1 & 1 & 1 & 1 & 1 & 1 & 1 & 1 & 1 & 1 \\
      $\alpha^2_1$ & 0 & 0 & 0 & 0 & 0 & 0 & 0 & 0 & 0 & 0 & 0 & 0 & 0 \\
      $\alpha^2_2$ & 0 & 0 & 0 & 0 & 0 & 0 & 0 & 0 & 0 & 0 & 0 & 0 & 0 \\
      $\alpha^2_3$ & 1 & 1 & 1 & 1 & 1 & 1 & 1 & 1 & 1 & 1 & 1 & 1 & 1 \\
      $\alpha^2_4$ & 0 & 0 & 0 & 0 & 0 & 0 & -1 & -1 & -1 & 0 & 0 & 0 & 0 \\
      $\xi^2_0$ & 0 & 0 & 0 & 0 & 0 & 0 & 0 & 0 & 0 & 0 & 0 & 0 & 0 \\
      $\xi^2_1$ & 0 & 0 & 0 & 0 & 0 & 0 & 0 & 0 & 0 & 0 & 0 & 0 & 0 \\
      \bottomrule
      \end{tabular}
  \end{threeparttable}
\end{table}



\printbibliography

\end{document}